
\def\confversion{0}
\def\ifconf{\ifnum\confversion=1}
\def\ifnotconf{\ifnum\confversion=0}

\documentclass[11 pt, fleqn, reqno]{article}

\def\showauthornotes{0}
\def\showkeys{0}
\def\showdraftbox{0}

\usepackage{xspace,xcolor,enumerate}
\usepackage{amsmath,amssymb}
\usepackage{amsthm}
\usepackage[toc,page]{appendix}
\usepackage{thmtools}
\usepackage{thm-restate}
\usepackage{color,graphicx}
\usepackage{boxedminipage}
\usepackage{makecell}
\usepackage{tabularx}
\usepackage{enumitem}
\usepackage[linesnumbered,ruled,vlined]{algorithm2e}
\ifnum\showkeys=1
\usepackage[color]{showkeys}
\fi

\definecolor{darkred}{rgb}{0.5,0,0}
\definecolor{darkgreen}{rgb}{0,0.35,0}
\definecolor{darkblue}{rgb}{0,0,0.55}

\usepackage[pdfstartview=FitH,pdfpagemode=UseNone,colorlinks,linkcolor=darkblue,filecolor=darkred,citecolor=darkgreen,urlcolor=darkred,pagebackref]{hyperref}

\usepackage[capitalise,nameinlink]{cleveref}
\usepackage[T1]{fontenc}
\usepackage{mathtools,dsfont,bbm}

\setlength{\parskip}{0.05 in}
\setlength{\parindent}{4 ex} 

\ifnum\showauthornotes=1
\newcommand{\Authornote}[2]{{\sf\small\color{red}{[#1: #2]}}}
\newcommand{\Authorcomment}[2]{{\sf \small\color{gray}{[#1: #2]}}}
\newcommand{\Authorfnote}[2]{\footnote{\color{red}{#1: #2}}}
\else
\newcommand{\Authornote}[2]{}
\newcommand{\Authorcomment}[2]{}
\newcommand{\Authorfnote}[2]{}
\fi

\ifnum\showdraftbox=1
\newcommand{\draftbox}{\begin{center}
  \fbox{%
    \begin{minipage}{2in}%
      \begin{center}%
        \begin{Large}%
          \textsc{Working Draft}%
        \end{Large}\\
        Please do not distribute%
      \end{center}%
    \end{minipage}%
  }%
\end{center}
\vspace{0.2cm}}
\else
\newcommand{\draftbox}{}
\fi


\newtheorem{theorem}{Theorem}[section]

\newtheorem{definition}[theorem]{Definition}

\newtheorem{lemma}[theorem]{Lemma}

\newtheorem{corollary}[theorem]{Corollary}
\newtheorem{claim}[theorem]{Claim}
\newtheorem{fact}[theorem]{Fact}

\theoremstyle{remark}
\newtheorem{algo}[theorem]{Algorithm}



\def\FullBox{\hbox{\vrule width 6pt height 6pt depth 0pt}}

\def\qedsketch{\ifmmode\Box\else{\unskip\nobreak\hfil
\penalty50\hskip1em\null\nobreak\hfil$\Box$
\parfillskip=0pt\finalhyphendemerits=0\endgraf}\fi}


\def\to{\rightarrow}
\def\eps{\varepsilon}
\def\epsilon{\varepsilon}

\def\eps{\epsilon}

\def\cal{\mathcal}

\renewcommand{\bar}{\overline} 


\newcommand{\ie}{i.e.,\xspace}
\newcommand{\eg}{e.g.,\xspace}

\newcommand{\mper}{\,.}
\newcommand{\mcom}{\,,}

\newcommand{\R}{{\mathbb R}}

\newcommand{\C}{{\mathbb C}}
\newcommand{\N}{{\mathbb{N}}}

\newcommand{\cA}{\mathcal{A}}
\newcommand{\cB}{\mathcal{B}}
\newcommand{\cC}{\mathcal{C}}
\newcommand{\cH}{\mathcal{H}}
\newcommand{\cD}{\mathcal{D}}

\newcommand{\cS}{\mathcal{S}}


\usepackage{nicefrac}


\newcommand{\abs}[1]{\ensuremath{\left\lvert #1 \right\rvert}}

%
\newcommand{\norm}[1]{\ensuremath{\left\lVert #1 \right\rVert}}

\makeatletter

\def\ProbabilityRender#1#2{
  \@ifnextchar\bgroup%
  {\renderwithdist{#1}{#2}}
   {\singlervrender{#1}{#2}}
}
\def\singlervrender#1#2{%
   \ensuremath{\mathchoice
       {{#1}\left[ #2 \right]}
       {{#1}[ #2 ]}
       {{#1}[ #2 ]}
       {{#1}[ #2 ]}
   }
}
\def\renderwithdist#1#2#3{%
   \@ifnextchar\bgroup
   {\superfancyrender{#1}{#2}{#3}}
   {\ensuremath{\mathchoice
      {\underset{#2}{#1}\left[ #3 \right]}
      {{#1}_{#2}[ #3 ]}
      {{#1}_{#2}[ #3 ]}
      {{#1}_{#2}[ #3 ]}
     }
   }
}
\def\superfancyrender#1#2#3#4#5{
   \ensuremath{\mathchoice
      {\underset{#1}{{#1}}\left#4 #3 \right#5}
      {{#1}_{#2}#4 #3 #5}
      {{#1}_{#2}#4 #3 #5}
      {{#1}_{#2}#4 #3 #5}
   }
}
\makeatother

\newcommand{\conv}[1]{\mathrm{conv}\inparen{#1}}


\newfont{\inhead}{eufm10 scaled\magstep1}

\newcommand{\calD}{{\cal D}}

\newcommand{\calH}{{\cal H}}
\newcommand{\calL}{{\cal L}}

\newcommand{\calP}{{\cal P}}

\newcommand{\poly}{{\mathrm{poly}}}

\DeclareMathOperator\supp{supp}


\DeclareMathOperator{\sym}{\operatorname{Sym}}

\renewcommand{\bar}[1]{\ensuremath{\overline{#1}}}

\newcommand{\1}[1]{\mathds{1}{[#1]}}











\newcommand{\inparen}[1]{\left(#1\right)}             



\newcommand{\ket}[1]{\lvert #1\rangle}
\newcommand{\braket}[2]{\left\langle #1 \,\middle\vert\, #2\right\rangle}
\newcommand{\QMA}{\textup{QMA}}
\newcommand{\BQP}{\textup{BQP}}

\newcommand{\NEXP}{\textup{NEXP}}

\DeclareSymbolFont{extraup}{U}{zavm}{m}{n}
\DeclareMathSymbol{\varheart}{\mathalpha}{extraup}{86}
\DeclareMathSymbol{\vardiamond}{\mathalpha}{extraup}{87}





\DeclareMathOperator{\Sym}{Sym}

\DeclareMathOperator{\Tr}{Tr}

\DeclarePairedDelimiter\set{\lbrace}{\rbrace}

%
%











\usepackage{tikz}





\newcommand{\SEP}{\textup{SEP}}
\newcommand{\papo}{\textup{PAPO}}

\newcommand{\Symm}[2]{\vee^{#1}(\C^{#2})}

\usepackage{tikz}
\usetikzlibrary{shapes.symbols}
\usepackage{float}
\usepackage{graphicx}
\usepackage{pei}




\usepackage[top=1in, bottom=1in, left=1.25in, right=1.25in]{geometry}



\newcommand{\pnote}[1] {\textcolor{olive}{ {\textbf{Pei: #1}}}}

\begin{document}
\sloppy

\title{Dimension Independent Disentanglers from Unentanglement and Applications}

\author{Fernando Granha Jeronimo\thanks{{\tt IAS \& Simons Institute}. {\tt granha@ias.edu}. This material is based on work supported by the National Science Foundation under Grant No. CCF-1900460.} \and Pei Wu\thanks{{\tt Weizmann Institute of Science}. {\tt pei.wu@weizmann.ac.il}. This material is based on work supported by the National Science Foundation under Grant No. CCF-1900460. Part of the work is done when the author was at IAS and the Simons Institute.}}

\date{}

\maketitle
\draftbox
\thispagestyle{empty}

\begin{abstract}    
  Quantum entanglement, a distinctive form of quantum correlation, has become
  a key enabling ingredient in diverse applications in quantum computation, complexity, cryptography, etc.
  However, the presence of unwanted adversarial entanglement also poses
  challenges and even prevents the correct behaviour of many protocols and
  applications.

  In this paper, we explore methods to ``break'' the quantum correlations.
  Specifically, we construct a \emph{dimension-independent} $k$-partite
  disentangler (like) channel from bipartite unentangled input.
  In particular, we show: For every $d,\ell\ge k \in \mathbb{N}^+$, there is an efficient channel
  $\Lambda \colon \mathbb{C}^{d\ell} \otimes \mathbb{C}^{d\ell} \to \mathbb{C}^{dk}$
  such that for every bipartite separable density operator $\rho_1\otimes \rho_2$, the output  $\Lambda(\rho_1\otimes\rho_2)$ is close to a $k$-partite separable state. Concretely, for some distribution $\mu$ on states from $\C^d$,
  $$
    \norm{\Lambda(\rho_1 \otimes \rho_2) - \int \lvert \psi \rangle \langle \psi \rvert^{\otimes k} d\mu(\psi)}_1 \le \tilde O \left(\left(\frac{k^{3}}{\ell}\right)^{1/4}\right).
  $$
  Moreover,  $\Lambda(\lvert \psi \rangle \langle \psi \rvert^{\otimes \ell}\otimes \lvert \psi \rangle \langle \psi \rvert^{\otimes \ell}) = \lvert \psi \rangle \langle \psi \rvert^{\otimes k}$.
  Without the bipartite unentanglement assumption, the above bound is conjectured to be impossible and would imply $\QMA(2)=\QMA$.

  Leveraging multipartite unentanglement ensured by our disentanglers, we achieve the following:
  (i) a new proof that $\QMA(2)$ admits arbitrary gap amplification;
  (ii) a variant of the swap test and product test with
  improved soundness, addressing a major limitation of their original versions.
  More importantly, we demonstrate that unentangled quantum proofs
  of almost general real amplitudes capture $\textup{NEXP}$, thereby greatly relaxing the non-negative amplitudes
  assumption in the recent work of $\QMA^+(2)=\NEXP$ [Jeronimo and Wu, STOC 2023].
  Specifically, our findings show that to capture $\NEXP$, it suffices to have unentangled proofs of the form
  $\lvert \psi \rangle = \sqrt{a} \lvert \psi_+ \rangle + \sqrt{1-a} \lvert \psi_{-} \rangle$ where  $\lvert \psi_+ \rangle$
  has non-negative amplitudes, $\lvert \psi_{-} \rangle$ only has negative amplitudes and
  $\lvert a-(1-a) \rvert \ge 1/\textup{poly}(n)$ with $a \in [0,1]$.
  Additionally, we present a protocol achieving an almost largest possible completeness-soundness gap before obtaining
  $\textup{QMA}^{\mathbb{R}}(k)=\textup{NEXP}$, namely, a $1/\textup{poly}(n)$ additive improvement to the gap results
  in this equality.
\end{abstract}

\newpage

\ifnotconf
\pagenumbering{roman}
\tableofcontents
\clearpage
\fi

\pagenumbering{arabic}
\setcounter{page}{1}

\section{Introduction}

Quantum entanglement is a fundamental form of quantum correlation that
can be stronger than any classical correlation~\cite{EPR35, B64,CHSH69,JNVWY20}.
It plays a crucial role in a myriad of areas such as quantum computing, quantum information,
quantum complexity, quantum cryptography, condensed matter physics, etc~\cite{HHHH09,NC10,Watrous18}.
Hence, comprehending both the capabilities and constraints of quantum entanglement stands
as a crucial research endeavor.
However, entanglement can also pose challenges in numerous applications, such as quantum
key distribution and quantum proof systems~\cite{Renner08,KMY03,OW16,HHJWY17}.
This raises the natural question of designing quantum channels that convert quantum states
into unentangled states. 
For the purpose of applications, such channel, also called \emph{disentangler}, $\Phi: \cH \to \cK\otimes\cK$ can be defined to satisfy two
conditions: (i) for any $\ket\psi\in\cK$, there is preimage $\ket\phi$, such that $\Phi(\phi)=\psi\otimes\psi$;
and (ii) for any density operator $\phi\in \cH$, $\Phi(\phi)$ is close to \emph{separable}. 

The \emph{quantum de Finetti} type theorems~\cite{christandl2007one,KR05,Renner08} provide examples of
disentanglers.
A quantum de Finetti theorem quantifies the closeness of a \emph{permutation-invariant} $\ell$-partite
quantum state, to $k$-partite separable states when all but $k$ subsystems are traced out.
A standard quantum de Finetti theorem reads
\begin{theorem}[Quantum de Finetti~\cite{KR05}]\label{thm:std_quantum_de_finetti}
  For every $d,\ell\ge k \in \mathbb{N}^+$, the channel $\Lambda \colon (\mathbb{C}^{d})^{\otimes \ell} \to (\mathbb{C}^d)^{\otimes k}$
  defined as $\Lambda(\rho) = \Tr_{\ell-k}(1/\ell! \sum_{\pi \in \Sym_\ell} \pi \rho \pi^{\dag})$ satisfies
  \begin{align*}
    \norm{\Lambda(\rho) - \int \ket{\psi}\bra{\psi}^{\otimes k} d\mu}_1 \le \frac{2 k d^2}{\ell}.
  \end{align*}
\end{theorem}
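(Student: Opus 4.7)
My plan is to combine three standard ingredients: a reduction to permutation-invariant pure states supported on the symmetric subspace, the coherent-state resolution of the identity on that subspace, and a measure-and-prepare channel whose output is by construction a mixture of pure product states.

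\textbf{Step 1 (Reduction).} Because $\Lambda$ begins by applying the symmetrization channel, $\Lambda(\rho) = \Tr_{\ell-k}(\bar\rho)$ where $\bar\rho := \frac{1}{\ell!}\sum_\pi \pi \rho \pi^{\dagger}$ is permutation invariant. Convexity of the trace norm in $\rho$ together with a symmetric-subspace purification (adjoining an auxiliary register per site and tracing it out after the final integral is produced) lets me further reduce to the case where $\bar\rho = \ket{\Psi}\bra{\Psi}$ is a pure state lying in the symmetric subspace $\vee^\ell(\C^d)$.

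\textbf{Step 2 (Measure-and-prepare channel).} I will use the identity $\int \ket{\phi}\bra{\phi}^{\otimes m}\,d\phi = D_m^{-1}\Pi^{(m)}_{\mathrm{sym}}$, where $D_m := \binom{d+m-1}{d-1} = \dim \vee^m(\C^d)$ and the integration is against the uniform (Haar) measure on unit vectors $\ket{\phi}\in \C^d$. This produces an informationally-complete continuous POVM $\{E_\phi = D_{\ell-k}\,\ket{\phi}\bra{\phi}^{\otimes(\ell-k)}\,d\phi\}$ on the symmetric subspace of $(\C^d)^{\otimes(\ell-k)}$. Define the channel $M$ that measures the first $\ell-k$ subsystems using $\{E_\phi\}$ and, on outcome $\phi$, outputs $\ket{\phi}\bra{\phi}^{\otimes k}$ on the remaining $k$ subsystems. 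By construction $M(\bar\rho) = \int p(\phi)\,\ket{\phi}\bra{\phi}^{\otimes k}\,d\phi$ is exactly of the form required on the right-hand side of the theorem, so it is enough to bound $\norm{\Tr_{\ell-k}(\bar\rho) - M(\bar\rho)}_1$.

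\textbf{Step 3 (Error estimate).} Writing the (unnormalized) post-measurement vector as $\ket{\chi(\phi)} := \sqrt{D_{\ell-k}}\,(\bra{\phi}^{\otimes(\ell-k)}\otimes I^{\otimes k})\ket{\Psi}\in \vee^k(\C^d)$, one checks that $\Tr_{\ell-k}(\ket{\Psi}\bra{\Psi}) = \int \ket{\chi(\phi)}\bra{\chi(\phi)}\,d\phi$, so the difference $\Tr_{\ell-k}(\bar\rho) - M(\bar\rho)$ is $\int \left(\ket{\chi(\phi)}\bra{\chi(\phi)} - p(\phi)\ket{\phi}\bra{\phi}^{\otimes k}\right)d\phi$. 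The task thus reduces to showing that, on average over $\phi$ weighted by $p(\phi) = \norm{\ket{\chi(\phi)}}^2$, the unit vector $\ket{\chi(\phi)}/\sqrt{p(\phi)}$ is close in norm to $\ket{\phi}^{\otimes k}$. I would prove this by expanding $\ket{\Psi}$ in the type (occupation-number) basis of $\vee^\ell(\C^d)$ and comparing overlaps against $\ket{\phi}^{\otimes \ell}$ versus $\ket{\phi}^{\otimes k}$; the dominant contribution comes from the dimension ratio $D_\ell/D_{\ell-k} - 1 = O(kd/\ell)$, and passing from squared-overlap to trace norm via Cauchy--Schwarz introduces the extra factor of $d$, yielding the claimed $O(kd^2/\ell)$ bound.

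The main obstacle, in my view, is recovering the sharp constant $2$ and the precise $d^2$ exponent. Naive scalar overlap estimates followed by Cauchy--Schwarz typically only give $O(kd^2/\ell)$ with a larger constant, and the sharp form requires an operator-level comparison of $M$ with $\Tr_{\ell-k}$ restricted to $\vee^\ell(\C^d)$, for instance via an exact evaluation of $\Tr\!\left((\Pi^{(\ell-k)}_{\mathrm{sym}}\otimes I^{\otimes k})\,\Pi^{(\ell)}_{\mathrm{sym}}\right)$ together with a careful bound on the non-product component of $\ket{\Psi}$ after a partial coherent-state projection. I would model this calculation on the Christandl--K\"onig--Mitchison--Renner or K\"onig--Renner proof.
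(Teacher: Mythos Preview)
The paper does not actually prove this theorem: it is stated in the introduction as background, with the citation~\cite{KR05}, and no proof is given anywhere in the text. So there is nothing to compare against on the paper's side.

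That said, your outline is essentially the K\"onig--Renner/Christandl--K\"onig--Mitchison--Renner argument you yourself cite at the end, and it is the right skeleton. A few remarks on where the sketch is loose. In Step~1, the purification trick is correct but you should be explicit that the ancilla has the same local dimension $d$ (so that the purified state lives in $\vee^\ell(\C^{d^2})$), since this is exactly where the $d^2$ rather than $d$ enters; your Step~3 attributes the extra $d$ to a Cauchy--Schwarz loss, which is not quite the mechanism. In Step~3, the cleaner route is not an occupation-number expansion but the operator inequality comparing $\Tr_{\ell-k}$ and the measure-and-prepare channel $M$ directly on $\vee^\ell(\C^d)$, using the exact dimension ratio $D_{\ell-k}D_k/D_\ell$; this is what yields the constant $2$ without further slack. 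Your current Step~3 as written would likely give $O(kd^2/\ell)$ with a worse constant, as you already suspect.
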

Note that the error bound scales at least\footnote{If instead of making the state permutation invariant,
we project it onto the symmetric subspace, which is a perfectly valid and efficient operation in the
quantum setting, then the dependence on $d$ in~\cref{thm:std_quantum_de_finetti} improves from $d^2$ to $d$.}
as $d/\ell$, and in this version of the quantum de Finetti theorem, the parameters are known to be essentially
tight.
Consequently, if each subsystem is composed of $n$ qubits, then obtaining
a non-trivial error bound requires at least $\ell \ge d = 2^{n}$
subsystems, making this channel impractical for many applications.
This is conjectured to be essentially the best you can achieve. In particular,
it is conjectured that for any disentangler, the input dimension will be
exponential in the output dimension~\cite{ABDFS08} to achieve that the output is always $\epsilon$ close in trace distance to some separable states for any constant $\epsilon<1$. 

\paragraph{Dimension Independent Disentangler from Unentanglement}
While the original disentangler conjecture remains widely open, in this work,
we show that there is an explicit, efficient (BQP), and \emph{dimension independent}
quantum disentangler for $k$-partite (output) system starting from a bipartite
unentangled system. More precisely, we prove

\begin{restatable}[Disentangler from unentanglement]{theorem}
{TheoChannel}\label{thm:channel}
  Let $d,\ell \ge k \in \mathbb{N}^+$.
  There is an efficient channel $\Lambda \colon (\mathbb{C}^{d})^{\otimes \ell}\otimes (\mathbb{C}^{d})^{\otimes \ell} \to (\mathbb{C}^d)^{\otimes k}$
  such that for any density operators $\rho_1, \rho_2 \in \mathbb{C}^{d\ell}$ there is a distribution $\mu$ on pure states $\ket\psi \in \C^d$ satisfying
  \begin{align*}
    \norm{\Lambda(\rho_1 \otimes \rho_2) - \int \ket{\psi}\bra{\psi}^{\otimes k} d\mu}_1 \le \tilde O\left(\left(\frac{k^3}{\ell}\right)^{1/4}\right).
  \end{align*}
  Furthermore, product states of the form $\rho_1 = \rho_2 = \ket{\psi}\bra{\psi}^{\otimes \ell}$ are
  mapped to $\ket{\psi}\bra{\psi}^{\otimes k}$.
\end{restatable}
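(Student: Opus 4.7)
The plan is to construct $\Lambda$ as a composition of three operations: first, a symmetrization of each register; second, a pairwise consistency test (swap tests) between corresponding subsystems of the two registers; and third, post-selection on the consistency outcome followed by tracing out all but $k$ subsystems. The key idea is to exploit the bipartite unentanglement of the input to let one register certify the internal product structure of the other, thereby sidestepping the dimension dependence of standard de Finetti. Concretely, on input $\rho_1 \otimes \rho_2$, the channel applies independent uniformly random permutations to the $\ell$ subsystems of each register, performs a swap test between the $i$-th subsystem of register $1$ and the $i$-th subsystem of register $2$ for each $i$, and then, conditioned on a sufficiently large fraction of these swap tests passing, outputs $k$ of the $\ell$ subsystems from one register (with an arbitrary separable fixed state output if the test fails).

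To analyze $\Lambda$, I first reduce to pure inputs. By joint convexity of the trace norm and linearity of $\Lambda$, it suffices to prove the bound for $\rho_1 = \lvert \phi_1\rangle\langle\phi_1\rvert$ and $\rho_2 = \lvert \phi_2\rangle\langle\phi_2\rvert$. The random permutation step further allows us to assume (by an averaging argument) that both $\lvert \phi_1\rangle, \lvert \phi_2\rangle$ lie in the symmetric subspace $\mathrm{Sym}^{\ell}(\C^d)$. Each symmetric pure state admits an overcomplete expansion in coherent states $\lvert \psi\rangle^{\otimes \ell}$. The swap test between two pure states $\lvert \alpha\rangle, \lvert \beta\rangle$ passes with probability $(1 + \lvert\langle \alpha \vert \beta\rangle\rvert^{2})/2$, so post-selecting on a large fraction of swap tests passing should force the two registers to approximately agree on a common coherent-state direction.

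From here I expect to invoke a dimension-reduction argument: the post-selection effectively projects the single-subsystem marginal onto a subspace whose dimension is independent of $d$, after which the symmetric-subspace version of the quantum de Finetti theorem---which avoids the $d^{2}$ factor in \cref{thm:std_quantum_de_finetti} by using the symmetric projector rather than the permutation-average---yields an error scaling as $k/\ell_{\mathrm{eff}}$ on the effective dimension. Combining the swap-test concentration error (controlled via Hoeffding-type bounds on the fraction of passing tests) with the residual de Finetti error and optimizing via a two-step Markov-style inequality should produce the claimed $\tilde O\bigl((k^{3}/\ell)^{1/4}\bigr)$ bound; the particular exponents arise from balancing the two sources of error. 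The completeness property $\Lambda(\lvert \psi\rangle\langle \psi\rvert^{\otimes \ell}\otimes \lvert \psi\rangle\langle \psi\rvert^{\otimes \ell}) = \lvert \psi\rangle\langle \psi\rvert^{\otimes k}$ is immediate: such inputs are already permutation-invariant, every swap test passes deterministically, and tracing out the extra $\ell-k$ subsystems yields the stated output.

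The main obstacle I expect is showing that bipartite unentanglement combined with pairwise swap tests genuinely forces the joint state to concentrate on a dimension-independent effective subspace. A subsystem-by-subsystem analysis would still produce bounds scaling with $d$, since each swap test individually cannot distinguish a coherent $\lvert \psi\rangle$ from a mixture spread over an exponentially large $d$-ball. The crucial point is that all $\ell$ swap tests must pass \emph{simultaneously} on an input whose cross-register entanglement is zero, and this simultaneity, together with the symmetry on each side, should rule out adversarial spreading of amplitude across $\C^d$. Formalizing this intuition---presumably via a careful analysis of the bipartite coherent-state decomposition and how swap tests collapse the joint ``spectral measure'' of the two registers onto its diagonal---is where I expect the bulk of the technical work to lie.
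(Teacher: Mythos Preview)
Your construction is close to the paper's---symmetrize each register, swap-test matching subsystems of the two unentangled halves, output $k$ subsystems on success and a fixed product state on failure---and the completeness argument is fine. The gap is in the soundness analysis. You propose that post-selecting on many passing swap tests ``projects the single-subsystem marginal onto a subspace whose dimension is independent of $d$,'' after which a standard symmetric-subspace de Finetti bound on that effective dimension would finish. But even the symmetric-subspace de Finetti has error $\Theta(kd/\ell)$, and there is no off-the-shelf ``effective-dimension'' variant; a coherent-state expansion does not by itself produce low rank, and your ``spectral measure collapsing to its diagonal'' picture is too vague to yield any quantitative, $d$-free bound.

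What the swap tests actually certify is not low effective dimension but high \emph{purity} of the one-body marginal: since the input is separable across the two halves, the swap test on $(\rho^{A_i},\rho^{B_i})$ accepts with probability $\tfrac{1}{2}(1+\Tr(\rho^{A_i}\rho^{B_i})) \le \tfrac{3}{4}+\tfrac{1}{4}\Tr((\rho^{A_i})^2)$, so frequent acceptance forces $\Tr((\rho^{A_i})^2)\ge 1-O(\eta)$. The paper then proves from scratch a \emph{slicing de Finetti} statement: any $\sigma\in\conv{\Symm{k}{d}}$ with $\Tr(\sigma_{A_1}^2)\ge 1-\eta$ is $O(\sqrt{k^3\eta})$-close in trace distance to some $\int\phi^{\otimes k}\,d\mu$. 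This is established via a ``Duplicate Lemma'' (a symmetric state close to $\ket{\phi}^{\cA}\ket{\rho}^{\cB\cC}$ with $\lvert\cA\rvert=\lvert\cB\rvert$ is also close to $\ket{\phi}^{\cA}\ket{\phi}^{\cB}\ket{\gamma}^{\cC}$, with an $8$-fold loss) iterated $\lceil\log k\rceil$ times, which is exactly where the $k^3$ comes from; no standard de Finetti is invoked anywhere. You are missing precisely this bridge from ``one-body marginal nearly pure'' to ``symmetric state nearly $\phi^{\otimes k}$''; without it, $d$ will reappear in whatever bound you try to write down.
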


In contrast to the de Finetti disentangler, our disentangler from unentanglement features
error parameters that are independent of the input dimension entirely! 
Subsequently, we discuss applications of~\cref{thm:channel} in testing product states and
the gap amplification in quantum proof systems, culminating in a near-optimal gap amplification for the $\QMA^+(k)$ class: Any improvement on this gap amplification would imply $\QMA^\R(k)=\NEXP$.\footnote{
We don't want to distract the readers by the issue about quantum states over real or complex numbers.
In many cases, quantum computation over reals captures that over complex numbers. However, to the best of the authors' knowledge, this is unclear in the context of $\QMA(2)$. We have to use $\QMA^\R(k)$ to denote the proof systems where the proofs are guaranteed to have real amplitudes.}
We also anticipate that our tool will find further applications beyond those discussed in
this paper.

\subsection{Super Product Test}

The \emph{product test} was designed to test if a state $\ket{\phi}$ is close to $k$-partite
product state, \ie $\ket{\phi} \approx \ket{\phi_1} \otimes \cdots \otimes \ket{\phi_k}$, given
two copies of $\ket{\phi}$.
This test involves applying a sequence of swap tests to each of the $k$ subsystems of the two
copies $\ket\phi$.
Clearly, if $\ket\phi$ is indeed a $k$-partite product state, all the swap tests accept with certainty.
On the other hand, if $\ket\phi$ is entangled across the $k$ subsystems, some swap test will reject
with a probability that depends on the amount of entanglement.
It can be argued that the product test is optimal for ensuring perfect completeness, i.e., accepting
product states with certainty~\cite{HM13}. 

Despite its utility and elegance, the product test has two limitations.
Firstly, it only provides a guarantee concerning its input $\ket{\phi}\otimes \ket{\phi}$ which 
are destroyed after the test, yielding a single classical bit as output.
Very often in applications, one also needs some extra certified input states $\ket\phi$ to manipulate in subsequent computations after
the test.
Secondly, and  probably more irritatingly, the product test always accepts with some 
constant probability (say $\ge 1/2$) no matter how far $\ket{\phi}$ is from
being $k$-partite product, i.e., it has poor soundness.
These limitations can be resolved if you have more than 2 copies of $\ket\phi$~\cite{Kada_2008, SY23}.
For instance, given $\ell$ copies of $\ket\phi$, then one can adapt the product test to
sequentially apply projections on to symmetric subspace on the first, second, and subsequent
subsystems of all the copies of $\phi$.
Intuitively, this should give us a stronger test whose analysis was left as an open problem in~\cite{HM13}.
Recently, She and Yuen~\cite{SY23} analyzed this higher order version of product test achieving improved soundness.
We restate this higher order product test as relying on some $\ell$ unentangled equal copies of 
$\ket{\psi}$ to deduce a $k$-partite product structure of the input state.
One can require something even stronger on the input to achieve what we call \emph{super product test}.

\begin{restatable}{lemma}{LemSupProd}\label{lem:super_product}
  The super product test on input $\ket{\psi} \otimes (\ket{\phi_1}\ldots\ket{\phi_k})^{\otimes \ell}$ accepts with probability
  \[
  \frac{\ell}{(\ell+1)} \cdot \abs{\braket{\psi}{\phi_1}\ldots\ket{\phi_k}}^2 + \frac{1}{(\ell+1)}\mper
  \]
\end{restatable}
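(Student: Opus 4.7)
The plan is a direct computation of the acceptance probability, treating the super product test as the projection onto the symmetric subspace over the $\ell+1$ ``blocks'' of $(\C^d)^{\otimes k}$ consisting of the block from $\ket{\psi}$ together with the $\ell$ copies of $\ket{\Phi} := \ket{\phi_1}\cdots\ket{\phi_k}$. The acceptance probability is then $\bigl\| \Pi^{\mathrm{sym}}_{\ell+1} \bigl( \ket{\psi} \otimes \ket{\Phi}^{\otimes \ell} \bigr) \bigr\|^2$.

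To evaluate this, I would expand the projector via the standard averaging identity $\Pi^{\mathrm{sym}}_{\ell+1} = \frac{1}{(\ell+1)!} \sum_{\pi \in S_{\ell+1}} U_\pi$. The key observation is that since the last $\ell$ blocks of the input are identical copies of $\ket{\Phi}$, the action of $\pi$ on the input is determined entirely by where $\pi$ sends the first block (the one containing $\ket{\psi}$). Thus the $(\ell+1)!$ permutations partition into $\ell+1$ equivalence classes of size $\ell!$, one per destination $r \in \{1,\ldots,\ell+1\}$ of $\ket{\psi}$, giving
\[
\Pi^{\mathrm{sym}}_{\ell+1} \bigl( \ket{\psi} \otimes \ket{\Phi}^{\otimes \ell} \bigr) \;=\; \frac{1}{\ell+1} \sum_{r=1}^{\ell+1} \ket{\Phi}^{\otimes r-1} \otimes \ket{\psi} \otimes \ket{\Phi}^{\otimes \ell+1-r}.
\]

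Next I would take the squared norm of this vector, splitting the $(\ell+1)^2$ cross terms by whether $r = r'$. The $\ell+1$ diagonal terms each contribute $1$, while each of the $(\ell+1)\ell$ off-diagonal terms contributes $\abs{\braket{\psi}{\Phi}}^2$: in the inner product for $r \ne r'$, exactly two positions receive a factor of $\braket{\psi}{\Phi}$ or its conjugate, and the remaining $\ell-1$ positions each give $\braket{\Phi}{\Phi} = 1$. Summing and dividing by $(\ell+1)^2$ yields precisely $\frac{1}{\ell+1} + \frac{\ell}{\ell+1}\abs{\braket{\psi}{\phi_1\ldots\phi_k}}^2$.

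The argument is entirely elementary, so there is no significant obstacle; the only care required is bookkeeping the permutation equivalence classes and verifying which inner products appear in the off-diagonal terms. As a sanity check, $\ket{\psi} = \ket{\Phi}$ recovers acceptance probability $1$, and $\ket{\psi} \perp \ket{\Phi}$ gives $\frac{1}{\ell+1}$, the probability that a uniformly placed $\ket{\psi}$ among the $\ell+1$ blocks lands on a prescribed position.
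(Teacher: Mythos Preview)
Your proposal is correct and essentially the same as the paper's argument. The paper factors the computation through the super swap test (first proving the formula for input $\ket{\psi}\otimes\ket{\phi}^{\otimes \ell}$, then observing that the super product test is exactly super swap with $\ket{\phi}=\ket{\phi_1}\cdots\ket{\phi_k}$), and it computes $\bra{v}\Pi\ket{v}$ directly rather than $\|\Pi\ket{v}\|^2$, but the underlying permutation-averaging calculation is identical to yours.
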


This super product test focuses on determining whether a target state $\ket\psi$ is a product state or not.
In addition to the target state, there are $\ell$ copies of an already $k$-partite product state that come to help.
This test is very natural and simple, except it seems to ask too much of its inputs: To compare, the high-order
product test requires some copies of a state whereas \cref{lem:super_product} requires some copies of an already
$k$-partite product state of the form $\ket{\phi_1}\otimes \cdots \otimes \ket{\phi_k}$.
We claim the super product test is not really asking for too much because our disentangler channel effectively
``amplifies'' the number of unentangled systems.
In particular, we can rely on just two unentangled proofs to enforce a state close to 
$(\ket{\phi_1}\otimes \cdots \otimes \ket{\phi_k})^{\otimes \ell}$ by~\cref{thm:channel}. 
For simplicity, consider $k$ unentangled pairs of untangled proofs where the $i$\textsuperscript{th} pair applied \cref{thm:channel} yields $\ket{\phi_i}^{\otimes \ell}$. 
Then run the super product test on a target state $\ket{\psi}$ and the $\ell$ copies
of already product states from our disentangler.
Furthermore, note that it is very cheap to instead enforce a state close to
$(\ket{\phi_1}\otimes \cdots \otimes \ket{\phi_k})^{\otimes 2\ell}$, allowing us
to reserve the extra $\ell$ copies of $\ket{\phi_1}\otimes \cdots \otimes \ket{\phi_k}$
as once the super product test passes, they can be used in any other computations as
a very good proxy of $\ket\psi$.
With this combination, we achieve arbitrarily good soundness without requiring 
more than $2k$ unentangled states\footnote{Naturally, the $2k$ unentangled states
need to get larger in dimension to achieve better soundness.} while obtaining a
guarantee about the output, rather than having just a single classical bit of output. 

The above discussion on how we apply our disentangler is general, which we summarize in~\cref{sec:prop_testing}.

\subsection{A Gap Amplification for $\QMA^+(2)$ up to Criticality}

Next, we turn to the unentangled quantum proofs, the so-called $\QMA(2)$ class~\cite{KMY03}
and its variants.
First, we provide some background on this subject.

The complexity of $\QMA(2)$ was shown to be closely related to a variety
of quantum and classical computational problems, \eg determining if a
mixed state is entangled given its classical description, as well as,
various forms of classical polynomial/tensor optimization
(see~\cite{HM13} for a more comprehensive list).
Despite considerable interest and effort (\eg~\cite{DPS04,ABDFS08,BT09,Beigi10,BCY11,GNN12,SW12,P12,BH13,BH15,HNW17}),
we still only know the trivial complexity bounds $\QMA \subseteq \QMA(2) \subseteq \NEXP$.

Even the fact that $\QMA(2)$ admits strong gap amplification is non-trivial and remained open for
about $10$ years before the seminar work of Harrow and Montanaro~\cite{HM13}.
With \cref{thm:channel}, it is easy to give a new proof of this fact, which we discuss
in~\cref{sec:prop_testing}. 

A variant of $\QMA(2)$, denoted $\QMA^+(2)$, with proofs of nonnegative amplitudes
was introduced by Jeronimo and Wu in~\cite{JW23}.
The goal of this variant was to capture many properties of $\QMA(2)$ while having more
structure in order to obtain a greater understanding.
Indeed, they showed that $\QMA^+(2)=\NEXP$ by designing a $\QMA^+(2)$ protocol for a $\NEXP$-complete problem with a constant gap.
On the other end of their result is the observation that $\QMA^+(2)\subseteq\QMA(2)$
provided that the completeness-soundness gap of $\QMA^+(2)$ is a sufficiently large constant.
This makes $\QMA^+(2)$ an intriguing class to study since either (i) showing that $\QMA^+(2)=\QMA(2)$, via possibly a gap amplification approach for $\QMA^+(2)$,
would characterize the complexity of $\QMA(2)$, or (ii) showing $\QMA^+(2) \ne \QMA(2)$ would
give a better upper bound $\QMA(2) \subsetneq \NEXP$.

By virtue of the unentanglement assumption of $\QMA^+(2)$ and the product test~\cite{HM13},
$\QMA^+(2)$ admits some non-trivial gap amplification.
For example, a gap of $1/\poly(n)$ can be amplified to a constant gap in which the completeness
becomes $1-\exp(-\poly(n))$ and the soundness becomes some constant strictly less than $1$.
Recently, Bassirian, Fefferman and Marwaha~\cite{bassirian2023quantum}, building on~\cite{JW23},
curiously showed that $\QMA^+(1)=\NEXP$ also with a constant gap.\footnote{It is not clear that
their gap can be made as large as the one for $\QMA^+(2)=\NEXP$.}
Since in the large constant gap regime of $\QMA^+(1)$, we have $\QMA^+(1)=\QMA \subseteq \textup{PP}$,
their result rules out the strong gap amplification for $\QMA^+(1)$ unless $\NEXP\subseteq\textup{PP}$.
Moreover, it also suggests that strategies aimed at amplifying the gap for $\QMA^+(2)$ must rely on
the unentanglement assumption.
This is precisely where the tools like the product test or our disentangler become essential.

With our disentangler, we make progresses towards understanding of $\QMA^+(2)$ versus $\QMA(2)$.
In particular, our progresses can be summarized as two aspects with two motivating questions.

\begin{center}
{\it  Motivating question 1. How crucial is the nonnegative amplitudes assumption to obtain $\QMA^+(2)=\NEXP$?}
\end{center}
Regarding our first motivating question, we show that the nonnegative amplitudes assumption can be almost
completely removed by considering unentangled quantum proofs of almost general \emph{real} amplitudes.
More precisely, we show that to capture $\NEXP$ it suffices to have unentangled proofs of the form
$\ket{\psi} = \sqrt{a} \ket{\psi_+} + \sqrt{1-a} \ket{\psi_{-}}$ where
$\ket{\psi_+}$ has nonnegative amplitudes, $\ket{\psi_{-}}$ only has
negative amplitudes and $\abs{a-(1-a)} \ge 1/\poly(n)$ with $a \in [0,1]$.
In words, we require the proofs to have slightly more $\ell_2$-probability mass ($1/\poly(n)$ extra mass) either on
nonnegative or negative amplitudes.
We refer to the quantity $\abs{a-(1-a)}$ as the $\ell_2$-sign bias of $\ket{\psi}$.
We call the associated complexity class $\textup{almost-QMA}^\R(k)$ (formally defined in~\cref{def:almost_qma_k}).
Our main complexity result can be stated as follows.

\begin{restatable}{theorem}{TheoAlmostGenAmp}\label{thm:almost_gen_amp}
  $\NEXP = \textup{almost-QMA}^\R(k)$ with unentangled proofs of $\ell_2$-sign bias of\footnote{The letter $n$ represents the
  input size and $b(n)$ is any polynomial time computable function bounded from below by a polynomial, \ie by $1/n^c$ for
  some constant $c > 0$.} $b(n) \ge \poly(1/n)$ and $k=\poly(1/b(n))$.
\end{restatable}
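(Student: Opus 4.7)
The plan is to prove the containment $\NEXP \subseteq \textup{almost-QMA}^\R(k)$; the reverse containment $\textup{almost-QMA}^\R(k) \subseteq \NEXP$ is standard, following from the proof of $\QMA(k) \subseteq \NEXP$ by enumeration of an $\varepsilon$-net over proof space and estimation of the verifier's acceptance probability.

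For the forward direction, the starting point is the $\QMA^+(2)=\NEXP$ protocol of Jeronimo and Wu \cite{JW23} for a suitable $\NEXP$-complete problem, which enjoys a constant completeness--soundness gap under the assumption that both unentangled proofs have nonneg real amplitudes. The first step is to use \cref{thm:channel} to amplify the number of unentangled registers: two proofs $\rho_1,\rho_2 \in \C^{d\ell}$ with $\ell = \poly(k)$ are fed into the disentangler $\Lambda$, producing a state that is $\tilde O((k^3/\ell)^{1/4})$-close in trace distance to $\int \ket{\psi}\bra{\psi}^{\otimes k} d\mu(\psi)$. By convexity we may argue per pure $\ket\psi$, and by the furthermore clause of \cref{thm:channel} honest provers presenting $\ket\psi^{\otimes \ell}\otimes\ket\psi^{\otimes \ell}$ yield exactly $\ket\psi^{\otimes k}$, on which the verifier's remaining tests act.

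The main technical step is to simulate the nonneg-amplitude requirement of the $\QMA^+$ protocol starting from a real-amplitude state $\ket\psi = \sqrt a\ket{\psi_+}+\sqrt{1-a}\ket{\psi_-}$ with sign bias $|2a-1| \ge b(n)$ (WLOG $a \ge 1/2 + b(n)/2$). The plan is a two-pronged use of the $k$ enforced copies: first, run the super product test of \cref{lem:super_product} on pairs of copies, using additional copies as the product-structure witness, to certify that each copy is close to the same pure state $\ket\psi$ across all registers (this pins down $\ket{\psi_+}$ and $\ket{\psi_-}$ as well-defined across registers). Second, implement a sign-filtering subroutine that uses the many coherent copies of $\ket\psi$ together with the sign bias to produce a state effectively equivalent to $\ket{|\psi|} = \sqrt a \ket{\psi_+} + \sqrt{1-a}\ket{|\psi_-|}$, which has all nonneg amplitudes; the overlap $\braket{|\psi|}{\psi}=2a-1 \ge b(n)$ guarantees that such a filtration succeeds with probability at least $\poly(b(n))$ when combined with amplitude-amplification-style boosting across the $k$ copies. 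One then hands the filtered state to the JW23 verifier. An alternative, equivalent route is to prove that the JW23 acceptance probability is robust, differing from the all-nonneg case by at most a function of $1-b(n)$ that is absorbed by $O(\poly(1/b))$ parallel repetitions enabled by the disentangler.

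The main obstacle is precisely this sign-handling step: when $b(n)$ is only $1/\poly(n)$, squeezing out a $1/\poly(n)$-rejection probability on \textsf{NO} instances while remaining robust against an adversarial real-amplitude prover is delicate, since we sit right on the boundary of what is believed feasible --- the companion theorem shows that any $1/\poly(n)$ additive improvement in this regime already yields the much stronger $\QMA^\R(k) = \NEXP$. The analysis must therefore carefully track the sign bias through each stage --- disentanglement (with its $\tilde O((k^3/\ell)^{1/4})$ trace error), super product testing (with its $1/(\ell+1)$ completeness penalty), and sign filtering --- and calibrate $k = \poly(1/b(n))$ and $\ell = \poly(k)$ so the accumulated error budget fits within the target $1/\poly(n)$ gap.
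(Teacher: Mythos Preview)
Your proposal has a genuine gap at the crucial step. The ``sign-filtering subroutine'' that is supposed to turn $\ket\psi$ into $\ket{|\psi|}$ is not defined, and there is no known way to implement such a map: taking entrywise absolute values of amplitudes is not a linear operation, and amplitude-amplification-style boosting would require a reflection about the unknown target $\ket{|\psi|}$, which you do not have. The ``alternative route'' of showing robustness of the JW23 verifier is closer in spirit, but as stated it is just a restatement of the goal, not an argument.

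The paper's approach is quite different and avoids any attempt to manipulate the sign structure of the state. It first establishes \cref{thm:qma3}: a protocol using \emph{one} nonnegative proof $\ket\psi$ and two general proofs, with soundness exactly $s \le 1/2 + 1/\poly(n)$. For the almost-$\QMA^\R$ result, one simply drops the nonnegativity assumption on $\ket\psi$ and analyzes the acceptance probability directly. Writing $\ket\psi = \sqrt a\ket{\psi_+} + \sqrt{1-a}\ket{\psi_-}$ and letting $\ket{\xi_0},\ket{\xi_1}$ denote the full proof state with $\ket{\psi_+}$, $\ket{\psi_-}$ respectively, the PSD-ness of the accepting measurement $M$ gives $\bra{\xi_0}M\ket{\xi_1}+\bra{\xi_1}M\ket{\xi_0} \le \bra{\xi_0}M\ket{\xi_0}+\bra{\xi_1}M\ket{\xi_1}$, from which
\[
\bra\xi M \ket\xi \le \bigl(1+2\sqrt{a(1-a)}\bigr)s.
\]
Since each $\ket{\xi_i}$ is a valid $\QMA^+(3)$ proof, both diagonal terms are bounded by $s$. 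The sign bias forces $2\sqrt{a(1-a)} \le 1 - \Theta(b(n)^2)$, so the product is at most $(2-\Theta(b(n)^2))(1/2+1/\poly(n)) < 1 - 1/\poly(n)$ for appropriate parameters. The half-gap soundness of \cref{thm:qma3} is exactly what makes this work; your plan bypasses \cref{thm:qma3} entirely and therefore never sets up this calculation.
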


We obtain the above result by investigating the other motivating question: Since the power of $\QMA^+(k)$
ranges from $\NEXP$ to $\QMA(k)$ depending on the gap,

\begin{center}{\it Motivating question 2.
  How much can we amplify the gap of $\QMA^+(k)$?}
\end{center}

We make significant progress addressing this question.
Specifically, we show that a even more relaxed version of $\QMA^+(3)$, featuring
a single proof with nonnegative amplitudes and the other two with general amplitudes,
equals $\NEXP$, with completeness $1-\exp(-\poly(n))$ and soundness $1/2+1/\poly(n)$.
At the first glance, this looks like a ``just so so'' gap amplification.
It is noteworthy that achieving a slightly improved soundness of $1/2-1/\poly(n)$ would
imply $\QMA^\R(3)=\NEXP$.
In particular, if $\QMA^\R(3) \ne \NEXP$, then there is a sharp phase
transition in the complexity around the gap of a half.

\begin{restatable}{theorem}{TheoQMAThree}\label{thm:qma3}
  $\NEXP = \textup{QMA}^+(3)$ with completeness $c=1-\exp(-\poly(n))$
  and soundness $s=1/2+1/\poly(n)$. Furthermore, we can assume a
  particular case of $\textup{QMA}^+(3)$ in which two unentangled
  proofs have arbitrary amplitudes whereas only one unentangled proof has
  nonnegative amplitudes.
\end{restatable}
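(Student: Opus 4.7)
The plan is to build on the $\textup{QMA}^+(2)=\textup{NEXP}$ protocol of [JW23], which uses two nonnegative-amplitude proofs with a constant completeness--soundness gap, and to relax two of the three proofs to arbitrary real amplitudes by combining the disentangler of \cref{thm:channel} with the super product test of \cref{lem:super_product}. The unavoidable ``sign-balance'' barrier explained below is what ultimately forces the gap down to $1/2+1/\poly(n)$.

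The verifier, on input $\ket{\psi_1}\otimes\ket{\psi_2}\otimes\ket{\psi_+}$ with $\ket{\psi_1},\ket{\psi_2}$ of arbitrary real amplitudes and $\ket{\psi_+}$ of nonnegative amplitudes, would proceed in three stages. First, apply $\Lambda$ from \cref{thm:channel} to $\ket{\psi_1}\otimes\ket{\psi_2}$, with $k$ and $\ell$ polynomial in $n$ so that the output is within trace distance $1/\poly(n)$ of $\int \ket{\phi}\bra{\phi}^{\otimes k}\,d\mu(\phi)$; by convexity in $\mu$ it suffices to analyze a fixed pure $\ket\phi$ with real amplitudes. Second, use $\ket{\psi_+}$ as the target and some of the disentangled copies of $\ket\phi^{\otimes k}$ as the reference in the super product test of \cref{lem:super_product}, which certifies $\ket{\psi_+}\approx \pm\ket{\phi}^{\otimes k}$; since $\ket{\psi_+}$ has nonnegative amplitudes, this forces $\ket\phi$, up to a physically irrelevant global sign, to have essentially nonnegative amplitudes. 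Third, feed the remaining copies of $\ket\phi$ into the [JW23] $\textup{QMA}^+(2)$ verifier for the underlying $\textup{NEXP}$-complete instance.

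Completeness follows by having the honest prover send $\ket{\psi_1}=\ket{\psi_2}=\ket{\psi^*}^{\otimes\ell}$ and $\ket{\psi_+}=\ket{\psi^*}^{\otimes k}$ for the [JW23]-optimal nonnegative-amplitude witness $\ket{\psi^*}$: by the ``furthermore'' clause of \cref{thm:channel}, the disentangler acts as the identity on such inputs, and all subsequent tests accept with probability $1-\exp(-\poly(n))$ after standard amplification. For soundness in the NO case, decompose $\ket\phi=\sqrt{a}\ket{\phi_+}+\sqrt{1-a}\ket{\phi_-}$ into its nonnegative and nonpositive parts; the quantity $|\braket{\psi_+}{\phi^{\otimes k}}|^2$ that drives \cref{lem:super_product} is bounded by $\max(a,1-a)\ge 1/2$ in the worst-case adversarial sign allocation, and this ultimately yields acceptance $\le 1/2+1/\poly(n)$ after absorbing the disentangler error, the $\frac{\ell}{\ell+1}$ prefactor of the super product test, and the (post-amplification) [JW23] soundness.

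The main obstacle is precisely this $1/2$-barrier in the soundness analysis: a cheating prover can always concentrate at least half of the $\ell_2$-mass of $\ket\phi$ into the sign pattern matching $\ket{\psi_+}$, so no test that exploits only the nonnegativity of $\ket{\psi_+}$ can push the soundness strictly below $1/2$. Calibrating $k,\ell$ so that the additive errors in both the completeness and soundness bounds collapse to $1/\poly(n)$ is routine once the three-stage structure above is in place; breaking the $1/2$-barrier, by contrast, would require exploiting structure of $\ket{\psi_+}$ beyond nonnegativity and would, per the discussion preceding the theorem, immediately upgrade the result to $\textup{QMA}^{\R}(3)=\textup{NEXP}$.
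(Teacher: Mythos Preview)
Your three-stage architecture (disentangler, super product test against the nonnegative proof, then run the underlying $\QMA^+$ verifier on the disentangled copies) matches the paper, and completeness is fine. The soundness sketch, however, is not a valid argument. Saying that $|\braket{\psi_+}{\phi^{\otimes k}}|^2$ is ``bounded by $\max(a,1-a)\ge 1/2$'' does not upper-bound acceptance: the adversary controls both $\ket{\psi_+}$ and $\ket{\phi}$, and nothing stops her from taking $\ket{\phi}$ fully nonnegative ($a=1$) and $\ket{\psi_+}=\ket{\phi}^{\otimes k}$, so the super product test accepts with probability $1$. Your final paragraph correctly explains why soundness cannot be pushed \emph{below} $1/2$, but that is the opposite direction from what the theorem requires.

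The paper's argument (\cref{lem:sym_simulator}) is a case split on $\epsilon := |\braket{\psi_+}{\phi^{\otimes k}}|^2$. If $\epsilon \le 1/2 + 1/\poly$, the super product test alone bounds acceptance. If $\epsilon > 1/2 + 1/\poly$, the key quantitative step --- which you state only as the qualitative ``essentially nonnegative'' --- is: let $\ket{\phi_{\mathrm{abs}}}=\sum_i|\alpha_i|\,\ket{i}$; since $\ket{\psi_+}$ has nonnegative entries, removing sign cancellations can only help, so $|\braket{\psi_+}{\phi_{\mathrm{abs}}^{\otimes k}}|^2 \ge \epsilon$ as well. The overlap triangle inequality (\cref{fact:overlap-triangle-ineq}) then gives $|\braket{\phi^{\otimes k}}{\phi_{\mathrm{abs}}^{\otimes k}}|^2 \ge 2\epsilon - 1$, hence trace distance at most $2\sqrt{\epsilon(1-\epsilon)}$. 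A single run of the (pre-amplified to soundness $s=\exp(-\poly(n))$) $\QMA^+(k)$ verifier on $\ket{\phi}^{\otimes k}$ therefore accepts with probability at most $2\sqrt{\epsilon(1-\epsilon)} + s$, which is only $\Theta(1/\poly(n))$ below $1$ when $\epsilon$ is near $1/2$; you must run it $\ell=\poly(n)$ times to drive acceptance below $1/2$. Both the entrywise-absolute-value comparison through $\ket{\psi_+}$ and this $\ell$-fold repetition are essential, and neither appears in your soundness analysis.
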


\begin{figure}[ht]
  \captionsetup{width=.8\linewidth}
  \centering
\begin{tikzpicture}[scale=1]  
    \draw[|->, thick] (0,0) -- (10,0);
    \node[left] at (0,0) {$0$};
    \node[right] at (10,0) {$1$};

    \draw[thick] (5,-0.1) -- (5,0.1) node[above] {$\frac{1}{2}$};

    \draw[|<->|, yshift=-0.3cm] (4.75,0) -- (5.25,0);
    \node[below] at (5, -0.3) {$\frac{1}{\text{poly}(n)}$};

    \node[above] at (2.5, 0.1) {$\NEXP$};
    \node[above] at (7.5, 0.1) {$\QMA^{\R}(3)$};
\end{tikzpicture}
\caption{Gap and the complexity regime of the particular version of $\QMA^+(3)$ from~\cref{thm:qma3}.
         A gap below $1/2-1/\poly(n)$ corresponds to $\NEXP$, whereas a gap above $1/2+1/\poly(n)$
         corresponds to $\QMA^\R(3)$, illustrating a sharp phase transition.}
\end{figure}
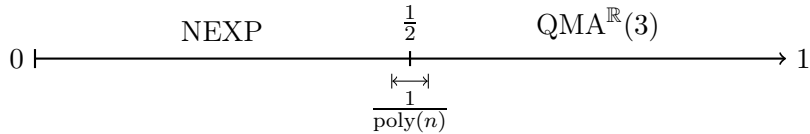

\subsection{Organization}

We introduce notations and review basic concepts and facts in~\cref{sec:prelim}.
In~\cref{sec:papo}, we present an efficient multipartite disentangler (like) channel from bipartite unentanglement.
This construction relies on new de Finetti type properties concerning the interplay between entanglement and symmetry
which we explore in~\cref{sec:de_finetti}.
In~\cref{sec:prop_testing}, we delve into the utility of our disentangler where we elaborate a generic framework in the context of property testing. As one example, we present a new proof that $\QMA(2)$ admits strong gap amplification.
The final two sections are devoted to design new tests and derive the main complexity results in this paper.
In~\cref{sec:super_swap}, we present the super swap and super product test which leverage
unentanglement to achieve much improved soundness than the well-known swap and product tests.
Finally, we provide protocols for $\NEXP$ in~\cref{sec:protocol} leading to the main complexity results of this paper,
\cref{thm:almost_gen_amp} and~\cref{thm:qma3}.

\section{Preliminaries} \label{sec:prelim}

\paragraph{General}
As usual, $\N, \R, \C $ stand for the natural, real, and complex numbers, respectively. 
We adopt the Dirac notation for vectors representing quantum states, e.g., $\ket\psi, \ket\phi$, etc. In this paper, all the vectors of the form $\ket\psi$ are unit vectors. Given any pure state $\ket\psi$, we adopt the convention that its density operator is denoted by the Greek letter without the ``ket'', e.g. $\psi = \ket\psi\bra\psi$. The set of density operators in an arbitrary Hilbert space $\cH$ is denoted $\cD(\cH)$.
A symmetric state $\ket\psi \in (\C^d)^{\otimes n}$ is that invariant under any permutation $\pi\in \sym_n$ where $\sym_n$ is the symmetric group. The action of $\pi$ on $(\C^d)^{\otimes n}$ is  
\[
\pi:|\psi_1, \psi_2, \ldots, \psi_n\rangle
\mapsto |\psi_{\pi(1)}, \psi_{\pi(2)}, \ldots, \psi_{\pi(n)} \rangle.
\]
The \emph{symmetric subspace} is the subspace of $(\C^d)^{\otimes n}$ that is invariant under $\sym_n$, denoted by $\Symm{n}{d}$. Given any set $H\subseteq \cH$ for some Hilbert space $\cH$, $\conv H$ is the convex hull of $H$. 

One other particularly interesting set of states is the \emph{separable} states. We adopt the following notation for the set of density operators regarding separable states,
\begin{align*}
  \SEP(d,r) \coloneqq \conv{\psi_1 \otimes \cdots \otimes \psi_r \mid \ket{\psi_1},\ldots,\ket{\psi_r} \in \C^d}.
\end{align*}
A related notion is that of separable measurement, whose formal definition is given below.
\begin{definition}[Separable measurement]
A measurement $M = (M_0, M_1)$ is separable if in the yes case, the corresponding positive semi-definite matrix $M_1$ can be represented as a conical combination of two operators acting on the first and second parts, i.e.,
for some distribution $\mu$ over the tensor product of positive semi-definite matrices $\alpha$ and $\beta$ on the corresponding space,
\[
M_1 = \int  \alpha \otimes \beta ~\mathrm{d}\mu.
\]
\end{definition}
We record the following well-known fact. An interested reader is referred to~\cite{harrow2013church} for a formal proof.
\begin{fact}[Folklore] \label{fact:swap-test-sep}
The swap test is separable.
\end{fact}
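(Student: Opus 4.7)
The plan is to recall that the accepting measurement operator of the swap test on two registers living in $\cH\otimes\cH$ with $\cH = \C^d$ is $M_1 = \tfrac{1}{2}(I + S)$, where $S$ is the SWAP operator, since the swap test on input $\rho\otimes\sigma$ accepts with probability $\tfrac{1}{2}+\tfrac{1}{2}\Tr(\rho\sigma) = \Tr\bigl(\tfrac{I+S}{2}(\rho\otimes\sigma)\bigr)$. This operator is precisely the projector $\Pi_{\mathrm{sym}}$ onto the symmetric subspace $\Symm{2}{d}$, so the task reduces to exhibiting $\Pi_{\mathrm{sym}}$ as a separable operator.

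First I would invoke the standard integral representation of the symmetric-subspace projector:
$$
\Pi_{\mathrm{sym}} \;=\; \binom{d+1}{2}\int \ket{\psi}\bra{\psi}^{\otimes 2}\, d\mu(\psi),
$$
where $\mu$ is the unitarily-invariant (Haar) probability measure on pure states in $\C^d$. This identity is classical and can be derived in one line from Schur's lemma applied to the $U(d)$-action on $(\C^d)^{\otimes 2}$: the right-hand side commutes with every $U\otimes U$, hence must be a linear combination of $I$ and $S$; matching traces against $\Pi_{\mathrm{sym}}$ and $\Pi_{\mathrm{asym}}$ pins down the coefficient $\binom{d+1}{2}$, which is just $\dim \Symm{2}{d}$.

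Next I would package this identity into the form demanded by the definition of a separable measurement. Setting $\alpha(\psi) = \beta(\psi) = \sqrt{\binom{d+1}{2}}\,\ket{\psi}\bra{\psi}$, each of $\alpha(\psi)$ and $\beta(\psi)$ is positive semi-definite, and
$$
M_1 \;=\; \Pi_{\mathrm{sym}} \;=\; \int \alpha(\psi)\otimes\beta(\psi)\, d\mu(\psi),
$$
which directly exhibits $M_1$ as a conical combination of product PSD operators with respect to the probability measure $\mu$. Hence the swap test is separable.

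The only non-trivial ingredient is the integral formula for $\Pi_{\mathrm{sym}}$; everything else is bookkeeping. No real obstacle is expected: the result is folklore, and if the reader prefers a bare-hands proof one can alternatively write $\ket{\psi}\bra{\psi}^{\otimes 2}$ in the computational basis and compute the matrix elements of the integral directly using standard moments of the Haar measure, reproducing the entries of $\tfrac{1}{2}(I+S)$ up to the factor $\binom{d+1}{2}$.
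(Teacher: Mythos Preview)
Your argument is correct: identifying $M_1$ with the projector $\Pi_{\mathrm{sym}}$ onto $\Symm{2}{d}$ and then invoking the Haar-integral representation $\Pi_{\mathrm{sym}} = \binom{d+1}{2}\int \ket{\psi}\bra{\psi}^{\otimes 2}\, d\mu(\psi)$ exhibits $M_1$ as a conical combination of product PSD operators, exactly as the definition of a separable measurement requires. The paper does not supply its own proof of this fact but instead defers to Harrow's survey \emph{The Church of the Symmetric Subspace}, where precisely this integral formula is used, so your approach coincides with the referenced one.
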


\paragraph{Matrix Analysis} Given any matrix $M\in \C^{n\times n}$, $M^\dagger$ is its  conjugate transpose. Let $\sigma_1 \ge \sigma_2 \ge \ldots \ge \sigma_n$ denote its singular values. Then the trace norm $\|\cdot\|_1$, Frobenius norm $\|\cdot\|_F$ are defined as below
\begin{align*}
    \|M\|_1 = \sum_i \sigma_i, \qquad \|M\|_F = \sqrt{\sum_i \sigma_i^2} .
\end{align*}
The Frobenius norm also equals the square root of sum of squared modulus of each entry, i.e., $\|M\|_F=\sqrt{\sum_{i,j} |M(i,j)|^2}$.

For a positive semi-definite (PSD) matrix $M$, $\|M\|_F = \sqrt{\trace M ^2}$.  
For two PSD matrices, there is one (of many) analogous matrix Cauchy-Schwarz inequality.
\begin{equation}\label{eq:matrix-cauchy-schwarz}
\trace(\sigma\rho)\le \|\sigma\|_F \cdot\|\rho\|_F.
\end{equation}
We adopt the notation $\succeq$ to denote the partial order  that $\sigma \succeq \rho$ if $\sigma-\rho$ is  positive semi-definite.

\paragraph{Distances between Quantum States}
A standard notion of distance for quantum states is that of the \emph{trace distance}. The trace distance between $\psi$ and $\phi$, denoted $\TD(\psi, \phi)$, is 
\begin{align}
\frac{1}{2}\|\psi-\phi\|_1 = \frac{1}{2}\trace\sqrt{( \psi - \phi)^\dagger (\psi-\phi)}.\label{eq:trace-dist}
\end{align}
We also use the notation $\TD(\ket\psi, \ket\phi)$ if we want to emphasize that $\psi$ and $\phi$ are pure states. The following fact provides an alternative definition for trace distance between pure states.
\begin{fact}\label{fact:trace-innerproduct}
  The trace distance between $\ket{\phi}$ and $\ket{\psi}$ is given by $\TD(\ket{\phi},\ket{\psi}) = \sqrt{1-\abs{\braket{\phi}{\psi}}^2}$.
\end{fact}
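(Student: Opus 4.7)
The plan is to reduce everything to a spectral calculation in the two-dimensional subspace spanned by $\ket{\phi}$ and $\ket{\psi}$, since the operator $\phi-\psi$ has rank at most $2$ and is supported on that subspace. First I would observe that $\phi-\psi$ is Hermitian, so its trace norm equals the sum of the absolute values of its eigenvalues. Moreover, since $\trace(\phi-\psi) = \trace(\phi) - \trace(\psi) = 1-1 = 0$, the (at most two) nonzero eigenvalues are of the form $\lambda$ and $-\lambda$ for some $\lambda \ge 0$.

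Next I would pin down $\lambda$ via the Frobenius norm, which for Hermitian operators satisfies $\|M\|_F^2 = \sum_i \lambda_i^2$. Expanding gives
\begin{align*}
\|\phi-\psi\|_F^2 \;=\; \trace(\phi^2) - 2\trace(\phi\psi) + \trace(\psi^2) \;=\; 1 - 2\abs{\braket{\phi}{\psi}}^2 + 1 \;=\; 2\bigl(1-\abs{\braket{\phi}{\psi}}^2\bigr),
\end{align*}
where I used $\phi^2 = \phi$, $\psi^2=\psi$, and $\trace(\phi\psi) = \abs{\braket{\phi}{\psi}}^2$. Matching this against $2\lambda^2$ yields $\lambda = \sqrt{1-\abs{\braket{\phi}{\psi}}^2}$, so
\begin{align*}
\|\phi-\psi\|_1 \;=\; \abs{\lambda} + \abs{-\lambda} \;=\; 2\sqrt{1-\abs{\braket{\phi}{\psi}}^2}.
\end{align*}
Dividing by $2$ per the definition of trace distance in~\eqref{eq:trace-dist} gives the stated identity.

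The one edge case to note is when $\ket{\phi}$ and $\ket{\psi}$ are collinear, i.e.\ $\abs{\braket{\phi}{\psi}}=1$. In that case $\phi=\psi$, both sides equal $0$, and the argument degenerates trivially. No real obstacle arises; the main subtlety is just the traceless plus rank-$2$ observation that forces the eigenvalues into the $\pm\lambda$ pattern, after which the Frobenius norm computation is mechanical.
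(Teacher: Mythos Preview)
Your proof is correct. The paper does not actually prove this fact---it is stated as a standard preliminary without proof---so there is nothing to compare against directly. That said, the traceless rank-at-most-$2$ observation you use is exactly the mechanism the paper itself invokes later in the proof of \cref{fact:overlap-triangle-ineq} (``the second step holds because $\alpha-\beta$ is Hermitian with trace $0$ and rank $0$ or $2$''), so your argument is fully in the spirit of the paper's toolkit.
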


Two states with small trace distance are indistinguishable to quantum protocols.
\begin{fact}\label{fact:trace_norm_acc}
If a quantum protocol accepts a state $\phi$ with probability at most $p$, then it accepts $\psi$  with
  probability at most $p + \TD(\phi,\psi)$.
\end{fact}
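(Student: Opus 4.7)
The plan is to reduce the claim to the standard operational interpretation of the trace distance, namely that for any operator $M$ with $0 \preceq M \preceq I$ one has $\Tr(M(\phi - \psi)) \le \TD(\phi,\psi)$, where $\TD$ was defined via the trace norm in~\eqref{eq:trace-dist}. This characterization is the content of (one direction of) the Holevo--Helstrom theorem and follows directly from writing $\phi - \psi$ in its spectral decomposition, splitting into positive and negative parts $\phi - \psi = P_+ - P_-$, and noting that $\Tr(M(\phi-\psi)) \le \Tr(M P_+) \le \Tr(P_+) = \tfrac{1}{2}\lVert \phi - \psi \rVert_1$.

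The first step is to represent the protocol's acceptance event as a POVM element. A quantum protocol operating on the input state $\rho$ performs some fixed sequence of unitaries (possibly involving ancillas initialized to $\ket{0}\bra{0}$) followed by a measurement of a designated ``accept'' outcome. Composing all of these operations, there exists a Hermitian operator $M$ with $0 \preceq M \preceq I$ (acting on the input register, after tracing out ancillas) such that the acceptance probability on any input state $\rho$ equals $\Tr(M \rho)$. In particular, the hypothesis becomes $\Tr(M \phi) \le p$.

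The second step is simply the triangle-style manipulation
\begin{align*}
  \Tr(M \psi) \;=\; \Tr(M \phi) + \Tr\bigl(M(\psi - \phi)\bigr) \;\le\; p + \TD(\phi,\psi),
\end{align*}
where the final inequality invokes the operational characterization discussed above (applied with $\psi$ and $\phi$ swapped, which is valid since $\TD$ is symmetric). This yields the stated bound.

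There is no real obstacle here: the only content is the operational characterization of trace distance, which is a textbook fact, and the rest is a single line of algebra. The only subtlety worth mentioning is that the POVM element $M$ induced by the protocol acts on the input Hilbert space (after absorbing ancillas into the unitaries), so that applying $M$ to $\phi$ and $\psi$ directly is well-defined without needing to separately account for the ancilla registers.
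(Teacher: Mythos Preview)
Your proof is correct and is the standard argument. The paper itself states this as a \emph{Fact} without proof, so there is no proof in the paper to compare against; your write-up supplies exactly the textbook justification (POVM element $M$ with $0 \preceq M \preceq I$, then $\Tr(M\psi) - \Tr(M\phi) \le \tfrac{1}{2}\lVert \phi - \psi\rVert_1$) that the authors implicitly rely on.
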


Trace distance enjoys the triangle inequality. For pure states, we can actually strengthen it.

\begin{claim} \label{fact:overlap-triangle-ineq}
Given unit vectors $\ket\alpha, \ket\phi, \ket\beta \in \cH$ for some Hilbert space $\cH$. Suppose
    \begin{align*}
        |\langle \alpha \mid \phi \rangle|^2 = 1-\epsilon, \qquad
        |\langle \beta \mid \phi \rangle|^2 = 1-\delta. 
    \end{align*}
Then for any $\epsilon + \delta \le 1$,\footnotemark
    \begin{equation}      
    |\langle \alpha \mid \beta \rangle|^2 \ge 
    (
        \sqrt{(1-\epsilon)(1-\delta)}-\sqrt{\epsilon\delta}
    )^2.\label{eq:overlap-triangle-ineq-tight}
    \end{equation}
In general, we always have 
    \begin{equation}      
    |\langle \alpha \mid \beta \rangle|^2 \ge 1-\epsilon-\delta-2\sqrt{\epsilon\delta}. \label{eq:overlap-triangle-ineq-crude}
    \end{equation}
\end{claim}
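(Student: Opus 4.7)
The plan is to obtain both bounds from the \emph{Fubini--Study distance} $d(u,v) := \arccos \abs{\langle u | v\rangle}$ on the projective Hilbert space, which is a metric (in particular it satisfies the triangle inequality). Set $\theta_1 := \arccos\sqrt{1-\epsilon}$ and $\theta_2 := \arccos\sqrt{1-\delta}$, both in $[0,\pi/2]$, so that $\sin\theta_1 = \sqrt{\epsilon}$, $\sin\theta_2 = \sqrt{\delta}$, and $d(\alpha,\phi)=\theta_1$, $d(\phi,\beta)=\theta_2$.

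First I would observe that the hypothesis $\epsilon + \delta \le 1$ is precisely equivalent to $\theta_1 + \theta_2 \le \pi/2$: indeed, using the cosine addition formula, $\cos(\theta_1+\theta_2) = \sqrt{(1-\epsilon)(1-\delta)} - \sqrt{\epsilon\delta}$, and this is nonnegative iff $(1-\epsilon)(1-\delta)\ge\epsilon\delta$ iff $1-\epsilon-\delta\ge 0$. Applying the triangle inequality $d(\alpha,\beta)\le \theta_1+\theta_2$ together with the fact that cosine is monotonically decreasing on $[0,\pi/2]$ gives
\[
\abs{\langle \alpha | \beta\rangle} = \cos d(\alpha,\beta) \;\ge\; \cos(\theta_1+\theta_2) \;=\; \sqrt{(1-\epsilon)(1-\delta)} - \sqrt{\epsilon\delta},
\]
and squaring both sides of this nonnegative inequality yields \eqref{eq:overlap-triangle-ineq-tight}.

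Next I would derive the crude bound \eqref{eq:overlap-triangle-ineq-crude} by straightforward algebra: expanding the square gives
\[
\bigl(\sqrt{(1-\epsilon)(1-\delta)} - \sqrt{\epsilon\delta}\bigr)^{2} = 1-\epsilon-\delta + 2\epsilon\delta - 2\sqrt{\epsilon\delta(1-\epsilon)(1-\delta)},
\]
and using $\sqrt{(1-\epsilon)(1-\delta)}\le 1$ together with $2\epsilon\delta\ge 0$ bounds this from below by $1-\epsilon-\delta-2\sqrt{\epsilon\delta}$, which handles the $\epsilon+\delta\le 1$ regime. In the complementary regime $\epsilon+\delta>1$, the right-hand side $1-\epsilon-\delta-2\sqrt{\epsilon\delta}$ is already negative, so the inequality holds trivially from $\abs{\langle\alpha|\beta\rangle}^{2}\ge 0$.

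The only nontrivial ingredient is the triangle inequality for $d$. If one wanted a self-contained derivation, the main obstacle would be justifying it directly; the plan would be to absorb global phases so that $\langle\alpha|\phi\rangle$ and $\langle\phi|\beta\rangle$ are nonnegative reals, restrict attention to the real three-dimensional subspace spanned by $\alpha,\phi,\beta$, and invoke the spherical triangle inequality for the standard angular distance on the unit sphere. This is standard trigonometry, so I would simply cite the fact and focus the writeup on the two-line derivation above.
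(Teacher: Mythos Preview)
Your argument is correct and genuinely different from the paper's. The paper proceeds by an explicit coordinate computation: it writes $\ket\alpha$ and $\ket\beta$ in an orthonormal basis $\{\ket\phi,\ket\mu,\ket\rho\}$, computes the trace distance $\TD(\ket\alpha,\ket\beta)$ via the Frobenius norm of the $3\times 3$ difference of density matrices (using that the difference has rank at most $2$ and trace $0$), and then shows the resulting expression is monotone in the free parameter $\eta\in[0,\delta]$ precisely when $\epsilon+\delta\le 1$, so the worst case occurs at $\eta=\delta$. Only at the end does it convert back to the overlap via \cref{fact:trace-innerproduct}.

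Your route is shorter and more conceptual: you recognize the statement as exactly the triangle inequality for the Fubini--Study angle $d(u,v)=\arccos\abs{\braket{u}{v}}$, together with the addition formula for cosine. This buys you a two-line proof and makes the role of the hypothesis $\epsilon+\delta\le 1$ transparent (it is simply $\theta_1+\theta_2\le\pi/2$, needed so that both sides are nonnegative before squaring). The price is that you import the metric property of $d$ as a black box; the paper's calculation is fully self-contained. Your suggested self-contained justification---absorb phases, pass to the real span, and invoke the spherical triangle inequality on $S^2$---is essentially a cleaner repackaging of what the paper's coordinate computation is doing anyway.
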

\footnotetext{When $\epsilon+\delta > 1$, then $\ket\alpha$ and $\ket\beta$ in general can be orthogonal.}
\begin{proof}
Without loss of generality assume that
\begin{align}
    &\ket\alpha = \sqrt{1-\epsilon}\ket\phi + \sqrt\epsilon \ket \mu, 
    \nonumber\\
    &\ket\beta = \sqrt{1-\delta}\ket\phi + \sigma\sqrt\eta\ket\mu + \sqrt{\delta-\eta}\ket\rho,
    \nonumber
\end{align}
where $\ket\mu, \ket\rho, \ket\phi$ are orthogonal, $0\le\eta\le \delta$ and $\sigma\in \C$ is a relative phase. Using the basis $\{\ket\phi, \ket\mu, \ket\rho\}$, we can write down explicitly the density matrix of $\alpha$ and $\beta$:
\begin{align*}
&\alpha = \begin{pmatrix}
    1-\epsilon & \sqrt{\epsilon(1-\epsilon)} & 0 \\
    \sqrt{\epsilon(1-\epsilon)} & \epsilon & 0\\
    0 & 0&0
\end{pmatrix},
\\
&\beta = \begin{pmatrix}
    1-\delta   &   \sigma\sqrt{(1-\delta)\eta}  &  
 \sqrt{(1-\delta)(\delta-\eta)} \\
    \sigma^* \sqrt{(1-\delta)\eta} &   \eta    &   \sigma\sqrt{\eta(\delta-\eta)}\\
    \sqrt{(1-\delta)(\delta-\eta)} &   \sigma^*\sqrt{\eta(\delta-\eta)} & \delta-\eta
\end{pmatrix}.
\end{align*}
Now by definition, 
\begin{align}
    \TD(\ket\alpha, \ket\beta) ^2
    &= \left(\frac{1}{2}\trace\sqrt{(\alpha - \beta)^\dagger (\alpha-\beta)}\right)^2
       \nonumber \\
    &= \frac{1}{2} \|\alpha-\beta\|_F^2
        \nonumber\\
    &= \frac{1}{2}( (\epsilon-\delta)^2 + (\epsilon-\eta)^2 + (\delta-\eta)^2 )
        +\eta(\delta-\eta) 
        \nonumber \\
    &\qquad\qquad + |\sqrt{\epsilon(1-\epsilon)}-\sigma\sqrt{(1-\delta)\eta}|^2 +(1-\delta)(\delta-\eta) 
        \nonumber\\
    &\le \frac{1}{2}( (\epsilon-\delta)^2 + (\epsilon-\eta)^2 + (\delta-\eta)^2 )
        +\eta(\delta-\eta) 
        \nonumber \\
    &\qquad\qquad + (\sqrt{\epsilon(1-\epsilon)}+\sqrt{(1-\delta)\eta})^2 +(1-\delta)(\delta-\eta), \label{eq:trace-triangle-bound}
\end{align}
where the second step holds because $\alpha-\beta$ is Hermitian with trace 0 and rank 0 or 2.
We claim that the RHS of (\ref{eq:trace-triangle-bound}), denote by $f$, is non-decreasing for $\eta\in[0,\delta]$. 
By routine calculation, 
\begin{align*}
    \frac{d f}{d\eta}  = -\epsilon + \sqrt{ \frac{\epsilon}{\eta} (1-\epsilon)(1-\delta)} \ge 0 \quad
    \iff \quad(1-\epsilon)(1-\delta) \ge \eta\epsilon \quad
    \\
    \impliedby 
    \quad
    (1-\epsilon)(1-\delta) \ge \delta\epsilon
    \quad
    \iff \quad 1 \ge \epsilon + \delta.
\end{align*}
As we assumed that $1 \ge \epsilon+\delta$, $df/d\eta$ is always non-negative.
Since the RHS of (\ref{eq:trace-triangle-bound}) is non-decreasing for $\eta\in[0,\delta]$, plug $\eta=\delta$ into the RHS of (\ref{eq:trace-triangle-bound}), we obtain
\begin{align*}
    \TD(\ket\alpha,\ket\beta) ^2
    &\le (\epsilon-\delta)^2 + (\sqrt{\epsilon(1-\epsilon)}+\sqrt{(1-\delta)\delta})^2,
\end{align*}
In view of \cref{fact:trace-innerproduct}, (\ref{eq:overlap-triangle-ineq-tight}) is proved. The ``in general'' part is trivially true when $\epsilon+\delta>1$ and otherwise follows from (\ref{eq:overlap-triangle-ineq-tight}).
\end{proof}

Another widely used distance measure between quantum states is that of \emph{fidelity}. For any density operators $\rho, \sigma$ from the same Hilbert space,
\begin{equation*}
    F(\rho, \sigma) = \left(\trace \sqrt{\sqrt\rho \sigma \sqrt\rho)}\right)^2.
\end{equation*}
For our purposes, we only need the fact that when one of the two density operators corresponds to a pure state, then
\begin{equation*}
    F(\rho,\sigma) = \trace(\rho\sigma).
\end{equation*}
The well-known data processing inequality for fidelity states that applying quantum operation never decreases the fidelity. 
\begin{fact}\label{fact:data-processing}
For any quantum channel (CPTP map) $\Phi$,
    \[F(\Phi(\rho), \Phi(\sigma)) \ge F(\rho, \sigma).\]
\end{fact}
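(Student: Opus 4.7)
The plan is to derive the monotonicity of fidelity under CPTP maps by combining two standard ingredients: Uhlmann's theorem, which gives a variational characterization of fidelity, and the Stinespring dilation, which realizes any quantum channel as an isometry followed by a partial trace. The combination is quite natural because both ingredients interact well with purifications.

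First I would recall Uhlmann's theorem in the form $\sqrt{F(\rho,\sigma)} = \max_{\ket{\phi_\rho},\ket{\phi_\sigma}} \lvert \braket{\phi_\rho}{\phi_\sigma} \rvert$, where the maximum ranges over all purifications $\ket{\phi_\rho},\ket{\phi_\sigma}$ of $\rho,\sigma$ in a common auxiliary space. Fix optimal purifications $\ket{\phi_\rho},\ket{\phi_\sigma}$ achieving this maximum, so that $\lvert \braket{\phi_\rho}{\phi_\sigma} \rvert = \sqrt{F(\rho,\sigma)}$.

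Next I would invoke the Stinespring dilation to write $\Phi(\tau) = \Tr_E\bigl(V \tau V^\dagger\bigr)$ for some isometry $V$ from the system into system-plus-environment $E$. Since $V$ is an isometry on the system, the states $V \otimes I_{\mathrm{aux}} \ket{\phi_\rho}$ and $V \otimes I_{\mathrm{aux}} \ket{\phi_\sigma}$ are purifications of $V \rho V^\dagger$ and $V \sigma V^\dagger$ respectively in the combined space (system $\otimes$ environment $\otimes$ original auxiliary). Tracing out both the environment $E$ and the auxiliary still leaves these as valid purifications of $\Phi(\rho)$ and $\Phi(\sigma)$ in the (environment $\otimes$ auxiliary) subsystem. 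Applying Uhlmann's theorem a second time, now to $\Phi(\rho),\Phi(\sigma)$, gives
\[
\sqrt{F(\Phi(\rho),\Phi(\sigma))} \ge \bigl\lvert \bra{\phi_\rho}(V^\dagger \otimes I)(V \otimes I)\ket{\phi_\sigma} \bigr\rvert = \lvert \braket{\phi_\rho}{\phi_\sigma} \rvert = \sqrt{F(\rho,\sigma)},
\]
where the first equality uses $V^\dagger V = I$. Squaring yields the stated inequality.

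There is no real obstacle here, as this is a textbook argument; the only subtlety worth stating carefully is the bookkeeping of the auxiliary and environment spaces so that the vectors $V\otimes I\ket{\phi_\rho}$ and $V\otimes I\ket{\phi_\sigma}$ are correctly identified as purifications of $\Phi(\rho)$ and $\Phi(\sigma)$. An equally short alternative would be to prove the two building blocks separately, namely that (i) fidelity is preserved under isometries (immediate from the $(\Tr\sqrt{\sqrt\rho \sigma \sqrt\rho})^2$ definition and $V^\dagger V = I$) and (ii) fidelity is non-decreasing under partial trace (via Uhlmann applied only to the partial-trace step), and then chain them through the Stinespring decomposition.
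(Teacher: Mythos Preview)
Your argument is correct and is the standard textbook proof via Uhlmann's theorem and Stinespring dilation. Note, however, that the paper does not actually prove this statement: it is stated as a well-known \emph{Fact} (the data processing inequality for fidelity) without proof, so there is no ``paper's own proof'' to compare against. Your write-up is a perfectly valid way to fill in that omitted justification.
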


\paragraph{Schmidt Decomposition and Partial Trace} 
For $\ket\psi$ describing quantum states over two subsystems $A,B$, e.g., $\ket\psi\in\C^m \otimes \C^n$, there are two sets of orthonormal states $\{\ket{\alpha_1}, \ket{\alpha_2}, \ldots, \ket{\alpha_k}\}\subseteq \C^m$, $\{\ket{\beta_1}, \ket{\beta_2}, \ldots, \ket{\beta_k}\}\subseteq \C^n$ , and positive numbers $\lambda_1\ge \lambda_2\ge \cdots \ge \lambda_k$ for some $k\le \min\{n,m\}$ such that
\begin{equation}
    \ket\psi = \sum_{i=1}^k \sqrt{\lambda_i} \ket{\alpha_i}\ket{\beta_i}, \qquad \text{and } \sum_{i=1}^k \lambda_i = 1.\label{eq:schmidt-decomp}
\end{equation}
The formula (\ref{eq:schmidt-decomp}) is called the \emph{Schmidt decomposition} of $\ket\psi$. The set of $\sqrt{\lambda_i}$ is unique, and is called the \emph{Schmidt coefficient} of $\ket\psi$. 
We  call $\sqrt{\lambda_1}$ the \emph{top Schmidt coefficient} and $\ket{\alpha_1}\ket{\beta_1}$ the \emph{top Schmidt component}. 
Note that the top Schmidt component may not be unique ignoring the global phases, in that case we break  tie arbitrarily. Since Schmidt decomposition follows from singular value decomposition, the (top) Schmidt coefficients can also be formulated as some optimization problem.
\begin{claim}\label{claim:top-schmidt-coeff}
Given any state $\ket\psi\in \cH_1\otimes\cH_2$. Then
\begin{equation*}
    \lambda_1 = \max_{\ket\sigma\in \cH_1, \ket\rho\in\cH_2}|\langle \psi \mid \sigma,\rho \rangle |^2
\end{equation*}
\end{claim}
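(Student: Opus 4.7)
The plan is to reduce this claim directly to the Schmidt decomposition (\ref{eq:schmidt-decomp}) and a single application of Cauchy--Schwarz. I first fix the Schmidt decomposition $\ket\psi = \sum_{i=1}^k \sqrt{\lambda_i}\ket{\alpha_i}\ket{\beta_i}$ with $\lambda_1 \ge \lambda_2 \ge \cdots \ge \lambda_k > 0$, and then study $|\langle \psi \mid \sigma, \rho\rangle|^2$ for arbitrary unit vectors $\ket\sigma \in \cH_1$, $\ket\rho\in\cH_2$.

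The key calculation is to set $a_i = \langle \alpha_i \mid \sigma\rangle$ and $b_i = \langle \beta_i \mid \rho\rangle$, so that
\[
\langle \psi \mid \sigma, \rho\rangle \;=\; \sum_{i=1}^k \sqrt{\lambda_i}\, a_i\, b_i.
\]
Bessel's inequality gives $\sum_i |a_i|^2 \le \|\sigma\|^2 = 1$ and $\sum_i |b_i|^2 \le 1$. Then I would bound
\[
\Bigl|\sum_i \sqrt{\lambda_i}\, a_i\, b_i\Bigr|^2 \;\le\; \Bigl(\sum_i \lambda_i |a_i|^2\Bigr)\Bigl(\sum_i |b_i|^2\Bigr) \;\le\; \lambda_1 \sum_i |a_i|^2 \;\le\; \lambda_1,
\]
using Cauchy--Schwarz in the first step and $\lambda_i \le \lambda_1$ in the second. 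This establishes the ``$\le$'' direction of the claimed identity.

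For the matching lower bound, I would simply choose $\ket\sigma = \ket{\alpha_1}$ and $\ket\rho = \ket{\beta_1}$. Orthonormality of the Schmidt bases gives $\langle \psi \mid \alpha_1, \beta_1\rangle = \sqrt{\lambda_1}$, so the product state $\ket{\alpha_1}\ket{\beta_1}$ attains the upper bound and the maximum is exactly $\lambda_1$. No step here looks delicate; the only minor point to check is that the Cauchy--Schwarz manipulation correctly uses the weights $\lambda_i$ (grouping them with one factor rather than splitting them as $\sqrt{\lambda_i}$ into both), but this is a routine rearrangement rather than a genuine obstacle.
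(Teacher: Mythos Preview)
Your proof is correct. The paper does not actually give a proof of this claim, merely remarking that it follows from the singular value decomposition; your Cauchy--Schwarz argument on the Schmidt coefficients is exactly the standard way to unpack that remark, so there is nothing to compare.
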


Often we want to study the density operator of a quantum state $\ket\psi$ over the subsystem $A$, mathematically described by tracing out $B$, denoted  $\trace_B(\psi)$.
We also abbreviate $\psi_A=\trace_B(\psi)$. Note that fidelity never increases under partial trace due to \cref{fact:data-processing}, and similarly, the trace distance never increases under partial trace:
\begin{fact} \label{fact:trace-dist-partial-trace}
For any quantum states $\psi$ and $\phi$ over systems $A$ and $B$,
\[
    \TD(\psi, \phi) \ge \TD(\psi_A, \phi_A).
\]
\end{fact}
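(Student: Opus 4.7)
My plan is to derive this from the general monotonicity of the trace norm under CPTP maps, applied to the partial trace $\Tr_B$. First I would set $X = \psi - \phi$, which is Hermitian with $\Tr(X) = 0$. By the spectral theorem I can write the Jordan decomposition $X = X_+ - X_-$, with $X_+, X_- \succeq 0$ supported on orthogonal subspaces, and the vanishing trace forces $\Tr(X_+) = \Tr(X_-)$, so $\|X\|_1 = \Tr(X_+) + \Tr(X_-) = 2\,\Tr(X_+)$.

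Next I would push this through $\Tr_B$. Since partial trace is linear, trace preserving, and sends PSD operators to PSD operators (it is completely positive),
\[
\Tr_B(X) \;=\; \Tr_B(X_+) - \Tr_B(X_-)
\]
is a decomposition of $\Tr_B(X)$ into a difference of two PSD operators. Applying the triangle inequality for $\|\cdot\|_1$ together with the identity $\|Y\|_1 = \Tr(Y)$ valid for $Y \succeq 0$ yields
\[
\|\Tr_B(X)\|_1 \;\le\; \|\Tr_B(X_+)\|_1 + \|\Tr_B(X_-)\|_1 \;=\; \Tr(X_+) + \Tr(X_-) \;=\; \|X\|_1,
\]
where the penultimate equality uses that partial trace preserves the total trace. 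Halving both sides recovers $\TD(\psi_A, \phi_A) \le \TD(\psi, \phi)$.

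The only mildly subtle point, which I would flag as the main thing to avoid conflating, is that $\Tr_B(X_+) - \Tr_B(X_-)$ is generally \emph{not} the Jordan decomposition of $\Tr_B(X)$, since orthogonality of supports is not preserved by partial trace. That is precisely why one settles for the triangle inequality rather than attempting an equality on singular values. As an equivalent route, I could instead run the argument via the variational characterization $\TD(\rho,\sigma) = \max_{0 \le M \le I} \Tr(M(\rho-\sigma))$: any measurement operator $M_A$ on $A$ lifts to $M_A \otimes I_B$ on $AB$ with the same expectation against $\psi-\phi$, so the optimal distinguishing bias on $A$ alone is dominated by the optimal bias on the whole system.
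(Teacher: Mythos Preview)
Your proof is correct. The paper does not actually prove this statement: it is recorded as a standard fact (a direct consequence of the data-processing inequality for trace distance under CPTP maps, with partial trace being the relevant channel), so there is no ``paper's own proof'' to compare against. Both routes you outline---the Jordan decomposition plus triangle inequality, and the variational characterization via lifting $M_A$ to $M_A\otimes I_B$---are standard and valid derivations of this monotonicity.
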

We use subscript to emphasize the systems that an operator is describing, e.g., $\psi^{AB}$ simply means that $\psi$ is a state over systems $A$ and $B$.

\paragraph{Quantum Merlin-Arthur Systems}
We now formally define the class $\textup{almost-QMA}^{\R}(k)$, but
first we will need the $\ell_2$-sign bias definition, which, roughly
speaking, quantifies the imbalance in $\ell_2$ mass between the
positive and negative amplitudes parts of a state.

\begin{definition}[$\ell_2$-sign bias]\label{def:ell2_sign_bias}
  Given $\ket{\psi} \in \R^n$, we can uniquely write it as $\ket{\psi}= \sqrt{a} \ket{\psi_+} + \sqrt{1-a} \ket{\psi_{-}}$,
  where $a \in [0,1]$, $\ket{\psi_{+}}$ and $\ket{\psi_{-}}$ are unit vectors with only positive and negative amplitudes, respectively.
  The $\ell_2$-sign bias of $\ket{\psi}$ is defined as $\abs{a-(1-a)}$.
\end{definition}

  Note that a non-negative amplitude state has $\ell_2$-sign bias of $1$ whereas a general
  state has bias at least $0$.
  \textup{Almost}-$\QMA^\R(k)$ will be defined based on $\ell_2$-sign as a natural relaxation of $\QMA^+(k)$ towards the general $\QMA(k)$.

\begin{definition}[$\textup{almost-QMA}^\R(k)$]\label{def:almost_qma_k}
  Let $k \colon \mathbb{N} \to \mathbb{N}$ be a polynomial time computable function.
  A promise problem $\mathcal{L}_{\textup{yes}},\mathcal{L}_{\textup{no}} \subseteq \set{0,1}^*$ is in $\textup{almost-QMA}^\R(k)$ if there exists a $\BQP$
  verifier $V$ such that for every $n \in \mathbb{N}$ and every $x \in \set{0,1}^n$, 
  \begin{itemize}[label={}]
  \item \textbf{\textup{Completeness:}} If $x \in \mathcal{L}_{\textup{yes}}$, then there exist unentangled states $\ket{\psi_1}, \ldots, \ket{\psi_{k(n)}}$, each of $\ell_2$-sign
    bias $1/\poly(n)$ and on at most $\poly(n)$ qubits,
    s.t.\ $\Pr[V(x, \ket{\psi_1} \otimes \cdots \otimes \ket{\psi_{k(n)}})\textup{ accepts}] \ge 9/10$.
\vspace{\parsep}
    \item \textbf{\textup{Soundness:}} If $x \in \mathcal{L}_{\textup{no}}$, then for every unentangled states $\ket{\psi_1}, \ldots, \ket{\psi_{k(n)}}$, each of each of $\ell_2$-sign
    bias $1/\poly(n)$ and on at most $\poly(n)$ qubits,
    we have \ $\Pr[V(x, \ket{\psi_1} \otimes \cdots \otimes \ket{\psi_{k(n)}})\textup{ accepts}] \le 1/10$.
  \end{itemize}
\end{definition}

\section{The Disentangler from Unentanglement}\label{sec:papo}

In this section, we show how to obtain the dimension independent $k$-partite disentangler (like) channel from bi-partite unentanglement establishing~\cref{thm:channel}.
We will actually work mainly with a more refined procedure which we call quantum probably approximately product output (PAPO) procedure, from which the claimed disentangler can be easily constructed. We define PAPO as follows.

\begin{definition}[PAPO]\label{def:papo}
Let $d, \ell, k \in \N$ and $\epsilon,\delta \in [0,1]$.
A $(d,\ell, k, \epsilon,\delta)$-\textup{PAPO} is a quantum procedure $\Lambda$ satisfying:
    \begin{itemize}[label={}]
        \item \textbf{\textup{Completeness:}} $\forall \ket{\psi} \in \C^d$, $\Lambda(\rho_1 \otimes \rho_2) = \ket{\psi}\bra{\psi}^{\otimes k}$ where $\rho_1=\rho_2=\ket{\psi}\bra{\psi}^{\otimes \ell}$,
        \item \textbf{\textup{Soundness:}} $\forall \rho \in \SEP(d^\ell,2)$, 
   with probability at least $1-\delta$, $\Lambda(\rho)$ either rejects or outputs a state $\epsilon$-close in trace distance to a separable state.
\end{itemize}
\end{definition}

The main result in this section is an efficient PAPO procedure with parameter $\ell$ that is independent of
the dimension $d$. 

\begin{theorem}\label{thm:papo}
  For every $d,k \in \N$ and $\epsilon, \delta \in [0,1]$, there is an efficient $(d,\ell,k,\epsilon,\delta)$-$\papo$
  with $\ell = O(k^3\epsilon^{-2}\delta^{-2}\log\delta^{-1})$.
\end{theorem}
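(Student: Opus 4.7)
The plan is to design $\Lambda$ as a symmetrize-then-test procedure that exploits the bipartite unentanglement to certify, via swap tests across the two registers, that the $k$-partite output is close to an exchangeable separable state. Concretely, on input $\rho_1 \otimes \rho_2$, $\Lambda$ first projects each register onto its symmetric subspace $\Symm{\ell}{d}$, aborting on failure. It then reserves the first $k$ subsystems of the (symmetrized) first register for output, and uses the remaining $2(\ell-k)$ subsystems to perform swap tests between matched subsystems across the two registers, organized into $k$ batches of size $m = \Theta(k^2 \epsilon^{-2} \delta^{-2} \log \delta^{-1})$; it rejects if the empirical acceptance rate of any batch falls below a prescribed threshold. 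On the accept branch it outputs the reserved $k$ subsystems. Completeness is immediate: for $\rho_1 = \rho_2 = |\psi\rangle\langle\psi|^{\otimes \ell}$ every stage is deterministic, each swap test accepts with certainty, and the output is exactly $|\psi\rangle\langle\psi|^{\otimes k}$.

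For soundness, I would use convexity of $\SEP(d^\ell, 2)$ and linearity of $\Lambda$ to reduce to a pure-product input $|\sigma\rangle\langle\sigma| \otimes |\tau\rangle\langle\tau|$, and after symmetrization assume $|\sigma\rangle, |\tau\rangle \in \Symm{\ell}{d}$. The key technical step is a dimension-free de Finetti-type lemma, to be established in \cref{sec:de_finetti}: conditional on the cross-register test accepting, the $k$-partite marginal of $|\sigma\rangle$ is $\epsilon$-close in trace distance to some exchangeable separable state $\int |\psi\rangle\langle\psi|^{\otimes k}\, d\mu(\psi)$. Note that such a marginal is automatically exchangeable by the symmetry of $|\sigma\rangle$, so proximity to any separable state will upgrade to proximity to one of this tensor-power form; the real content is that the second unentangled register, probed through the swap tests, concentrates the otherwise dimension-sensitive de Finetti mixture onto near-pure states.

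The main obstacle, and where essentially all of the proof's content resides, is this dimension-free de Finetti step: \cref{thm:std_quantum_de_finetti} gives error $2kd^2/\ell$, which is vacuous for the unbounded $d$ regime we target, so a genuinely new de Finetti-type bound is required in order to leverage the unentanglement assumption. Once that lemma is available, the remainder is quantitative bookkeeping. A Hoeffding bound within each batch controls the per-batch sampling error at $O(\epsilon/k)$ with failure probability $\delta/k$, which consumes $m = \Omega(k^2 \epsilon^{-2} \log(k/\delta))$ subsystems; and a gentle-measurement argument, needed to ensure that conditioning on the (possibly small) accept event distorts the output by at most $\epsilon$ in trace distance, contributes an additional $\delta^{-2}$ overhead. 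Summing $k$ batches of size $m$ plus the $k$ reserved output subsystems gives $\ell = k + km = O(k^3 \epsilon^{-2} \delta^{-2} \log \delta^{-1})$, as claimed in \cref{thm:papo}.
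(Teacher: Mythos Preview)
Your instinct that the main technical content is a dimension-free de Finetti-type lemma is exactly right, and the paper does defer this to \cref{sec:de_finetti}. However, the procedure you propose and its analysis differ from the paper's in ways that leave real gaps.

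\textbf{The procedure.} The paper does \emph{not} fix the output position and test the rest. Instead it samples a uniformly random stopping index $\ell' \in [\ell-k]$, then iterates: at step $i$ it projects the remaining $A$- and $B$-registers onto their symmetric subspaces, applies a swap test to $(A_i,B_i)$, and discards that pair. The output is $A_{\ell'},\ldots,A_{\ell'+k-1}$. The random stopping point is the crux of the analysis, not a cosmetic choice.

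\textbf{The analysis.} The paper calls index $i$ ``$\eta$-bad'' if the swap test at step $i$ (on the state as it stands after all prior operations) accepts with probability at most $1-\eta$. If $\ell'$ is not bad, then $\Tr((\rho^{A_{\ell'}})^2)\ge 1-4\eta$ (from the swap-test acceptance bound $\tfrac{1+\Tr(\rho\sigma)}{2}\le \tfrac34+\tfrac14\Tr\sigma^2$), and the slicing de Finetti theorem (\cref{thm:de-Finnetti-type-1}) gives closeness $O(\sqrt{k^3\eta})$ to a product mixture. Setting $\eta=\epsilon^2/k^3$ yields the $\epsilon$-closeness. The $\delta$-analysis is a two-case argument: either fewer than a $\delta$-fraction of indices are bad (so a random $\ell'$ is good with probability $\ge 1-\delta$), or there are many bad indices, in which case conditioning on $\ell'$ being bad, with probability $\ge 1-\delta/2$ we have already traversed $\Omega(\delta^2\ell)$ bad indices and rejected at one of them with probability $\ge 1-(1-\eta)^{\Omega(\delta^2\ell)}$. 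This is where $\ell=\Theta(\eta^{-1}\delta^{-2}\log\delta^{-1})=\Theta(k^3\epsilon^{-2}\delta^{-2}\log\delta^{-1})$ comes from.

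\textbf{Where your sketch breaks.} First, your Hoeffding step treats the swap tests on subsystems of a single symmetric state as independent Bernoulli trials; they are not, and exchangeability alone does not give you concentration of the empirical mean around the single-copy acceptance probability. The paper's random-stopping argument sidesteps this entirely: each bad index individually contributes a rejection factor $1-\eta$, and the only randomness you need to control is the position of $\ell'$ among the bad indices. Second, your accounting for the $k^3$ factor as ``$k$ batches each controlling error $O(\epsilon/k)$'' is not where the dependence arises; it comes entirely from the $\sqrt{k^3\eta}$ in the slicing de Finetti bound, and there is no batching. Third, the reduction to pure-product inputs via ``convexity and linearity'' is not valid as stated: the soundness event (accept \emph{and} output far from separable) is not linear in the input, so extremality does not automatically suffice. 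The paper instead maintains separability across $A\!:\!B$ throughout (using that the swap test is a separable measurement) and works directly with the mixed state. Finally, your gentle-measurement justification for the $\delta^{-2}$ overhead does not match what is actually needed; in a fixed-output scheme the post-measurement state on $A_1,\ldots,A_k$ after conditioning on the test outcomes is not obviously controlled by the pre-measurement purity, which is another reason the random stopping point is doing essential work.
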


In \cref{proc:papo}, we give a detailed description of our PAPO
procedure. The procedure takes input two unentangled states, each over $\ell$ subsystems. We name the $\ell$ systems $A_1, A_2, \ldots, A_\ell$ for the first state, and $B_1, B_2, \ldots, B_\ell$ for the second state. The PAPO procedure is very simple, which we consider an
advantage for such a fundamental task.  It should be compared with the product test~\cite{HM13}: the PAPO procedure further takes advantage of
symmetric subspace and that projection onto the symmetric subspace is
efficient for quantum algorithms. 

\begin{algorithm}[H]{PAPO}{($\rho^{A_1,A_2,\cdots, A_\ell}\otimes \rho^{B_1, B_2,\cdots, B_\ell} \in \SEP(d^\ell,2)$)}
     \begin{itemize}
      \item Sample $\ell' \in [\ell-k]$ uniformly at random.
      \item For $i=1,\ldots,\ell'$: \begin{enumerate}
           \item Project $\rho^{A_i, \cdots, A_\ell}$ onto the symmetric space.
           \item Project $\rho^{B_i,\cdots, B_\ell}$ onto the symmetric space.
           \item If any of the projections fails:  \emph{Reject}.
           \item If $i\not=\ell'$, SwapTest($\rho^{A_i}, \rho^{B_i}$).
           \item If the SwapTest fails:  \emph{Reject}.
          \end{enumerate}
      \item Output $\rho^{A_{\ell'},\cdots, A_{\ell'+k-1}}$.
     \end{itemize}
     \caption{PAPO}\label{proc:papo}
\end{algorithm}

\subsection{Analysis of PAPO}
The efficiency of the protocol is trivial. Indeed projection onto the symmetric subspace can be implemented efficiently, see for example~\cite{BBDEJM97}, and swap test is a special case of projection onto the symmetric subspace. So in the remainder of the
section, we argue that our procedure satisfies the completeness and
soundness criterion in~\cref{def:papo}. We start with the following
definition of \emph{termination index}.

\begin{definition}[Termination Index]\label{def:term_index}
  We set $i^*$ to be the least element in $[\ell-k]$ such that either $\rho^{A_{i^*},\ldots, A_\ell}$ or  $\rho^{B_{i^*},\ldots, B_\ell}$
  is orthogonal to the symmetric subspace; we set $i^*=\infty$ if no such element exists.
\end{definition}
Here the ``termination'' means absolute termination (rejection) by projection into the symmetric subspace and has nothing to do with a particular execution of \cref{proc:papo}.
Most likely, projecting a general state into the symmetric subspace can success or fail. 
When a state can be successfully projected into the symmetric subspace with nonzero probability, then PAPO  continues to run with nonzero probability. Such case is not counted as absolute termination. \footnote{Note that the swap test has no danger of absolute termination since it is always applied to separable states in \cref{proc:papo} and the swap test has soundness 1/2. Thus in the definition of termination index, we don't worry about the swap test.}

\begin{claim}\label{claim:rho-separable}
  The state $\rho^{A_i,\ldots,A_ \ell, B_i\ldots, B_\ell}$ at the $i$\textsuperscript{th} iteration of the for loop in~\cref{proc:papo} is separable across $\rho^{A_i,\ldots, A_\ell}$ and $ \rho^{B_i,\ldots,B_\ell}$.
\end{claim}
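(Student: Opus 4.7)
The plan is to prove the claim by induction on $i$, maintaining the invariant that the (sub-normalized) state at the start of iteration $i$ is a separable operator across the bipartition $(A_i,\ldots,A_\ell)$ versus $(B_i,\ldots,B_\ell)$, i.e., a conical combination of product operators $\sigma^{A_i,\ldots,A_\ell} \otimes \tau^{B_i,\ldots,B_\ell}$. The base case $i=1$ is immediate from the hypothesis $\rho^{A_1\ldots A_\ell} \otimes \rho^{B_1\ldots B_\ell} \in \SEP(d^\ell,2)$, which is by definition a convex combination of product states across the two $\ell$-partite systems.

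For the inductive step, assume that at the start of iteration $i$ the state may be written (up to a positive scalar) as $\sum_j p_j\, \sigma_j^{A_i\ldots A_\ell} \otimes \tau_j^{B_i\ldots B_\ell}$. I would then verify that each of the operations performed in the $i$-th iteration preserves the product form of every term in the decomposition. Steps 1 and 2 (projections onto the symmetric subspaces of $A_i,\ldots,A_\ell$ and of $B_i,\ldots,B_\ell$) act only on one side of the bipartition; post-selecting on success maps $\sigma_j \otimes \tau_j$ to $(\Pi_{\textup{sym}}^A \sigma_j \Pi_{\textup{sym}}^A) \otimes (\Pi_{\textup{sym}}^B \tau_j \Pi_{\textup{sym}}^B)$, preserving the tensor structure.

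The only operation that potentially mixes the two sides is Step 4, the SwapTest on $A_i$ and $B_i$. But by~\cref{fact:swap-test-sep}, the accept branch of the swap test is a separable measurement: its POVM element can be written as $\Pi_{\textup{sym}}^{A_i B_i} = \int \alpha_\omega \otimes \beta_\omega \, d\mu(\omega)$ with $\alpha_\omega$ acting on $A_i$ and $\beta_\omega$ on $B_i$. Consequently, post-selection on acceptance sends each product term $\sigma_j \otimes \tau_j$ to
\[
\int \bigl((\alpha_\omega \otimes I_{A_{>i}})\, \sigma_j\, (\alpha_\omega^\dagger \otimes I_{A_{>i}})\bigr) \otimes \bigl((\beta_\omega \otimes I_{B_{>i}})\, \tau_j\, (\beta_\omega^\dagger \otimes I_{B_{>i}})\bigr) d\mu(\omega),
\]
which is again a conical combination of product operators across $(A_i,\ldots,A_\ell)$ versus $(B_i,\ldots,B_\ell)$. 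Renormalizing and advancing $i \to i+1$ closes the induction.

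There is no real obstacle: the argument only uses that local CP maps preserve separability (trivial) and that the swap test is a separable measurement, which is precisely~\cref{fact:swap-test-sep}. The only minor bookkeeping is to keep track of the fact that after the swap test on $A_i, B_i$ and advancing the loop index, the bipartition of interest shrinks to $(A_{i+1},\ldots,A_\ell)$ versus $(B_{i+1},\ldots,B_\ell)$, which is automatic from the decomposition above.
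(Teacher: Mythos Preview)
Your proposal is correct and follows essentially the same approach as the paper: both arguments rest on the two facts that the symmetric-subspace projections act locally on each side (hence preserve separability) and that the accept element of the swap test is a separable measurement (\cref{fact:swap-test-sep}). The paper states this in one line without spelling out the induction, whereas you make the inductive structure explicit; one small cosmetic point is that the separable decomposition $M_1=\int \alpha_\omega\otimes\beta_\omega\,d\mu$ gives PSD POVM factors rather than Kraus operators, so the clean way to write your post-selection map is with $\sqrt{\alpha_\omega}\otimes\sqrt{\beta_\omega}$ (or to simply trace out $A_i,B_i$ and use $\Tr_{A_iB_i}[(M_1\otimes I)\rho]$), but this does not affect the conclusion.
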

\begin{proof}
Because the SwapTest is separable across $A$ and $B$ part given it accepts by~\cref{fact:swap-test-sep} and projection into the symmetric subspace for $A$ and $B$ part individually is also separable. 
Therefore $\rho^{A_i\ldots A_\ell B_i\ldots B_\ell}$ is separable across $A$ and $B$ part.
\end{proof}

\begin{definition}[Bad Index]
We say that an index $i \in [\ell]$ is $\eta$-\emph{bad}
\begin{enumerate}
    \item If $i \ge i^*$, (see~\cref{def:term_index})
    \item or if $\textup{SwapTest}(\rho^{A_i}, \rho^{B_i})$ accepts with probability at most $1-\eta$.
\end{enumerate}
\end{definition}

\begin{claim}\label{fact:swap-test}
  $\textup{SwapTest}(\rho, \sigma)$ accepts with probability $\frac{1 + \Tr(\rho\sigma)}{2} \le \frac{3}{4} + \frac{\Tr(\sigma^2)}{4}$.
\end{claim}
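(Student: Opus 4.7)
The plan is to separate this claim into two parts: (i) the exact acceptance probability formula $\frac{1+\Tr(\rho\sigma)}{2}$, and (ii) the upper bound $\frac{3}{4} + \frac{\Tr(\sigma^2)}{4}$. The first part is the standard swap-test identity, which follows from observing that the swap test is exactly the projection onto the symmetric subspace of the two input registers. Concretely, the acceptance projector is $\Pi_{\mathrm{sym}} = \frac{1}{2}(I + \SwapT)$ where $\SwapT\ket{x}\ket{y}=\ket{y}\ket{x}$, so
\[
\Pr[\textup{accept}] = \Tr\bigl(\Pi_{\mathrm{sym}}(\rho\otimes\sigma)\bigr) = \frac{1}{2}\bigl(\Tr(\rho\otimes\sigma) + \Tr(\SwapT(\rho\otimes\sigma))\bigr) = \frac{1+\Tr(\rho\sigma)}{2},
\]
using the well-known identity $\Tr(\SwapT(\rho\otimes\sigma))=\Tr(\rho\sigma)$.

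For the inequality, I would apply the matrix Cauchy--Schwarz inequality \eqref{eq:matrix-cauchy-schwarz} from the preliminaries to the PSD matrices $\rho$ and $\sigma$:
\[
\Tr(\rho\sigma) \le \|\rho\|_F \cdot \|\sigma\|_F = \sqrt{\Tr(\rho^2)} \cdot \sqrt{\Tr(\sigma^2)} \le \sqrt{\Tr(\sigma^2)},
\]
where the last step uses $\Tr(\rho^2) \le 1$ for any density operator $\rho$. Then by AM-GM, $\sqrt{\Tr(\sigma^2)} \le \frac{1+\Tr(\sigma^2)}{2}$, so
\[
\frac{1+\Tr(\rho\sigma)}{2} \le \frac{1 + \sqrt{\Tr(\sigma^2)}}{2} \le \frac{1}{2} + \frac{1+\Tr(\sigma^2)}{4} = \frac{3}{4} + \frac{\Tr(\sigma^2)}{4},
\]
which is exactly the claimed bound.

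There is no real obstacle here; the claim is essentially a routine combination of the swap-test identity, the matrix Cauchy--Schwarz inequality already stated as \eqref{eq:matrix-cauchy-schwarz}, and AM-GM. The only small choice is whether to invoke Cauchy--Schwarz directly or to first bound $\Tr(\rho\sigma)$ via $\Tr(\rho\sigma) \le \|\rho\|_{\mathrm{op}} \Tr(\sigma) \le 1$; the former is preferable since it yields the $\Tr(\sigma^2)$ dependence needed for the bound to be meaningful (and useful later in the analysis of PAPO, where $\Tr(\sigma^2)$ will be controlled by the projection onto the symmetric subspace).
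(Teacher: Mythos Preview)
Your proof is correct and essentially identical to the paper's. The paper also applies Cauchy--Schwarz \eqref{eq:matrix-cauchy-schwarz} followed by AM-GM and the bound $\Tr(\rho^2)\le 1$; the only cosmetic difference is the order---the paper first applies AM-GM to $\sqrt{\Tr(\rho^2)\Tr(\sigma^2)} \le \tfrac{1}{2}(\Tr(\rho^2)+\Tr(\sigma^2))$ and then uses $\Tr(\rho^2)\le 1$, whereas you use $\Tr(\rho^2)\le 1$ first and then AM-GM on $\sqrt{\Tr(\sigma^2)}$.
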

\begin{proof}
Apply (\ref{eq:matrix-cauchy-schwarz}) for the density operators,
\begin{align*}
    \trace(\rho \sigma) \le \sqrt{\trace(\sigma^2) \cdot \trace(\rho^2)} \le \frac{\trace{\sigma^2}+\trace{\rho^2}}{2},
\end{align*}
where the second step uses the AM-GM inequality. Note that $\trace\rho^2\le 1$, we are done.
\end{proof}

One more technical tool that we are going to need is the following, whose proof we defer to the next section.
\begin{theorem}\label{thm:de-Finnetti-type-1}
Given state $\sigma^{A_1\ldots A_k}\in \conv{\Symm{k}{d}}$. 
Then there is some distribution $\mu$ on pure states $\ket\phi\in\C^d$, such that
\begin{align*}
    \left\|
        \sigma - \int \phi^{\otimes k} d\mu
    \right\|_1 \le O\left(\sqrt{k^3(1-\trace{(\sigma_{A_1})^2}) }\right).
\end{align*}
\end{theorem}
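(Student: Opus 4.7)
The plan is to reduce $\sigma$ to a convex combination of pure symmetric states via spectral decomposition, then show each such pure symmetric state is close to a coherent product $\ket{v}^{\otimes k}$, where $\ket{v}$ is the top eigenvector of its one-body marginal, and finally average. Concretely, since $\sigma \in \conv{\Symm{k}{d}}$ is supported on the symmetric subspace, I would spectrally decompose $\sigma = \sum_i p_i \ket{\psi_i}\bra{\psi_i}$ with each $\ket{\psi_i} \in \Symm{k}{d}$. Let $\rho_i$ be the one-body marginal of $\ket{\psi_i}$, let $\lambda_i$ and $\ket{v_i}$ denote its top eigenpair, and set $\epsilon_i := 1 - \Tr(\rho_i^2)$. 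Convexity of $X \mapsto \Tr(X^2)$ gives $\sum_i p_i \Tr(\rho_i^2) \ge \Tr(\sigma_{A_1}^2)$, hence $\sum_i p_i \epsilon_i \le 1 - \Tr(\sigma_{A_1}^2) =: \epsilon$. Moreover, since $\lambda_i \cdot 1 = \lambda_i \sum_j \lambda_{i,j} \ge \sum_j \lambda_{i,j}^2 = \Tr(\rho_i^2)$, we also get $1 - \lambda_i \le \epsilon_i$.

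The heart of the proof is a \emph{symmetric-state polarization} lemma: for every $\ket\psi \in \Symm{k}{d}$ with one-body marginal $\rho$ and every unit vector $\ket{v}$ with $\bra{v}\rho\ket{v} = \lambda$, one has $\abs{\braket{\psi}{v^{\otimes k}}}^2 \ge 1 - k(1-\lambda)$. I would derive this from the telescoping identity
\[
I^{\otimes k} - (\ket{v}\bra{v})^{\otimes k} \;=\; \sum_{i=1}^{k} (\ket{v}\bra{v})^{\otimes(i-1)} \otimes (I - \ket{v}\bra{v}) \otimes I^{\otimes(k-i)} \;\preceq\; \sum_{i=1}^{k} I^{\otimes(i-1)} \otimes (I - \ket{v}\bra{v}) \otimes I^{\otimes(k-i)},
\]
where the equality is a standard telescoping of $I = \ket{v}\bra{v} + (I - \ket{v}\bra{v})$ across the $k$ subsystems, and the operator inequality just drops the inner rank-one projectors (using $\ket{v}\bra{v} \preceq I$ and tensoring with a PSD factor). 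Taking the expectation against $\ket\psi$ and invoking symmetry (each term on the right evaluates to $\Tr((I - \ket{v}\bra{v})\,\rho) = 1 - \lambda$) upper-bounds the right-hand side by $k(1-\lambda)$, giving $\bra\psi(\ket{v}\bra{v})^{\otimes k}\ket\psi \ge 1 - k(1-\lambda)$.

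Applying the lemma with $\lambda = \lambda_i$ yields $\abs{\braket{\psi_i}{v_i^{\otimes k}}}^2 \ge 1 - k\epsilon_i$, hence by \cref{fact:trace-innerproduct}, $\TD(\ket{\psi_i}, \ket{v_i}^{\otimes k}) \le \sqrt{k\epsilon_i}$ and $\|\ket{\psi_i}\bra{\psi_i} - \ket{v_i}\bra{v_i}^{\otimes k}\|_1 \le 2\sqrt{k\epsilon_i}$. Letting $\mu$ be the distribution placing mass $p_i$ on $\ket{v_i}$, the triangle inequality combined with Jensen's inequality for $\sqrt{\cdot}$ yields
\[
\Bigl\|\sigma - \int \phi^{\otimes k}\, d\mu\Bigr\|_1 \;\le\; \sum_i p_i \cdot 2\sqrt{k\epsilon_i} \;\le\; 2\sqrt{k \sum_i p_i \epsilon_i} \;\le\; 2\sqrt{k\,\epsilon},
\]
which comfortably implies the claimed $O(\sqrt{k^3\,\epsilon})$ bound (and in fact improves it by a factor of $k$). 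The main obstacle is the polarization lemma; the key maneuver is the commuting-projector decomposition combined with the single-body symmetry reduction. A minor bookkeeping point is confirming that the spectral eigenvectors of $\sigma$ can be taken in $\Symm{k}{d}$, which is immediate since $\sigma$ is supported there.
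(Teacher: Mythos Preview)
Your proof is correct, and it takes a genuinely different---and sharper---route than the paper. Both arguments share the outer structure: decompose $\sigma$ as a mixture of pure symmetric states, control the average one-body purity via convexity (the paper does the analogous step in \cref{lem:trace-reduced-den-squared}), bound each pure piece by a tensor power, then average with Jensen. The divergence is at the pure-state core lemma. The paper proves \cref{thm:one_vs_many} via the Duplicate Lemma (\cref{lem:dup}): starting from the top Schmidt component, it doubles the number of $\ket\phi$-factors $\lceil \log k\rceil$ times, each step inflating the error by a factor $8$, yielding $1-8k^3\epsilon$. Your polarization lemma instead uses the telescoping identity $I^{\otimes k}-P^{\otimes k}=\sum_i P^{\otimes(i-1)}\otimes(I-P)\otimes I^{\otimes(k-i)}\preceq \sum_i I\otimes\cdots\otimes(I-P)\otimes\cdots\otimes I$ and collapses each term to $1-\lambda$ via the symmetry of the one-body marginal, giving $1-k\epsilon$ in one shot. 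This is both more elementary (no iterated triangle inequalities for fidelity, no case analysis) and quadratically tighter in $k$: your final bound is $O(\sqrt{k\epsilon})$ rather than $O(\sqrt{k^3\epsilon})$. The paper itself flags ``a tight version of the above theorem with linear dependency on $k$'' as an interesting open problem right after \cref{thm:one_vs_many}; your telescoping argument resolves it. The trade-off is that the paper's Duplicate Lemma is more flexible---it is reused to handle arbitrary $\ell$-vs-$(k-\ell)$ cuts in \cref{thm:many-many} and \cref{lem:cut}---whereas your argument is tailored to the $1$-vs-$(k-1)$ cut, which is exactly what \cref{thm:de-Finnetti-type-1} needs.
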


\begin{proof}[Proof of~\cref{thm:papo}]
\textbf{Completeness}: For a desired output of $\ket\psi\bra\psi^{\otimes k}$, we give two unentangled copies of $\ket{\psi}^{\otimes \ell}$ to $\Lambda$ as input.
In this case,~\cref{proc:papo} indeed outputs $\ket\psi\bra\psi^{\otimes k}$ w.p. $1$.
  
\textbf{Soundness}: Let $\rho \in \SEP(d^\ell,2)$ be the input of $\Lambda$. 
Set 
\begin{equation*}
\eta= \epsilon^2 / k^3.
\end{equation*}
Due to \cref{claim:rho-separable}, $\rho^{A_{\ell'}\ldots A_\ell,B_{\ell'}\ldots B_\ell}$ is separable just before the $\ell'$\textsuperscript{th} iteration (assuming successfully reaching this iteration). 
For $\ell'$ that is not a bad index,  after projection onto the symmetric subspace, $\rho^{A_{\ell'} \ldots A_{\ell'+k-1}} \in \conv {\Symm{k}{d}} $.  It follows from \cref{fact:swap-test} that
$\trace_{A_{\ell'}}{(\rho^{A_{\ell'}\ldots A_\ell})^2}\ge 1-4\eta.$ Thus we conclude that if $\ell'$ is not a bad index, then the output (if no rejection)
is $\epsilon$-close in trace distance to a convex combination of product states by~\cref{thm:de-Finnetti-type-1} and our choice of parameter $\eta$. Therefore to prove the theorem, it suffices to bound the probability that~\cref{proc:papo} outputs (not rejects) when $\ell'$ is a bad index.

Next we consider two cases. The first case: If the number of the $\eta$-bad indices among the first $\ell-k$ subsystems are less than $\delta (\ell-k)$,
then with probability at least $1-\delta$, the random index $\ell'$ is not $\eta$-bad. 
Therefore, \cref{def:papo} is satisfied.

The second case: This fraction is larger than $\delta$. 
Now conditioning on the event that $\ell'$ is a bad index, then $\ell'$ is a uniformly random bad index. Therefore, the chance that the set of indices $\{1,2,\ldots, \ell'\}$  contains less than  $\delta/2$
fraction of bad indices is at most $\delta/2$. Thus with probability at least $1-\delta/2,$ we have
seen at least $\delta/2 \cdot \delta (\ell-k) -1$ bad indices in the execution of~\cref{proc:papo} in the first $\ell'$ iterations.
Since for each bad index the probability of not rejecting by the swap test is at most $1-\eta$,
the total probability of not rejecting is at most
\begin{equation}\label{eq:papo-delta}
    (1-\eta)^{\delta^2(\ell-k)-1} = \exp(-\Omega(\eta\delta^2\ell))=
        \exp\left(-\Omega\left(\frac{\ell}{\epsilon^{-2}\delta^{-2} k^3}\right)\right). 
\end{equation}
For 
$\ell = \Omega\left(k^3\epsilon^{-2} \delta^{-2}\log\delta^{-1}\right)$, 
we have $e^{-\eta \delta^2 \ell} \le \delta/2$. In this case, \cref{def:papo} is also satisfied.
\end{proof}  

\subsection{The Disentangler from Unentanglement}
We now construct our disentangler using the PAPO procedure, thereby proving~\cref{thm:channel} (restated below).

\TheoChannel*

\begin{proof}
We set $\epsilon=\delta$, whose exact values will be determined later. Let $\Lambda_0$ be the $(d,\ell,k,\epsilon,\delta)$-$\papo$ procedure guaranteed by~\cref{thm:papo}.
Suppose that we have an input state $\rho \in \SEP(d^\ell,2)$.
The channel $\Lambda$ will be defined as follows. Run the PAPO procedure $\Lambda_0$ on input $\rho$, then
  \begin{enumerate}
    \item If $\Lambda_0(\rho)$ succeeds, $\Lambda$ outputs $\Lambda_0(\rho)$.
    \item Otherwise, $\Lambda$ outputs a fixed product state say $\ket{0}\bra{0}^{\otimes k}$.
  \end{enumerate}
If $\rho = \rho_1 \otimes \rho_2$ with $\rho_1 = \rho_2 = \ket{\psi}\bra{\psi}^{\otimes \ell}$, then $\Lambda$ outputs $\ket{\psi}\bra{\psi}^{\otimes k}$ as desired. If the $\Lambda_0$
rejects, $\Lambda$ outputs a product state. Therefore by the soundness of $\Lambda_0$, firstly, with probability at least $1-\delta$, $\Lambda$ outputs a state $\sigma$ which is $\epsilon$-close to a mixture of product states, i.e., for some distribution $\mu$ on $\calD(\C^d)$,
  \begin{equation*}
      \left\|\sigma-\int _{\ket\psi} \ket\psi\bra\psi ^{\otimes k} d \mu \right\|_1\le \epsilon;
  \end{equation*}
and secondly, with probability $\le\delta$, we output a state $\rho_{\textup{error}}$. 
Overall, we have
  \[
    \Lambda(\rho) = (1-\delta') \sigma + \delta' \rho_{\mathrm{error}}.
  \]
Therefore, 
  \begin{align*}
  \bigg\lVert 
    \Lambda(\rho) - &\int \ket{\psi}\bra{\psi}^{\otimes k}d\mu
  \bigg\rVert _1 
  \\
  &= \norm{(1-\delta') \sigma + \delta' \rho_{\textup{error}} - \int \ket{\psi}\bra{\psi}^{\otimes k}d\mu}_1
  \\
  &\le 
    \norm{\sigma - \int \ket{\psi}\bra{\psi}^{\otimes k}d\mu}_1+
    \norm{-\delta' \sigma + \delta' \rho_{\textup{error}}}_1
  \\
  &\le \epsilon + 2\delta.
  \end{align*}
In view of~\cref{thm:papo}, 
for $\epsilon=\delta$,  
\[
    \left\| 
        \Lambda(\rho) - \int \ket{\psi}\bra{\psi}^{\otimes k}d\mu
    \right\|_1 \le \tilde O\left(
        \left(\frac{k^3}{\ell}\right)^{1/4}\right),
\]
concluding the proof.
\end{proof}

\section{Quantum Slicing de Finetti Theorem}\label{sec:de_finetti}

In this section, we prove \cref{thm:de-Finnetti-type-1}. In spirit, it is a de Finetti type theorem with the contraint that there is little entanglement across some cut. We refer to such type of theorem as the slicing de Finetti theorem.

\subsection{One-versus-Many Slicing de Finetti}
To start, we study the following most basic scenario that a given permutation-invariant pure quantum state from $\Symm{k}{d}$ has a large top Schmidt coefficient over cut between the first and the remaining subsystems. We obtain a dimension independent quantum de Finetti theorem under slicing constraints from first principles.

\begin{theorem}[One-versus-many Slicing de Finetti]\label{thm:one_vs_many}
Let $\ket{\sigma}^{A_1\ldots A_k} \in \vee^k(\C^d)$. If the largest Schmidt coefficient across
the cut $A_1:A_2\cdots A_k$ is at least $\sqrt{1-\epsilon}$, then
  \begin{align*}
    \max_{\ket{\phi} \in \C^d} \abs{\bra{\sigma}^{A_1\ldots A_k} \ket{\phi}^{\otimes k} }^2  \ge 1 - 8k^3\cdot \epsilon.
  \end{align*}
\end{theorem}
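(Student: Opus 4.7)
The plan is to establish the stronger bound $|\langle\sigma\mid\phi^{\otimes k}\rangle|^2 \ge 1-k\epsilon$, which immediately implies the stated $1-8k^3\epsilon$ since $k \le 8k^3$ for $k \ge 1$. The strategy is to exploit the permutation-symmetry of $\ket{\sigma}$ to reduce a $k$-body question to $k$ independent one-body estimates, and then close with a union bound applied to a joint projective measurement.

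First I would fix the candidate product factor $\ket{\phi} \in \C^d$ to be a top eigenvector of the single-body reduced density operator $\rho_{A_1} = \Tr_{A_2\ldots A_k}(\sigma)$. By the correspondence between the Schmidt decomposition and the singular value decomposition, the square of the top Schmidt coefficient across the cut $A_1 : A_2\cdots A_k$ is exactly the top eigenvalue $\lambda_1$ of $\rho_{A_1}$, so the hypothesis gives $\lambda_1 \ge 1-\epsilon$, i.e.\ $\bra{\phi}\rho_{A_1}\ket{\phi} \ge 1-\epsilon$. Next I would invoke the permutation-symmetry: since $\ket{\sigma} \in \vee^k(\C^d)$ is invariant under the SWAP between $A_1$ and $A_i$, the partial-trace identity $\rho_{A_i} = \rho_{A_1}$ holds for every $i$, and hence $\bra{\phi}\rho_{A_i}\ket{\phi} \ge 1-\epsilon$ for all $i \in [k]$.

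Finally, I would consider the joint projective measurement $\{P, I-P\}^{\otimes k}$ on $\ket{\sigma}$, where $P = \ket{\phi}\bra{\phi}$. The marginal probability of getting outcome $P$ on subsystem $i$ equals $\Tr(P\rho_{A_i}) \ge 1-\epsilon$, so by a union bound over the $k$ subsystem-wise failure events, the probability that \emph{every} subsystem yields $P$ is at least $1-k\epsilon$. Since $P^{\otimes k} = \ket{\phi}^{\otimes k}\bra{\phi}^{\otimes k}$, this joint probability equals $\bra{\sigma}P^{\otimes k}\ket{\sigma} = |\langle\sigma\mid\phi^{\otimes k}\rangle|^2$, which yields the claim.

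The argument is elementary, and I do not anticipate a real obstacle; the one point worth double-checking is that the marginal of the joint measurement $\{P, I-P\}^{\otimes k}$ on subsystem $i$ really is $\{P, I-P\}$ applied to $\rho_{A_i}$, which follows from $M_0 + M_1 = I$ together with the definition of the partial trace. The factor $8k^3$ in the theorem statement is quite loose compared to the $k$ obtained this way, perhaps because the authors prefer a proof route that generalizes more directly to the convex-hull statement~\cref{thm:de-Finnetti-type-1}, which is what~\cref{sec:papo} actually invokes.
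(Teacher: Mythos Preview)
Your argument is correct and yields the sharper bound $|\langle\sigma\mid\phi^{\otimes k}\rangle|^2 \ge 1-k\epsilon$. All three steps check out: the top Schmidt coefficient squared is the top eigenvalue of $\rho_{A_1}$; permutation symmetry forces $\rho_{A_i}=\rho_{A_1}$; and the union bound is valid because after the product measurement $\{P,I-P\}^{\otimes k}$ the outcomes form an honest classical distribution (equivalently, the operator inequality $I^{\otimes k}-P^{\otimes k}\preceq\sum_i I^{\otimes(i-1)}\otimes(I-P)\otimes I^{\otimes(k-i)}$ holds, as both sides are diagonal in the joint $P/Q$ eigenbasis and the scalar inequality is the ordinary union bound).

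The paper takes a genuinely different route. It proves a \emph{Duplicate Lemma} (\cref{lem:dup}): if a symmetric $\ket\sigma$ has overlap $\ge 1-\epsilon$ with some $\ket\phi^{\cA}\ket\rho^{\cB\cC}$ where $|\cA|=|\cB|$, then it has overlap $\ge 1-8\epsilon$ with $\ket\phi^{\cA}\ket\phi^{\cB}\ket\gamma^{\cC}$. Starting from the top Schmidt component and iterating $\lceil\log k\rceil$ times doubles the number of ``certified'' $\phi$-factors at each step, at the cost of a factor $8$ in the error per step, giving $8^{\lceil\log k\rceil}\le 8k^3$. Your approach is more direct and quantitatively stronger for this theorem; indeed, the paper explicitly remarks after the proof that obtaining linear dependence on $k$ is ``an interesting problem'', which your argument settles. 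The payoff of the paper's route is that the Duplicate Lemma is reused verbatim in the Cut Lemma (\cref{lem:cut}) underlying the many-versus-many result~\cref{thm:many-many}, where one starts from information about an $\ell$-vs-$(k-\ell)$ cut rather than a $1$-vs-$(k-1)$ cut; your single-site argument does not obviously extract the $(k/\ell)^{O(1)}$ dependence there. That said, for the application actually used in~\cref{sec:papo} (namely~\cref{thm:de-Finnetti-type-1}, which is the $\ell=1$ case), your improvement would propagate and replace $k^3$ by $k$ in the PAPO and disentangler bounds.
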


To prove this theorem, we first establish the following duplicate lemma. It says that when a symmetric state $\ket\sigma$ is close to some product state $\ket\phi\ket\rho$, then you can find a new state close to $\ket\sigma$ that with two $\ket\phi$ and harms the closeness only mildly.
\begin{lemma}[Duplicate Lemma]\label{lem:dup}
Let $\ket{\sigma}\in \Symm{k}{d}$. 
Consider some arbitrary decomposition of $\{A_1,A_2,\ldots, A_k\}=\cA\cup \cB \cup \cC$,
such that $|\cA|=|\cB|$. Suppose $ |\bra\sigma ^{\cA\cB\cC} \ket\phi^{\cA} \ket\rho^{\cB\cC}|^2 \ge 1-\epsilon. $
Then, there is a state $\ket\zeta^{\cA\cB\cC}$ such that $\ket\zeta=\ket\phi^\cA\ket\phi^\cB$ $\ket\gamma^{C}$ for some $\ket\gamma^C$, and
    \[
        |\langle\sigma\mid \zeta\rangle|^2 \ge 1-8\epsilon.
    \]
Furthermore, if $\rho_\cC$ is a pure state, then $\gamma = \rho_\cC$.
\end{lemma}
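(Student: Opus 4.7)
The plan is to leverage the permutation symmetry of $\ket\sigma$ in a two-step triangle-inequality argument, using \cref{fact:overlap-triangle-ineq} twice.

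First, I would introduce the unitary $U$ that swaps the subsystems $\cA$ and $\cB$ (which is well-defined since $|\cA|=|\cB|$). Because $\ket\sigma \in \Symm{k}{d}$, we have $U\ket\sigma = \ket\sigma$, so from the hypothesis $|\bra\sigma \ket\phi^\cA\ket\rho^{\cB\cC}|^2 \ge 1-\epsilon$ I immediately get
\[
|\bra\sigma U(\ket\phi^\cA\ket\rho^{\cB\cC})|^2 \ge 1-\epsilon\mper
\]
Applying \cref{fact:overlap-triangle-ineq} to $\ket\sigma$, $\ket\xi \coloneqq \ket\phi^\cA\ket\rho^{\cB\cC}$, and $\ket{\xi'}\coloneqq U\ket\xi$ yields $|\langle\xi\mid\xi'\rangle|^2 \ge (1-2\epsilon)^2 \ge 1-4\epsilon$.

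Next, expand this last inner product using the Schmidt decomposition $\ket\rho^{\cB\cC}=\sum_i\sqrt{\lambda_i}\ket{u_i}^\cB\ket{v_i}^\cC$. Writing $\ket{\xi'} = \sum_i\sqrt{\lambda_i}\ket{u_i}^\cA\ket\phi^\cB\ket{v_i}^\cC$ and using orthonormality of the $\ket{v_i}$ gives
\[
\langle\xi\mid\xi'\rangle = \sum_i \lambda_i |\langle\phi\mid u_i\rangle|^2 \eqqcolon c^2,
\]
a real nonnegative number. So $c^4 \ge 1-4\epsilon$. Motivated by this, I define $\ket\gamma^\cC$ to be the normalization of $(\bra\phi^\cB\otimes I^\cC)\ket\rho^{\cB\cC} = \sum_i\sqrt{\lambda_i}\langle\phi\mid u_i\rangle\ket{v_i}^\cC$, whose norm is exactly $c$, and set $\ket\zeta \coloneqq \ket\phi^\cA\ket\phi^\cB\ket\gamma^\cC$. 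A direct calculation then shows
\[
\langle\xi\mid\zeta\rangle = \bra{\rho}^{\cB\cC}\ket\phi^\cB\ket\gamma^\cC = \frac{1}{c}\sum_i\lambda_i|\langle\phi\mid u_i\rangle|^2 = c,
\]
so $|\langle\xi\mid\zeta\rangle|^2 = c^2 \ge \sqrt{1-4\epsilon} \ge 1-4\epsilon$.

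Finally, I would apply the crude form \eqref{eq:overlap-triangle-ineq-crude} of \cref{fact:overlap-triangle-ineq} once more to $\ket\sigma, \ket\xi, \ket\zeta$ with the overlaps $1-\epsilon$ and (at worst) $1-4\epsilon$, obtaining a bound of the form $1-O(\epsilon)$; tracking constants carefully (and using $c^2 \ge \sqrt{1-4\epsilon}$ rather than the weaker $1-4\epsilon$ where it matters) yields the desired $|\langle\sigma\mid\zeta\rangle|^2 \ge 1-8\epsilon$. The ``furthermore'' clause is then essentially free: if $\rho_\cC$ is pure then $\ket\rho^{\cB\cC}$ is itself a product $\ket{u_1}^\cB\ket{v_1}^\cC$, the Schmidt sum collapses to one term, and $\ket\gamma$ equals $\ket{v_1}=\ket{\rho_\cC}$ up to a global phase, which can be absorbed into $\ket\phi^\cB$ without altering $|\langle\sigma\mid\zeta\rangle|$.

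The main obstacle I anticipate is cosmetic rather than conceptual: the $1/c$ normalization in the definition of $\ket\gamma$ has to be carried through carefully so that the two applications of \cref{fact:overlap-triangle-ineq} combine to give precisely the constant $8$ (and not $9$ or $10$); using the sharper bound $c^2 \ge \sqrt{1-4\epsilon}$ rather than $c^2 \ge 1-4\epsilon$ in the final step is what prevents the constants from degrading.
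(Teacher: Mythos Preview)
Your approach is correct and follows the same high-level strategy as the paper: exploit the symmetry of $\ket\sigma$ under the $\cA\leftrightarrow\cB$ swap, apply \cref{fact:overlap-triangle-ineq} to bound $\langle\xi|\xi'\rangle$, analyze via the Schmidt decomposition of $\ket\rho^{\cB\cC}$, and then apply \cref{fact:overlap-triangle-ineq} once more. The one genuine difference is the choice of $\ket\gamma$: you take the normalized projection $(\bra\phi^\cB\otimes I^\cC)\ket\rho/c$, whereas the paper takes the top Schmidt vector $\ket{\gamma_1}$ on $\cC$ and then separately argues that $\lambda_1\eta_1 \ge 1-3\epsilon$ (with $\eta_i = |\langle\phi|u_i\rangle|^2$). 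Your choice is slightly cleaner, as it sidesteps the paper's case analysis showing $\max_i\eta_i = \eta_1$ and the subsequent AM--GM bound on $\lambda_1\eta_1$; the paper's choice has the mild advantage that $\ket\gamma=\ket{\gamma_1}$ is manifestly phase-free when $\rho_\cC$ is pure.

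One small sharpening you are missing: since $\langle\xi|\xi'\rangle = c^2$ is itself real and nonnegative, from $|\langle\xi|\xi'\rangle|^2 \ge (1-2\epsilon)^2$ you get $c^2 \ge 1-2\epsilon$ directly (for $\epsilon \le 1/2$), which is stronger than your stated $c^2 \ge \sqrt{1-4\epsilon}$. With $|\langle\xi|\zeta\rangle|^2 = c^2 \ge 1-2\epsilon$, the crude form \eqref{eq:overlap-triangle-ineq-crude} already gives
\[
|\langle\sigma|\zeta\rangle|^2 \ge 1-\epsilon-2\epsilon-2\sqrt{2\epsilon^2} = 1-(3+2\sqrt{2})\epsilon > 1-8\epsilon,
\]
so no delicate constant-tracking is needed at all. (As in the paper, one may assume $\epsilon<1/8$ since otherwise the conclusion is vacuous.)
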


\begin{proof}
We assume that $\epsilon<1/8$, otherwise the statement is trivially true.
Apply Schmidt decomposition to $\ket\rho^{\cB\cC}$ for the $\cB:\cC$ cut,
\[
    \ket\rho^{\cB\cC} = \sum_i \sqrt{\lambda_i} \ket{\beta_i}^\cB\ket{\gamma_i}^\cC.
\]
Let 
\[
    \ket{\rho'}^{\cA\cC}=\sum_i \sqrt{\lambda_i} \ket{\beta_i}^\cA\ket{\gamma_i}^\cC.
\]
Since $\ket\sigma\in\Symm{k}{d}$, we have
\[
    |
        \bra\sigma^{\cA\cB\cC} \ket\phi^\cA\ket\rho^{\cB\cC}|^2
    =
    |
        \bra\sigma^{\cA\cB\cC} \ket\phi^\cB\ket{\rho'}^{\cA\cC}
    |^2
    = 1- \epsilon.
\]
By \cref{fact:overlap-triangle-ineq},
\begin{align}
    (1-2\epsilon)^2
        &\le |\bra\phi^\cA\bra\rho^{\cB\cC} \ket\phi^\cB\ket{\rho'}^{\cA\cC}|^2
        = \left( \sum_i \lambda_i |\langle \phi\mid \beta_i \rangle|^2 \right)^2. \label{eq:slice-ineq}
\end{align}
Abbreviate $\eta_i = |\langle\phi\mid \beta_i \rangle|^2$. Note that
\[
    \sum \eta_i  \le 1, \quad \sum \lambda_i = 1.
\]
Therefore, immediately from (\ref{eq:slice-ineq}),  
\begin{align}
    \lambda_1, \max \eta_i \ge 1-2\epsilon,
\end{align}
which is at least 3/4 since $\epsilon<1/8$. If $\eta_1\not=\max\eta_i$, then 
\[
1-2\epsilon \le \sum_i \lambda_i \eta_i \le \lambda_1 (1-\max \eta_i) + \max \eta_i \cdot (1-\lambda_1) \le 4\epsilon,
\]
which is impossible as $\epsilon < 1/8$. Therefore, $\eta_1 \ge 1-2\epsilon$.

We push it further,
\begin{align}
    1-2\epsilon 
    &\le \sum_i \lambda_i \eta_i \le \lambda_1 \eta_1 + (1-\lambda_1)(1-\eta_1) = 2\lambda_1 \eta_1 -\lambda_1-\eta_1 +1
    \nonumber \\
    &\le 2\lambda_1\eta_1 -2\sqrt{\lambda_1\eta_1} +1
    \nonumber \\
    &=2 \left(\sqrt{\lambda_1 \eta_1 }-\frac{1}{2}\right)^2 + \frac{1}{2},
    \nonumber
\end{align}
where the second step is due to AM-GM inequality. Since $\lambda_1, \eta_1 > 3/4$, and $\epsilon<1/8$, 
\begin{align}
    \lambda_1\eta_1 \ge \left(\frac{1}{2}+\sqrt{\frac{1}{4}-\epsilon}\right)^2 \ge  1-3\epsilon,\nonumber
\end{align}
where the last inequality holds for $\epsilon\in[0,1/8]$. 
Note that
\[
 |\bra\phi^\cA\bra\rho^{\cB\cC} \ket\phi^\cA\ket\phi^\cB\ket{\gamma_1}^\cC|^2\ge\lambda_1\eta_1 \ge 1-3\epsilon.
\]
By \cref{fact:overlap-triangle-ineq} and that $1-3\epsilon > 1/2$, it can be verified that
\begin{align*}
    |\bra\sigma^{\cA\cB\cC} \ket\phi^\cA\ket\phi^\cB\ket{\gamma_1}^\cC|^2 
    &\ge (\sqrt{(1-\epsilon)(1-3\epsilon)}-\sqrt{3}\epsilon)^2
    \\
    &= 1-4\epsilon+6\epsilon^2 -2\sqrt{3}\epsilon\sqrt{(1-\epsilon)(1-3\epsilon)}
    \\
    &\ge 1-8\epsilon.\qedhere
\end{align*}
\end{proof}

Now \cref{thm:one_vs_many} is a simple consequence of \cref{lem:dup}: Duplicate the the first subsystem taken from the top Schmidt component of $\ket\sigma$.
\begin{proof}[Proof of \emph{\cref{thm:one_vs_many}}]
Let $\ket{\sigma_0}=\ket\phi\ket\gamma$ be the top Schmidt component of $\ket\sigma$ for the $A_1: A_2\ldots A_k$ cut. By assumption of the theorem statement,
\[
    |\langle \sigma \mid \sigma_0\rangle|^2 \ge 1-\epsilon.
\]
Let $m =\lfloor \log k \rfloor, m^* = \lceil \log k \rceil$. For $i = 1, 2, \ldots, m$, apply the Duplicate Lemma on $\ket{\sigma_{i-1}}$ with $\cA = \{A_1,A_2,\ldots, A_{2^{i-1}}\}, \cB=\{A_{2^{i-1}+1}, A_{2^{i-1}+2},\ldots, A_{2^i}\}$. Let $\ket{\sigma_i}$ be the $\ket\zeta$ guaranteed by the Duplicate Lemma. 
If $2^m < k$, apply the Duplicate Lemma one more time on $\ket{\sigma_m}$ with $\cA=\{A_1, A_2,\ldots, A_{k - 2^m}\}, \cB=\{A_{2^m+1}, A_{2^m+2}, \ldots, A_k\}$, and let $\ket{\sigma_{m^*}}$ be the state guaranteed by the Duplicate Lemma. 
Then, a straightforward induction shows
\begin{enumerate}
    \item  $|\langle \sigma \mid \sigma_{m^*}\rangle|^2 \ge 1 - 8^{m^*}\epsilon\ge 1-8k^3\epsilon$,
    \item  $\ket{\sigma_{m^*}}=\ket\phi^{\otimes k}$.
\end{enumerate}
That finishes the proof.
\end{proof}

We make a remark about \cref{thm:one_vs_many}. Note that some polynomial dependence on $k$ is unavoidable in this analysis for our procedure. Consider the following state:
\[
    \frac{1}{\sqrt {k+1}}\ket{\vec0}+\frac{1}{\sqrt {k+1}}\sum_{i=1}^{k}\ket{\vec{e_i}}. 
\]
To obtain a tight version of the above theorem with linear dependency on $k$ is an interesting problem.

\subsection{Many-versus-Many Slicing de Finetti}
In \cref{thm:one_vs_many}, we considered top Schmidt coefficient being large on a $1$ vs $k-1$ cut for pure state. By looking at the example we mentioned in the end of the previous subsection, it is natural to think that if the top Schmidt coefficient is large among a balanced cut, then we can obtain better trace distance. That is indeed the case. In fact, that top Schmidt coefficient is large for a balanced cut always implies the top Schmidt coefficient is large for a less balanced cut for a symmetric state. In this subsection, our goal is to formalize this intuition. 

\begin{theorem}[Many-versus-many Slicing de Finetti]\label{thm:many-many}
Let $\ket{\sigma}^{A_1\ldots A_k} \in \vee^k(\C^d)$. Suppose for some  $1\le\ell \le k/2$, the top Schmidt coefficient of $\ket\sigma$ over the $A_1\ldots A_\ell : A_{\ell+1}\ldots A_k$ cut is $\sqrt{1-\epsilon}$. Then there is  $\ket\phi\in\C_d$, such that
\[
    |\langle \sigma, \phi^{\otimes k}\rangle|^2 \ge 1 - O( (k/\ell)^3 \epsilon).
\]
\end{theorem}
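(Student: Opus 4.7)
The plan is to mirror the doubling argument behind \cref{thm:one_vs_many} --- iterative use of \cref{lem:dup} --- but with ``blocks'' of $\ell$ subsystems in place of single subsystems. By hypothesis, writing the top Schmidt component of $\ket\sigma$ across the given cut as $\ket{\alpha_0}^{A_1\cdots A_\ell}\ket{\beta_0}^{A_{\ell+1}\cdots A_k}$ yields $|\langle \sigma \mid \alpha_0\beta_0\rangle|^2 \ge 1-\epsilon$. A short argument applying block-local permutations and invoking uniqueness of the Schmidt decomposition shows that $\ket{\alpha_0}\in \vee^\ell(\C^d)$ and $\ket{\beta_0}\in \vee^{k-\ell}(\C^d)$.

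The first phase iterates \cref{lem:dup} with doubling block sizes. For $i = 1, \ldots, m = \lfloor \log_2(k/\ell)\rfloor$, I would set $\cA = \{A_1, \ldots, A_{2^{i-1}\ell}\}$ (carrying the approximating state $\ket{\alpha_0}^{\otimes 2^{i-1}}$), $\cB = \{A_{2^{i-1}\ell+1}, \ldots, A_{2^i\ell}\}$, and $\cC$ the remaining subsystems, then invoke \cref{lem:dup}. When $2^m\ell < k$, one more application cleans up the leftover subsystems: take $\cA$ to be any $(k - 2^m\ell)$ of the already-packed subsystems (hence carrying full copies of $\ket{\alpha_0}$) and $\cB$ the uncovered ones; the ``Furthermore'' clause of \cref{lem:dup} preserves the $\ket{\alpha_0}$-blocks in $\cC$ because the marginal $\rho_\cC$ there is the pure state $\ket{\alpha_0}^{\otimes(2^{m+1}-k/\ell)}$. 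Each invocation inflates the error by at most a factor of $8$, and the total $m+1 = O(\log(k/\ell))$ invocations give $|\langle\sigma\mid \alpha_0^{\otimes t}\rangle|^2 \ge 1 - 8^{m+1}\epsilon = 1 - O((k/\ell)^3\epsilon)$, where $t = k/\ell$.

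The second phase uses the full $\sym_k$-symmetry of $\ket\sigma$ (beyond the block-local symmetry used in phase one) to replace the approximating state $\ket{\alpha_0}^{\otimes t}$ by a single-qudit product $\ket\phi^{\otimes k}$. For any permutation $\pi\in \sym_k$ mixing subsystems across distinct $\ket{\alpha_0}$-blocks, the identity $\pi\ket\sigma = \ket\sigma$ together with $\ket\sigma \approx \ket{\alpha_0}^{\otimes t}$ forces $|\langle\alpha_0^{\otimes t}\mid \pi(\alpha_0^{\otimes t})\rangle|\ge 1 - O((k/\ell)^3\epsilon)$. Expanding this inner product using the Schmidt decomposition of $\ket{\alpha_0}$ across the cut that $\pi$ implicitly defines, orthogonality of the Schmidt vectors collapses the sum to $\sum_i \mu_i^2$ of squared Schmidt coefficients, so the top Schmidt coefficient of $\ket{\alpha_0}$ on this cut is close to $1$. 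Choosing a balanced cut and then recursively applying Many-versus-Many on $\ket{\alpha_0}$ (with parameter $\ell/2$) yields $\ket{\alpha_0}$ close to some $\ket\phi^{\otimes\ell}$; chaining this with phase one via \cref{fact:overlap-triangle-ineq} gives the desired $|\langle\sigma\mid \phi^{\otimes k}\rangle|^2 \ge 1 - O((k/\ell)^3\epsilon)$.

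The hard part will be tight error accounting in the second phase. A naive reduction to \cref{thm:one_vs_many} applied directly to $\ket{\alpha_0}$ loses an extra factor of $\ell^3$, and then lifting $\ket{\alpha_0}\approx \ket\phi^{\otimes\ell}$ to $\ket{\alpha_0}^{\otimes t}\approx \ket\phi^{\otimes k}$ loses an additional factor of $t = k/\ell$, yielding only an $O(k^4 \ell^{-1}\epsilon)$ bound overall. Recursing on Many-versus-Many itself at a balanced cut of $\ket{\alpha_0}$ (rather than invoking \cref{thm:one_vs_many} as a black box) is what should keep the final bound at $O((k/\ell)^3\epsilon)$.
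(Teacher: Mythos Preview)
Your Phase 1 is correct and parallels Case 2 of the paper's Cut Lemma (\cref{lem:cut}); the divisibility issue when $\ell \nmid k$ (so that $k-2^m\ell$ need not consist of ``full copies of $\ket{\alpha_0}$'') is a minor wrinkle.

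Phase 2, however, has a real gap: the recursion does not close. Suppose the theorem holds with big-$O$ constant $C'$. Your permutation step gives $1-\mu_1 = O(\delta_0)$ with $\delta_0 = O((k/\ell)^3\epsilon)$; the recursive call on $\ket{\alpha_0}$ (with $k'/\ell'=2$) then yields $|\langle\alpha_0\mid\phi^{\otimes\ell}\rangle|^2 \ge 1 - 8C' \cdot O(\delta_0)$; but chaining with Phase 1 requires passing through $|\langle\alpha_0^{\otimes t}\mid\phi^{\otimes k}\rangle|^2 = |\langle\alpha_0\mid\phi^{\otimes\ell}\rangle|^{2t} \ge 1 - t\cdot 8C' \cdot O(\delta_0)$, costing an extra factor $t = k/\ell$. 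The induction therefore demands $C' \gtrsim 8tC'$, impossible for $t \ge 2$. Unrolling down to $\ell'=1$ instead accumulates a constant factor at each of the $\log_2\ell$ levels, producing a $\poly(\ell)$ blowup --- precisely the loss you say you want to avoid, and no better than invoking \cref{thm:one_vs_many} as a black box.

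The paper's route is genuinely different and avoids recursion entirely. It proves a Cut Lemma (\cref{lem:cut}) bounding $\trace\sigma_\cS^2$ for \emph{every} subset $\cS \subseteq [k]$: crucially, for small cuts $|\cS| < \ell$ the bound is $1 - O(\epsilon)$ (not $1 - O((k/\ell)^3\epsilon)$), obtained by applying the swap permutation directly to the \emph{original} top Schmidt component $\ket\phi\ket\zeta$ of $\ket\sigma$ rather than to the Phase-1 output; for $|\cS| \ge \ell$ it is $1 - O((|\cS|/\ell)^3\epsilon)$ via your doubling. Averaging over all $\cS$ and invoking the Harrow--Montanaro criterion (\cref{cor:sym-prod}) then converts $\E_\cS[\trace\sigma_\cS^2] \ge 1 - O((k/\ell)^3\epsilon)$ directly into $|\langle\sigma\mid\phi^{\otimes k}\rangle|^2 \ge 1 - O((k/\ell)^3\epsilon)$ in one shot, with no lift and no recursion.
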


We start by collecting a couple of useful facts. The first one says that if a symmetric state from $(\C^d)^{\otimes k}$ is close to a product state, then it is also close to a symmetric product state, i.e., $\ket{\phi^{\otimes k}}$ for some $\ket\phi \in \C ^d$.
\begin{lemma} \label{lem:prod-symmetric-state}
Given a symmetric state $\ket\sigma\in\Symm{k}{d}$ and a $k$-partite product state $\ket\psi\in (\C^d)^{\otimes k}$. Suppose $|\langle \psi \mid \sigma \rangle|^2 \ge 1-\epsilon$. Then there is $\ket\phi\in\C^d$ that satisfies
\[
    |\langle\sigma\mid \phi^{\otimes k}\rangle|^2 \ge 
    1-9\epsilon.
\]
\end{lemma}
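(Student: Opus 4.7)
The plan is to take $\ket\phi \in \C^d$ to be the top Schmidt vector of $\ket\sigma$ on $A_1$ across the cut $A_1 : A_2\cdots A_k$---equivalently, the top eigenvector of the single-site reduced density matrix $\rho_1 := \Tr_{A_2\cdots A_k}\sigma$---and prove $|\langle \sigma | \phi^{\otimes k}\rangle|^2 \ge 1-9\epsilon$ by induction on $k$. By \cref{claim:top-schmidt-coeff}, the top Schmidt coefficient squared satisfies $\lambda_1 \ge |\langle \sigma | \psi_1, \psi_{2\cdots k}\rangle|^2 = |\langle \sigma | \psi\rangle|^2 \ge 1-\epsilon$. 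Schmidt-decompose $\ket\sigma = \sum_i \sqrt{\lambda_i}\,\ket{u_i}_{A_1}\,\ket{v_i}_{A_2\cdots A_k}$ with $\ket{u_1}=\ket\phi$. The symmetry of $\ket\sigma$ under permutations of $\{A_2,\ldots,A_k\}$ (the stabilizer $S_{k-1}$ of $A_1$ inside $S_k$) forces each $\ket{v_i} \in \Symm{k-1}{d}$. Contracting with $\ket\phi$ on $A_1$ gives $|\langle\sigma|\phi^{\otimes k}\rangle|^2 = \lambda_1 \cdot |\langle v_1 | \phi^{\otimes k-1}\rangle|^2$, so it suffices to show $|\langle v_1 | \phi^{\otimes k-1}\rangle|^2 \ge 1-8\epsilon$; combined with $\lambda_1 \ge 1-\epsilon$ this yields $(1-\epsilon)(1-8\epsilon) \ge 1-9\epsilon$.

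The base case $k=2$ follows directly from the Duplicate Lemma (\cref{lem:dup}) with $\cA=\{A_1\}, \cB=\{A_2\}, \cC=\emptyset$, $\ket\phi=\ket{\psi_1}$, $\ket\rho=\ket{\psi_2}$, yielding $|\langle\sigma|\psi_1^{\otimes 2}\rangle|^2 \ge 1-8\epsilon$. For the inductive step, $\ket{v_1}$ is a symmetric $(k-1)$-partite state. Applying the overlap triangle inequality (\cref{fact:overlap-triangle-ineq}) to the two near-equalities $\ket\sigma \approx \ket\psi$ and $\ket\sigma \approx \sqrt{\lambda_1}\,\ket\phi\otimes\ket{v_1}$ (from the Schmidt decomposition) yields $|\langle v_1 | \psi_2,\ldots,\psi_k\rangle|^2 \ge 1 - O(\epsilon)$. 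The inductive hypothesis then supplies a unit vector $\ket{\phi^{\star}} \in \C^d$ with $|\langle v_1 | (\phi^{\star})^{\otimes k-1}\rangle|^2 \ge 1-9\epsilon'$ for an appropriate $\epsilon'$ controlled by $\epsilon$.

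The main obstacle is twofold: (i) showing that $\ket{\phi^{\star}}$ from the inductive hypothesis can be identified with our chosen $\ket\phi$, and (ii) controlling $\epsilon'$ so that $1-9\epsilon'$ is at least $1-8\epsilon$. For (i), the crucial input is that $|\langle\phi|\psi_i\rangle|^2 \ge 1-\epsilon$ for every $i$---derivable from $\lambda_1 \ge 1-\epsilon$ together with $\langle\psi_i|\rho_1|\psi_i\rangle \ge 1-\epsilon$, where the latter uses the symmetry of $\ket\sigma$ to permute $\psi_i$ into the first slot---so $\ket\phi$ is pinned down as the common near-direction shared by all the $\ket{\psi_i}$'s, and the induction should naturally select the same vector at each level. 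For (ii), a crude triangle inequality compounds errors across recursion levels and will not give the desired constant; a sharper single-step analysis (possibly via a direct Cauchy--Schwarz-type estimate in the Schmidt basis, rather than iterated recursion) will be needed to pin down the exact $9\epsilon$ bound.
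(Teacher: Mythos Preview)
Your inductive approach has a genuine gap, and you already put your finger on it: the recursion does not close. Passing from $\ket\sigma$ to $\ket{v_1}$ costs a triangle-inequality step, so the hypothesis you feed into the induction is $|\langle v_1|\psi_2,\ldots,\psi_k\rangle|^2 \ge 1-c\epsilon$ for some constant $c>1$; the inductive conclusion is then $|\langle v_1|(\phi^\star)^{\otimes k-1}\rangle|^2 \ge 1-9c\epsilon$, not $1-8\epsilon$. Iterating this blows the constant up geometrically in $k$, so you end up with $1-C^k\epsilon$ rather than $1-9\epsilon$. The ``sharper single-step analysis'' you call for at the end is essentially the whole proof, and you have not supplied it. The auxiliary issue (i) of identifying $\phi^\star$ with $\phi$ is also left unresolved; even granting $|\langle\phi|\psi_i\rangle|^2\ge 1-2\epsilon$ for all $i$, this only pins $\phi$ up to $O(\sqrt\epsilon)$ in trace distance, which is another loss you would have to absorb at every level.

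The paper's argument avoids induction entirely and is much shorter. Since $\ket\sigma$ is symmetric, $|\langle\sigma|\pi\psi\rangle|^2\ge 1-\epsilon$ for every permutation $\pi\in\Sym_k$; one application of \cref{fact:overlap-triangle-ineq} gives $|\langle\psi|\pi\psi\rangle|^2\ge 1-4\epsilon$ for all $\pi$. Writing $\ket\psi=\ket{\psi_1}\otimes\cdots\otimes\ket{\psi_k}$ and taking the product over all $\pi$, an AM--GM step yields an index $i$ with $\prod_j|\langle\psi_i|\psi_j\rangle|^2\ge 1-4\epsilon$, i.e.\ $|\langle\psi_i^{\otimes k}|\psi\rangle|^2\ge 1-4\epsilon$. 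One more application of \cref{fact:overlap-triangle-ineq} (with losses $\epsilon$ and $4\epsilon$) gives $|\langle\sigma|\psi_i^{\otimes k}\rangle|^2\ge 1-\epsilon-4\epsilon-2\sqrt{4\epsilon^2}=1-9\epsilon$. Note in particular that the paper takes $\ket\phi$ to be one of the tensor factors $\ket{\psi_i}$ of the given product state, not the top Schmidt vector of $\ket\sigma$.
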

\begin{proof}
We take advantage of $\ket\sigma$ being symmetric in a way similar to that of \cref{lem:dup}. As $\ket\sigma\in\Symm{k}{d}$, we have for any permutation $\pi\in \Sym_k$, $|\langle \sigma \mid \pi\psi\rangle|^2\ge1-\epsilon$. By \cref{fact:overlap-triangle-ineq},
\begin{align*}
    |\langle \psi \mid \pi\psi \rangle|^2 \ge 1-4\epsilon.
\end{align*}
Say $\ket\psi=\ket{\psi_1}\otimes\ket{\psi_2}\otimes \cdots \otimes\ket{\psi_k}$, then,
\begin{align}
    (1-4\epsilon)^{k!} 
    \le \prod_{\pi\in \Sym_k} |\langle \psi \mid \pi\psi \rangle|^2    
    &=\left(\prod_{i\in[k]}\prod_{j \in [k]} |\langle \psi_i\mid \psi_j\rangle|^2\right)^{(k-1)!}
        \nonumber\\
    &\qquad\le \left(\Exp_{i\in[k]} \prod_{j\in[k]}
    |\langle \psi_i\mid \psi_j\rangle|^2\right)^{k!},
        \label{eq:prod-sym-ineq}
\end{align}
where the last step uses the AM-GM inequality. It follows from (\ref{eq:prod-sym-ineq}), there must exist $i\in[k]$ such that
\begin{align}
     1-4\epsilon\le \prod_{j\in [k]} |\langle \psi_i \mid \psi_j \rangle|^2 
         \quad \iff \quad
     1-4\epsilon \le |\langle  \psi_i ^{\otimes k} \mid \psi \rangle|^2.
     \nonumber
\end{align}
Apply \cref{fact:overlap-triangle-ineq} one more time, we obtain our lemma.
\end{proof}

The second fact  due to  Harrow and Montanaro~\cite[Appendix~B~Lemma~2]{HM13} and Soleimanifar and Wright~\cite{SW22matrix-product} establishes some criteria when a pure state is close to a product state.
\begin{lemma}
Given any quantum state $\ket\psi\in \cH_1 \otimes \cH_2 \otimes \cdots \otimes \cH_k$ for some arbitrary Hilbert space $\cH_1, \ldots, \cH_k$. Suppose 
\[
    \Exp_{\cS\subseteq [k]} [\trace{\psi_\cS^2}] \ge 1-\epsilon.
\]
Then for some product state $\ket\phi=\ket{\phi_1}\otimes\ket{\phi_2}\otimes\cdots\otimes\ket{\phi_k}$,
\[
    |\langle\psi \mid \phi\rangle|^2 \ge 1-3\epsilon.
\]
\end{lemma}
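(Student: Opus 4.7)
The plan is to first recognize the hypothesis as an acceptance probability for the Harrow--Montanaro product test. Writing $\SwapT_i$ for the unitary on two copies of $\cH_1 \otimes \cdots \otimes \cH_k$ that swaps only the $i$\textsuperscript{th} factor between the copies, $(I + \SwapT_i)/2$ is the projector onto the symmetric subspace of $\cH_i \otimes \cH_i$, and one has the standard identity $\bra{\psi \otimes \psi}\SwapT_\cS\ket{\psi \otimes \psi} = \trace(\psi_\cS^2)$. Averaging uniformly over $\cS \subseteq [k]$ then yields
\[
\E_{\cS \subseteq [k]}[\trace(\psi_\cS^2)] \;=\; \bra{\psi \otimes \psi} \bigotimes_{i=1}^k \tfrac{I + \SwapT_i}{2} \ket{\psi \otimes \psi},
\]
which is exactly the probability that the product test accepts two copies of $\ket\psi$. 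The lemma therefore reduces to establishing a linear-in-$\epsilon$ soundness for the product test.

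I would then take as candidate $\ket\phi = \bigotimes_{i=1}^k \ket{\phi_i}$, where each $\ket{\phi_i}$ is a top eigenvector of the single-site reduced density operator $\psi_{\{i\}}$. The telescoping identity
\[
I^{\otimes k} - \bigotimes_{i=1}^k \phi_i \;=\; \sum_{i=1}^k \phi_1 \otimes \cdots \otimes \phi_{i-1} \otimes (I - \phi_i) \otimes I^{\otimes (k-i)},
\]
in which every summand is positive semidefinite, gives
\[
1 - |\langle \psi \mid \phi \rangle|^2 \;\le\; \sum_{i=1}^k \bigl(1 - \bra{\phi_i} \psi_{\{i\}} \ket{\phi_i}\bigr),
\]
so it suffices to bound the total one-body deficit on the right-hand side by $3\epsilon$.

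The hard part is obtaining this bound with the tight constant $3$: the crude union bound $\sum_\cS (1-\trace(\psi_\cS^2)) \le 2^k\epsilon$ is exponentially too weak, and even its restriction to singleton $\cS$ is insufficient. Following the strategy of Soleimanifar--Wright, I would induct on $k$, splitting the expectation by whether $k \in \cS$:
\[
\E_{\cS \subseteq [k]}[\trace(\psi_\cS^2)] \;=\; \tfrac12\, \E_{\cT \subseteq [k-1]}\bigl[\trace((\trace_k \psi)_\cT^{\,2})\bigr] + \tfrac12\, \E_{\cT \subseteq [k-1]}\bigl[\trace(\psi_{\cT \cup \{k\}}^{\,2})\bigr].
\]
The first term is fed into a mixed-state strengthening of the inductive hypothesis applied to the $(k-1)$-partite reduced state $\trace_k \psi$, while the second term controls the one-body deficit at subsystem $k$ via the bipartite Schmidt decomposition across the $\{k\}:[k-1]$ cut combined with \cref{fact:overlap-triangle-ineq}. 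Careful bookkeeping of constants through the recursion is the delicate point that keeps the final prefactor at $3$ rather than letting it blow up with $k$.
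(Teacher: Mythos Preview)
First, note that the paper does not actually prove this lemma: it is quoted as a known fact from Harrow--Montanaro \cite[Appendix~B, Lemma~2]{HM13} and Soleimanifar--Wright \cite{SW22matrix-product}, so there is no in-paper argument to compare against beyond those citations. Your identification of the hypothesis with the product-test acceptance probability and your inductive splitting of the expectation by membership of $k$ are both correct and are indeed the backbone of the cited proofs.

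The genuine gap is the intermediate reduction you commit to. You assert that it suffices to show
\[
\sum_{i=1}^k\bigl(1-\bra{\phi_i}\psi_{\{i\}}\ket{\phi_i}\bigr)\le 3\epsilon,
\]
with $\ket{\phi_i}$ the top eigenvector of the marginal $\psi_{\{i\}}$. This inequality is false. Take $k=2$ and $\ket\psi$ maximally entangled of local dimension $d\ge 2$. Then $\psi_{\{1\}}=\psi_{\{2\}}=I/d$, so the left-hand side equals $2(1-1/d)$. Meanwhile $\Exp_{\cS\subseteq[2]}[\trace\psi_\cS^2]=\tfrac12(1+1/d)$, so $\epsilon=\tfrac12(1-1/d)$ and $3\epsilon=\tfrac32(1-1/d)$; the left-hand side is $\tfrac43\cdot 3\epsilon$. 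The lemma itself survives this example because the true optimal product overlap is $\lambda_1=1/d$, giving $1-|\langle\psi|\phi\rangle|^2=2\epsilon$; it is the telescoping-to-marginals step, which double-counts the entropy across the unique cut, that is too lossy for the constant~$3$.

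The remedy is to drop the marginals reduction and run the induction directly on the product overlap, which is in fact what your final paragraph is describing if read on its own. Across the cut $\{k\}:[k-1]$ one takes the top Schmidt vector as the last tensor factor and feeds the \emph{conditional} $(k{-}1)$-partite state into the inductive hypothesis, rather than choosing each $\ket{\phi_i}$ independently from the marginal $\psi_{\{i\}}$. The two halves of your displayed splitting then control, respectively, the $(k{-}1)$-partite deficit of that conditional state and the Schmidt deficit $1-\lambda_1$ at site $k$, and the constants close at~$3$. In short, your last paragraph is on the right track, but it is incompatible with the telescoping reduction two paragraphs above it; discard the latter.
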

Combining the above two lemmas, we obtain
\begin{corollary}\label{cor:sym-prod}
Given any  state $\ket\sigma \in \Symm{k}{d}$. Suppose 
\[
    \Exp_{\cS\subseteq [k]} [\trace{\sigma_\cS^2}] \ge 1-\epsilon.
\]
Then for some  state $\ket\phi \in \C^d$,
\[
    |\langle\sigma \mid \phi^{\otimes k}\rangle|^2 \ge 1-27\epsilon.
\]
\end{corollary}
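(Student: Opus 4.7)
The plan is to chain the two lemmas immediately preceding the corollary. First I would apply the Harrow--Montanaro / Soleimanifar--Wright lemma directly to $\ket\sigma$, viewed as a $k$-partite pure state in $\cH_1 \otimes \cdots \otimes \cH_k$ with each $\cH_i = \C^d$. Since by hypothesis $\Exp_{\cS\subseteq[k]}[\trace{\sigma_\cS^2}] \ge 1-\epsilon$, that lemma hands me a (not necessarily symmetric) product state $\ket\psi = \ket{\psi_1}\otimes\cdots\otimes\ket{\psi_k}$ satisfying $|\langle\sigma\mid\psi\rangle|^2 \ge 1 - 3\epsilon$.

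Next I would feed this $\ket\psi$, together with $\ket\sigma$, into \cref{lem:prod-symmetric-state}. That lemma takes a symmetric state close to an arbitrary product state and promotes it to closeness with a \emph{symmetric} product state, at the cost of blowing up the error by a factor of $9$. With the input error now being $3\epsilon$, it outputs some $\ket\phi \in \C^d$ with
\[
|\langle\sigma \mid \phi^{\otimes k}\rangle|^2 \ge 1 - 9\cdot 3\epsilon = 1 - 27\epsilon,
\]
which is exactly the claim. Nothing else is needed; the constant $27 = 9 \cdot 3$ appearing in the statement is essentially a giveaway that this is the intended composition.

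There is no real obstacle here: the only thing to double-check is that the hypotheses of the two lemmas are genuinely met in sequence. In particular the Harrow--Montanaro lemma does not require symmetry of its input, so it applies to $\ket\sigma$ with no modification; and \cref{lem:prod-symmetric-state} does require its input $\ket\sigma$ to lie in $\Symm{k}{d}$, which is given. One minor thing to verify in a clean write-up is that the error $3\epsilon$ is at most whatever range of validity \cref{lem:prod-symmetric-state} implicitly assumes (the statement as given has no such restriction, and the conclusion $1-27\epsilon$ is trivially true once $\epsilon \ge 1/27$), so the composition is valid for all $\epsilon \in [0,1]$.
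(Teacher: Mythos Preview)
Your proposal is correct and matches the paper's approach exactly: the paper introduces the corollary with ``Combining the above two lemmas, we obtain,'' meaning precisely the composition you describe. The factor $27 = 9 \cdot 3$ is indeed the product of the two error blow-ups, and your checks on the hypotheses and the trivial range $\epsilon \ge 1/27$ are all that is needed.
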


From the above discussion, to prove \cref{thm:many-many}, it suffices to bound $\Tr\psi_\cS^2$ for any subset $\cS$. The following ``cut lemma'' establishes such bounds. 
\begin{lemma}[Cut Lemma]\label{lem:cut}
Let $\ket{\sigma}\in \Symm{k}{d}$.
Suppose for some  $1\le\ell \le k/2$, the top Schmidt coefficient of $\ket\sigma$ over the $A_1\ldots A_\ell : A_{\ell+1}\ldots A_k$ cut is $\sqrt{1-\epsilon}$. Let $\cS\subseteq [k]$ be some arbitrary  subset.
Then, 
\[
    \trace \sigma_\cS ^2 \ge \begin{cases}
        1, & \quad |\cS|  = 0;
        \\
        1-6\epsilon, & \quad \min\{|\cS|, k-|\cS|\} \in \{1,2,\ldots, \ell-1\};
        \\
        1- O((|\cS|/\ell)^3\epsilon), &\quad \min\{|\cS| , k-|\cS|\}\in \{\ell,\ldots, k/2\}.
    \end{cases}
\]
\end{lemma}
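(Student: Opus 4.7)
The plan is to first reduce to $\cS = [m]$ with $1 \le m \le k/2$: by $\Sym_k$-invariance of $\ket\sigma$, $\Tr\sigma_\cS^2$ depends only on $|\cS|$; by Schmidt duality for the pure state $\ket\sigma$, $\Tr\sigma_\cS^2 = \Tr\sigma_{\bar\cS}^2$; and the case $m = 0$ is immediate. Write the Schmidt decomposition across $[\ell]:[\ell+1,k]$ as $\ket\sigma = \sum_i \lambda_i \ket{\alpha_i}^{[\ell]}\ket{\beta_i}^{[\ell+1,k]}$, so the hypothesis reads $\lambda_1^2 \ge 1-\epsilon$ and moreover $|\langle\sigma\mid\alpha_1\beta_1\rangle|^2 \ge 1-\epsilon$.

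For Case 3 ($m \ge \ell$), the plan is to iterate the Duplicate Lemma (\cref{lem:dup}) starting from $|\langle\sigma\mid\alpha_1\beta_1\rangle|^2 \ge 1-\epsilon$. Each application with $\cA, \cB$ chosen as adjacent blocks of size $\ell$ doubles the region covered by copies of $\ket{\alpha_1}$ while inflating the error by at most a factor of $8$. After $\lceil\log(m/\ell)\rceil$ iterations one obtains overlap $1-O((m/\ell)^3\epsilon)$ with a state of the form $\ket{\alpha_1}^{\otimes p}\otimes\ket{\gamma}$ whose first $p\ell \ge m$ systems are pure copies of $\ket{\alpha_1}$. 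Data processing for fidelity then transfers this bound to $\sigma_{[m]}$, and the inequality $\Tr\rho^2 \ge \lambda_{\max}(\rho)^2 \ge F(\rho,\pi)^2$ for any pure witness $\pi$ yields $\Tr\sigma_{[m]}^2 \ge 1-O((m/\ell)^3\epsilon)$.

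For Case 2 ($1 \le m \le \ell-1$), the plan is to exploit $\Sym_k$-invariance through a specific swap permutation $\pi \in \Sym_k$ exchanging $[m+1,\ell]$ with $[\ell+1, 2\ell-m]$, which is well defined since $k \ge 2\ell$. As $\pi$ fixes $\ket\sigma$, both $\ket{\alpha_1\beta_1}$ and $\pi\ket{\alpha_1\beta_1}$ achieve squared overlap $1-\epsilon$ with $\ket\sigma$, and~\cref{fact:overlap-triangle-ineq} then gives $|\langle\alpha_1\beta_1\mid\pi(\alpha_1\beta_1)\rangle|^2 \ge 1-4\epsilon$. A direct unfolding of the tensor-index contractions identifies this inner product with $\Tr\bigl[(\alpha_1)_{[m+1,\ell]}(\beta_1)_{[\ell+1,2\ell-m]}\bigr]$; the matrix Cauchy--Schwarz inequality~\eqref{eq:matrix-cauchy-schwarz} combined with $\Tr\rho^2 \le 1$ then forces $\Tr(\alpha_1)_{[m+1,\ell]}^2 \ge 1-4\epsilon$, and Schmidt duality applied to the pure state $\ket{\alpha_1}$ transports this into $\Tr(\alpha_1)_{[m]}^2 \ge 1-4\epsilon$. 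Finally, expanding
\[
\Tr\sigma_{[m]}^2 \;=\; \sum_{i,j}\lambda_i^2\lambda_j^2\,\Tr\bigl[(\alpha_i)_{[m]}(\alpha_j)_{[m]}\bigr]
\]
and dropping every summand except $(i,j)=(1,1)$---legal because each term is nonnegative as the trace of a product of PSD operators---gives $\Tr\sigma_{[m]}^2 \ge \lambda_1^4\,\Tr(\alpha_1)_{[m]}^2 \ge (1-\epsilon)^2(1-4\epsilon) \ge 1-6\epsilon$.

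The main subtlety is securing the \emph{linear}-in-$\epsilon$ bound in Case 2. A naive comparison of $\sigma_{[m]}$ to $(\alpha_1)_{[m]}$ via trace distance would only yield $1-O(\sqrt{\epsilon})$ through Fuchs--van de Graaf, which is strictly too weak. The critical trick is to avoid any trace-distance comparison entirely and instead expand $\sigma_{[m]}$ in the Schmidt basis of the $[\ell]:[\ell+1,k]$ cut, using the nonnegativity of the PSD cross terms to justify dropping all but the $(i,j)=(1,1)$ contribution. This preserves linearity in $\epsilon$ and furnishes the tight constant $6$.
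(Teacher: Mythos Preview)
Your overall strategy matches the paper's, and your treatment of the small-cut case (your Case~2, the paper's Case~1) is in fact a bit cleaner: the paper Schmidt-decomposes both $\ket{\alpha_1}$ and $\ket{\beta_1}$, swaps $[m+1,\ell]$ with the \emph{last} $\ell-m$ systems, and bounds the top Schmidt coefficient of $\ket{\alpha_1}$ across $[m]:[m+1,\ell]$; you instead swap $[m+1,\ell]$ with $[\ell+1,2\ell-m]$, recognize the resulting overlap directly as $\Tr\bigl[(\alpha_1)_{[m+1,\ell]}(\beta_1)_{[\ell+1,2\ell-m]}\bigr]$, and finish with matrix Cauchy--Schwarz. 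Both routes land on $(1-\epsilon)^2(1-4\epsilon)\ge 1-6\epsilon$ via the same ``drop all cross terms'' expansion of $\Tr\sigma_{[m]}^2$.

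There is, however, a genuine gap in your Case~3. After $t=\lceil\log(m/\ell)\rceil$ doublings you have $|\langle\sigma\mid\sigma_t\rangle|^2\ge 1-8^t\epsilon$ with $\ket{\sigma_t}=\ket{\alpha_1}^{\otimes 2^t}\ket\gamma$, and data processing gives $F(\sigma_{[m]},(\sigma_t)_{[m]})\ge 1-8^t\epsilon$. You then invoke $\Tr\rho^2\ge F(\rho,\pi)^2$ ``for any pure witness $\pi$'', but the only candidate on the table, $(\sigma_t)_{[m]}$, is \emph{not} pure when $\ell\nmid m$: writing $m=q\ell+r$ with $0<r<\ell$ one has $(\sigma_t)_{[m]}=\alpha_1^{\otimes q}\otimes(\alpha_1)_{[r]}$, and $(\alpha_1)_{[r]}$ is generally mixed (your own Case~2 only shows its purity is $\ge 1-4\epsilon$). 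The paper's fix is exactly to fall back on the small-cut case here: since $\ket{\sigma_t}$ is product across $[2^t\ell]:[2^t\ell+1,k]$, the top Schmidt coefficient of $\ket\sigma$ over that cut is at least $\sqrt{1-8^t\epsilon}$, and then the $m<\ell'$ argument with $\ell'=2^t\ell>m$ gives $\Tr\sigma_{[m]}^2\ge 1-6\cdot 8^t\epsilon=1-O((m/\ell)^3\epsilon)$. You should splice this patch into your Case~3. (A minor aside: ``adjacent blocks of size $\ell$'' should read ``of size $2^{i-1}\ell$''; with fixed size $\ell$ the covered region grows only linearly and the error becomes exponential in $m/\ell$, contradicting your ``doubles the region''.)
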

\begin{proof}
For $\cS =\varnothing$, the statement is trivial as $\sigma$ is pure. Since $\ket\sigma\in\Symm{k}{d}$, without loss of generality, assume that 
$\cS=\{1, 2, \ldots, m\}$ for some $1 \le m \le k/2$. This is because 
$\trace \sigma_\cS^2 = \trace \sigma^2_{\bar\cS}$ when $\sigma$ is a pure state. Let $\ket\phi ^{A_1\ldots A_\ell} \ket\zeta ^{A_{\ell+1}\ldots A_k}$ be the 
top Schmidt component associated with the coefficient $\sqrt{1-\epsilon}$.

\uline{Case 1}: $m < \ell$. Let $\cA = \{1, 2, \ldots, m\}, \cB = \{m+1, \ldots, \ell\}, \cC = \{\ell+1, \ldots, k-\ell+m \}, \cD = \{k-\ell+m+1, \ldots, k\}.$ Write down the Schmidt decomposition of $\ket\phi$ over the $\cA$ and $\cB$ cut, $\ket\zeta$ over the $\cC$ and $\cD$ cut,
\begin{equation}
    \ket\phi = \sum_i \sqrt{\lambda_i} \ket{\alpha_i}\ket{\beta_i}, 
    \qquad
    \ket\zeta = \sum_i \sqrt{\eta_i} \ket{\gamma_i} \ket{\delta_i}.
    \nonumber
\end{equation}
Since $\cB$ and $\cD$ has the same size, and that $\ket\sigma\in\Symm{k}{d}$, we have for the state $\ket\phi\ket\zeta$, if we switch the subsystem of $\cB$ and $\cD$, then the overlap with $\ket\sigma$ is still $1-\epsilon$. Therefore, by \cref{fact:overlap-triangle-ineq}, we have
\begin{align}
    (1-2\epsilon)^2 
    &\le \left|\left\langle
        \sum_{i,j} \sqrt{\lambda_i\eta_j}\bra{\alpha_i}^\cA\bra{\beta_i}^\cB
         \bra{\gamma_j}^\cC\bra{\delta_j}^\cD,
        \sum_{i,j} \sqrt{\lambda_i\eta_j}\ket{\alpha_i}^\cA\ket{\beta_i}^\cD
        \ket{\gamma_j}^\cC\ket{\delta_j}^\cB
    \right\rangle\right|^2
    \nonumber \\
    & = \left(\sum_{i,j} \lambda_i\eta_j |\langle \beta_i \mid \delta_j \rangle|^2\right)^2
    \le  \lambda_1^2 \left(  \sum_{i,j}\eta_j |\langle \beta_i \mid \delta_j \rangle|^2 \right) ^2
    \nonumber\\
    &\le \lambda_1^2 \left(  \sum_{j}\eta_j \sum_i|\langle \beta_i \mid \delta_j \rangle|^2 \right)^2
    \le \lambda_1 ^2.
    \nonumber
\end{align}
Immediately,
\begin{align}
    \trace{\sigma_\cA ^2 } 
    &\ge (1-\epsilon)^2\trace [(\trace_{\cB\cC\cD}  (\phi\otimes\zeta) )^2]
    \nonumber \\
    &= (1-\epsilon)^2\trace \left[\left(\trace_{\cB\cC\cD} \left(
    \sum_i \lambda_i \alpha_i \otimes \beta_i\otimes\zeta
    \right) \right) ^2\right]
    \nonumber \\
    &\ge (1-\epsilon)^2 \lambda_1 ^2 \ge (1-\epsilon)^2 (1-2\epsilon)^2
    \nonumber\\
    &\ge 1-6\epsilon.\label{eq:many-many-case1-corr}
\end{align} 
The first step is true because $\sigma \succeq (1-\epsilon)\phi\otimes \zeta$, therefore  $\trace_{\cB\cC\cD}\sigma \succeq (1-\epsilon)\trace_{\cB\cC\cD}(\phi\otimes \zeta)$ as partial trace is completely positive. It then follows that $\trace\sigma_\cA ^2 \succeq (1-\epsilon)^2\trace(\trace_{\cB\cC\cD}(\phi\otimes\zeta)) ^2$.

\uline{Case 2}: $\ell < m \le k/2$. We are much like the situation of \cref{thm:one_vs_many}. Let $t = \lceil \log (m / \ell) \rceil$. For $i=1$ to $t$, we apply the Duplicate Lemma and obtain a state $\ket{\sigma_i}$, such that for $i=1,2,\ldots, t$
\begin{align}
    &\trace_{\{\ell\cdot 2^i+1,\ldots, k\}}\sigma_i = \phi ^{\otimes 2^i},
    \nonumber \\
    &\trace_{\{\ell\cdot 2^i+1,\ldots, k\}}\sigma_i^2 = 1,
    \label{eq:many-many-pure} \\
    &|\langle \sigma \mid \sigma_i \rangle|^2 \ge 1-8^i \epsilon.
    \label{eq:many-many-case2-corr}
\end{align}
By our choice of parameter, $2^{t-1}\ell < m \le 2^{t}\ell$. If $m=2^t \ell$, then $(\sigma_t)_\cS = \trace_{\ell\cdot2^t+1,\ldots,k}\sigma_t$ is pure by (\ref{eq:many-many-pure}). Then  
\begin{align*}
    \sqrt {\trace\sigma_\cS^2} =  \sqrt{\trace \sigma_\cS ^2 \cdot \trace (\sigma_t)_\cS^2} &\ge \trace(\sigma_\cS\cdot (\sigma_t)_\cS) = F(\sigma_\cS, (\sigma_t)_\cS) \\
    &\ge F(\sigma, \sigma_t)  = |\langle \sigma \mid \sigma_t\rangle|^2 \ge (1-8^{\log(m/\ell)})\epsilon,
\end{align*}
where the first step and third step are true because $(\sigma_t)_\cS$ is pure; the second step uses (\ref{eq:matrix-cauchy-schwarz}); the fourth step is by \cref{fact:data-processing}, the data processing inequality for fidelity; then fifth step is again by purity of the states; and the final step uses (\ref{eq:many-many-case2-corr}). It follows that
\[
    \trace \sigma_\cS^2 \ge 1 - O((m/\ell)^3\epsilon).
\]
If $m<2^t\ell$, then we can apply Case 1. 
Let $\cA= \{1,2,\ldots,2^t\ell\}, \cB = \{2^t\ell+1, \ldots, k\}$. Then in view of (\ref{eq:many-many-case2-corr}), the top Schmidt coefficient of $\ket\sigma$ among the $\cA : \cB$ cut is at least $\sqrt{1-8^t\epsilon}$ by \cref{claim:top-schmidt-coeff}. Thus by (\ref{eq:many-many-case1-corr}),
\[
    \trace \sigma_\cS^2 \ge 1-6\cdot 8^t\epsilon\ge 1-O((m/\ell)^3\epsilon).
\qedhere
\]
\end{proof}

Now \cref{thm:many-many} follows from \cref{cor:sym-prod} and \cref{lem:cut}.

\subsection{Proof of \cref{thm:de-Finnetti-type-1}}
Now we record a version of the slicing de Finetti theorem for the mixture of symmetric states. A natural generalization of the top Schmidt coefficient among some $A:B$ cut for a state $\sigma$ being large is that $\trace{\sigma_A^2} $ being large.
In particular,
\begin{lemma}\label{lem:trace-reduced-den-squared}
Let $\sigma\in\C^n\otimes\C^m$ be some density operator, and $A,B$ are the systems with respect to the space  $\C^n$ and $\C^m$, respectively. Suppose
\begin{equation*}
    \trace \sigma_A^2 \ge 1-\epsilon.
\end{equation*}
Let $\mu$ be some distribution on pure states induced by $\sigma$, then
\begin{align*}
    \Exp_{\rho\sim\mu} \lambda_1(\rho) \ge 1-\epsilon.
\end{align*}
\end{lemma}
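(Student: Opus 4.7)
The plan is a short two-line argument that combines a pointwise inequality (valid for pure states) with convexity of the map $M \mapsto \Tr M^2$ applied to the mixture $\sigma_A = \int \rho_A\, d\mu(\rho)$.

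First I would observe the following pointwise bound. If $\rho = \ket{\phi}\bra{\phi}$ is pure, and $\ket{\phi} = \sum_i \sqrt{\eta_i}\,\ket{\alpha_i}\ket{\beta_i}$ is its Schmidt decomposition across the $A{:}B$ cut, then $\rho_A$ has eigenvalues $\{\eta_i\}$, and $\lambda_1(\rho) = \eta_1$ (recall from \cref{claim:top-schmidt-coeff} that $\lambda_1$ denotes the squared top Schmidt coefficient). Consequently
\[
\Tr \rho_A^2 \;=\; \sum_i \eta_i^2 \;\le\; \eta_1 \sum_i \eta_i \;=\; \lambda_1(\rho).
\]

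Next, I would use Jensen's inequality (equivalently, the matrix fact $\Tr(\E X)^2 \le \E \Tr X^2$, which holds because $M \mapsto \Tr M^2$ is convex on Hermitian matrices). Since $\sigma_A = \E_{\rho \sim \mu}[\rho_A]$, this gives
\[
\E_{\rho \sim \mu}\Tr \rho_A^2 \;\ge\; \Tr\!\left(\E_{\rho \sim \mu} \rho_A\right)^{\!2} \;=\; \Tr \sigma_A^2 \;\ge\; 1-\epsilon.
\]
Chaining with the previous pointwise inequality yields $\E_{\rho \sim \mu} \lambda_1(\rho) \ge \E_{\rho \sim \mu}\Tr \rho_A^2 \ge 1-\epsilon$, as desired.

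There is essentially no obstacle: the only subtlety is making sure $\mu$ is supported on pure states (so that $\lambda_1(\rho) \ge \Tr \rho_A^2$ holds) and verifying the convexity statement for $\Tr M^2$, both of which are standard.
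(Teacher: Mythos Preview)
Your proof is correct and noticeably cleaner than the paper's. The paper establishes the same inequality $\Tr \sigma_A^2 \le \E_{\rho\sim\mu}\lambda_1(\rho)$ by brute force: it expands $\sigma_A = \sum_i \mu(\rho_i)\sum_j \lambda_{ij}\ket{\phi_{ij}}\bra{\phi_{ij}}$ using the Schmidt decomposition of each pure component, writes $\Tr\sigma_A^2$ as a double sum with cross terms $\mu(\rho_i)\mu(\rho_{i'})\lambda_{ij}\lambda_{i'j'}\abs{\braket{\phi_{ij}}{\phi_{i'j'}}}^2$, and then bounds the cross terms and diagonal terms separately until the expression collapses to $\sum_i \mu(\rho_i)\lambda_{i1}$. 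Your argument factors this computation into two standard facts---the pointwise bound $\Tr\rho_A^2 = \sum_i \eta_i^2 \le \eta_1 = \lambda_1(\rho)$ for pure $\rho$, and Jensen's inequality for the convex map $M\mapsto\Tr M^2$---which makes the structure transparent and avoids all the index bookkeeping. The paper's direct expansion does not yield anything extra; your route is simply a more conceptual packaging of the same content.
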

\begin{proof}
Let $m=|\supp\mu|$ be a finite number, this is without loss of generality.  Let $\rho_1, \rho_2,\ldots, \rho_m$ be the pure states in $\supp\mu$. Further, write the Schmidt decomposition for each $\rho_i$
\begin{align*}
    \ket{\rho_i} = \sum_{j} \sqrt{\lambda_{ij}}\ket{\phi_{ij}}^{A}\ket{\sigma_{ij}}^{B},\qquad \lambda_{i1} \ge \lambda_{i2}\ge\cdots.
\end{align*}
Then
\begin{align*}
    \sigma_A = \sum_i \mu(\rho_i) \sum_j \lambda_{ij} \ket{\phi_{ij}}\bra{\phi_{ij}}.
\end{align*}
Thus,
\begin{align}
    \trace{\sigma_{A}^2} 
    &= \sum_i \mu(\rho_i)^2 \sum_j \lambda_{ij}^2 +    \sum_{i\not=i'}\mu(\rho_i)\mu(\rho_{i'})\sum_{j,j'}\lambda_{ij}\lambda_{i'j'}|\langle \phi_{ij} \mid \phi_{i'j'}\rangle|^2
    \nonumber\\
    &\le \sum_i \mu(\rho_i)^2 \sum_j \lambda_{ij}^2 + \sum_{i\not=i'}\mu(\rho_i)\mu(\rho_{i'})\lambda_{i1}\sum_{j,j'}\lambda_{i'j'}|\langle \phi_{ij} \mid \phi_{i'j'}\rangle|^2
    \nonumber\\
    &\le \sum_i \mu(\rho_i)^2 \sum_j \lambda_{ij}^2 + \sum_{i\not=i'}\mu(\rho_i)\mu(\rho_{i'})\lambda_{i1}\sum_{j'}\lambda_{i'j'}
    \nonumber\\
    &\le \sum_i \mu(\rho_i)^2  \sum_j \lambda_{ij}^2 + \sum_{i\not=i'}\mu(\rho_i)\mu(\rho_{i'}) \lambda_{i1}
    \nonumber\\
    &=\sum_i \mu(\rho_i)^2\sum_j \lambda_{ij}^2 + \sum_i \mu(\rho_i)(1-\mu(\rho_i))\lambda_{i1}
    \nonumber\\
    &\le \sum_i \mu(\rho_i)^2 \lambda_{i1} + \sum_i \mu(\rho_i)(1-\mu(\rho_i))\lambda_{i1}
    \nonumber\\
    &= \sum_i \mu(\rho_i)\lambda_{i1},
    \nonumber
\end{align}
where the third step holds because for fixed $i,i',j'$,  $\sum_{j} |\langle\phi_{ij}\mid \phi_{i'j'}\rangle|^2\le 1.$
\end{proof}

\begin{theorem}\label{thm:conv-symm-de-Finetti}
Given density operator $\sigma^{A_1\ldots A_k}$ that describes states from $\conv{\Symm{k}{d}}$. 
For any $1\le \ell\le k/2$ and $\cA=\{A_1,A_2,\cdots, A_\ell\}$, there is some distribution $\mu$ on $\ket\phi\in\C^d$,
\begin{align}
    \left\|
        \sigma - \int \ket\phi\bra\phi^{\otimes k} d\mu
    \right\|_1 \le O\left(\sqrt{(k/\ell)^3(1-\trace{\sigma_{\cA}^2}) }\right).
\end{align}

\end{theorem}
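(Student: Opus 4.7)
The plan is to reduce the mixed-state statement to the pure-state many-versus-many slicing de Finetti theorem (Theorem for $\ket\sigma \in \vee^k(\C^d)$) via an ensemble decomposition, and then control the ``quality'' of the approximation on average using Lemma~\ref{lem:trace-reduced-den-squared} together with Jensen's inequality.

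Since $\sigma \in \conv{\Symm{k}{d}}$, we may write $\sigma = \int \ket{\rho}\bra{\rho}\, d\nu(\rho)$ for some distribution $\nu$ supported on pure symmetric states $\ket{\rho}\in\Symm{k}{d}$. For each such $\ket{\rho}$, let $\lambda_1(\rho_\cA)$ denote the largest eigenvalue of the reduced density operator on $\cA$, equivalently, the square of the top Schmidt coefficient of $\ket\rho$ across the cut $\cA:\bar\cA$. Applying the many-versus-many slicing de Finetti theorem (Theorem~\ref{thm:many-many}) to $\ket\rho$ with $\epsilon_\rho := 1 - \lambda_1(\rho_\cA)$ yields a pure state $\ket{\phi_\rho}\in\C^d$ with
\[
  \abs{\braket{\rho}{\phi_\rho^{\otimes k}}}^2 \ge 1 - C(k/\ell)^3\, \epsilon_\rho
\]
for some absolute constant $C$. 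By \cref{fact:trace-innerproduct}, this translates into $\|\ket\rho\bra\rho - \ket{\phi_\rho}\bra{\phi_\rho}^{\otimes k}\|_1 \le 2\sqrt{C(k/\ell)^3\, \epsilon_\rho}$.

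Define $\mu$ as the pushforward of $\nu$ under $\rho \mapsto \phi_\rho$. By the triangle inequality for the trace norm applied inside the integral,
\[
  \left\| \sigma - \int \ket\phi\bra\phi^{\otimes k}\, d\mu \right\|_1
  \;\le\; \int \bigl\| \ket\rho\bra\rho - \ket{\phi_\rho}\bra{\phi_\rho}^{\otimes k} \bigr\|_1 d\nu(\rho)
  \;\le\; 2\sqrt{C(k/\ell)^3}\, \int \sqrt{\epsilon_\rho}\, d\nu(\rho).
\]
By Jensen's inequality (concavity of $\sqrt{\cdot}$), $\int \sqrt{\epsilon_\rho}\, d\nu \le \sqrt{\int \epsilon_\rho\, d\nu}$, and Lemma~\ref{lem:trace-reduced-den-squared} applied to the ensemble $\nu$ gives $\int \lambda_1(\rho_\cA)\, d\nu(\rho) \ge \Tr \sigma_\cA^2$, whence $\int \epsilon_\rho\, d\nu \le 1 - \Tr \sigma_\cA^2$. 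Combining these estimates yields the claimed bound $O(\sqrt{(k/\ell)^3(1-\Tr\sigma_\cA^2)})$.

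The main ingredient doing the real work is Theorem~\ref{thm:many-many}, which was already established; the only thing to be a bit careful about here is that the right ``closeness'' parameter to integrate out is $\sqrt{\epsilon_\rho}$ rather than $\epsilon_\rho$, so that Jensen's inequality loses a square root and not more. A slightly subtle point is that different $\ket\rho$'s in the decomposition may yield different $\ket{\phi_\rho}$'s; we are not committing to a single $\ket\phi$ but building the output distribution $\mu$ as the ensemble of the $\ket{\phi_\rho}$'s, which is exactly what the statement allows. Beyond that, the argument is essentially a routine averaging on top of the pure-state slicing de Finetti theorem.
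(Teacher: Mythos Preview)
Your proposal is correct and follows essentially the same approach as the paper: decompose $\sigma$ into pure symmetric states, apply the many-versus-many slicing de Finetti (\cref{thm:many-many}) to each, convert overlap to trace distance via \cref{fact:trace-innerproduct}, and average using \cref{lem:trace-reduced-den-squared} together with the triangle inequality. The paper's proof is a one-line pointer to exactly these ingredients; you have simply spelled out the Jensen step that makes the averaging go through.
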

\begin{proof}
Let $\mu$ be the distribution on pure symmetric states induced by $\sigma$. Let $\trace\sigma_{\cA}^2=1-\epsilon$. The theorem follows immediately by combining \cref{fact:trace-innerproduct}, \cref{lem:trace-reduced-den-squared}, \cref{thm:many-many}, and triangle inequality.
\end{proof}

\section{A Framework: Multiplexing Unentangled States for Property Testing}\label{sec:prop_testing}

In this section, we present a general template illustrating the utility of our disentangler~\cref{thm:channel}. 
We will then use this template multiple of times. Initially, we provide two examples as warm-ups for what is to come. Subsequently, in later sections, we apply this template with carefully designed testers to obtain new complexity results.

Our disentangler leverages a bipartite unentanglement assumption between two states of the form
$\rho_1 \otimes \rho_2$ into an (approximate) multipartite unentanglement assumption of the form $\int \ket\psi\bra\psi^{\otimes k}d\mu$.
Having sufficiently many unentangled copies of a state $\psi$ is particularly important in the context of quantum property
testing as some properties require this assumption for testability. Indeed, many of other information processing
tasks like quantum state tomography often assumes the input is of this form $\ket \psi\bra\psi^{\otimes k}$. Moreover, multiple copies
allow the tester to be executed multiple times amplifying its probability of distinguishing the closeness to
the desired property. Finally, a property tester may end up destroying the copies $\psi^{\otimes k}$ when it measures this state,
so it is desirable to have additional copies that can be used in further information processing tasks once the closeness to
the desired property is certified. In~\cref{fig:dis_and_prop_test}, we provide an illustration of a property
tester being used in conjunction with our disentangler in order to obtain the aforementioned benefits.

\usetikzlibrary{positioning, decorations.pathreplacing, arrows.meta, calc, fit}

\begin{figure}[h]
  \captionsetup{width=.8\linewidth}
  \centering
\begin{tikzpicture}[>=Stealth, scale=0.9]

    \tikzstyle{input} = [draw, minimum width=3.5cm, minimum height=0.6cm, align=center]
    \tikzstyle{output} = [draw, minimum width=0.6cm, minimum height=0.6cm, align=center, inner sep=2pt]
    \tikzstyle{operation} = [draw, minimum width=4.5cm, minimum height=3cm, align=center, rounded corners=2mm]
    \tikzstyle{arrow} = [->, thick]
    \tikzstyle{tester} = [draw, minimum width=2cm, minimum height=1cm, align=center, rounded corners=2mm]

    \node[input] (rho1) {$\rho_1$};
    \node[input, below=of rho1] (rho2) {$\rho_2$};

    \node[operation] at (5,-0.5) (disentangler) {Disentangler};

    \node[output, right=1cm of disentangler] (psi1) {$\psi$};
    \node[right=0.15cm of psi1] (dots1) {$\cdots$};
    \node[output, right=0.15cm of dots1] (psik) {$\psi$};
    
    \node[output, right=0.5cm of psik] (psi1_2) {$\psi$};
    \node[right=0.15cm of psi1_2] (dots2) {$\cdots$};
    \node[output, right=0.15cm of dots2] (psik_2) {$\psi$};
    
    \node[tester, above=0.75cm of psi1, xshift=1cm] (tester) {Property Tester};
    \draw[arrow] (tester) -- (psi1);
    \draw[arrow] (tester) -- (psik);

    \draw[arrow] (rho1) -- (disentangler);
    \draw[arrow] (rho2) -- (disentangler);

    \draw[arrow] (disentangler) -- (psi1);    
\end{tikzpicture}
  \caption{Schematic picture of our disentangler being used to (approximately) ensure multiple unentangled copies of a state as output.
            Part of these copies are used to test a given desired property. If the test passes, the remaining ``certified'' copies can be
            used in further information processing tasks.}\label{fig:dis_and_prop_test}
\end{figure}
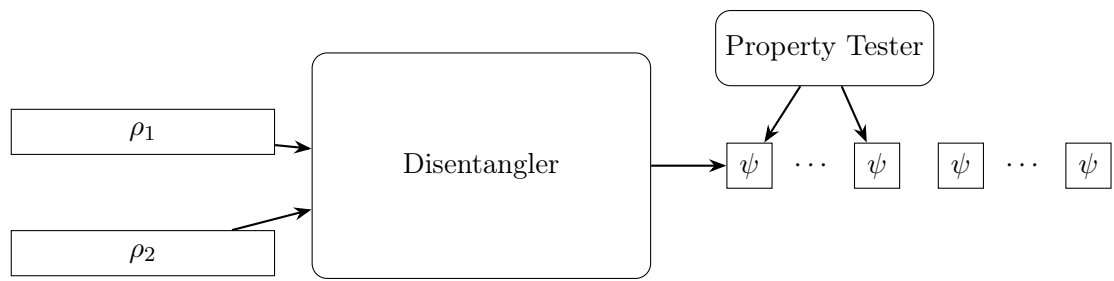

\paragraph{Product Tester and Preparing Multipartite Separable States}
To make this illustration more concrete, first we consider a  scenario where the tester is the product test~\cite{HM13}. More precisely, the product test requires two unentangled copies of $\ket{\psi} \in \C^d$ and checks whether
$\ket\psi$ is close to a  product state of the form $\ket{\phi_1}\otimes \cdots \otimes \ket{\phi_s} \in \C^{d_1} \otimes \cdots \otimes \C^{d_s}$,
where $d=d_1\cdots d_s$. 
For context, 
recall that (an abridged version of) their main result provides the following guarantees for this tester.

\begin{theorem}[Product Test~\cite{HM13}]\label{theo:prod_test}
  Given $\ket{\psi} \in \C^{d_1} \otimes \cdots \otimes \C^{d_s}$, let
  \[
  1-\epsilon = \max \left\{ \abs{\braket{\psi}{\phi_1,\ldots,\phi_s}}^2 \colon \ket{\phi_i} \in \C^{d_i}, 1 \le i \le s  \right\}\mper
  \]
  Let $P_{\text{test}}(\ket{\psi}\bra{\psi})$ be the probability that the product test passes when applied to $\ket{\psi}$.
  Then, we have $P_{\text{test}}(\ket{\psi}\bra{\psi}) = 1 - \Theta(\epsilon)$.
\end{theorem}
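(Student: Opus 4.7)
The plan is to reduce the theorem to a clean closed-form expression for the product test's acceptance probability, and then read off both directions of $P_{\text{test}} = 1 - \Theta(\epsilon)$. The key first step is the following formula: because each of the $s$ parallel swap tests on $(A_i, B_i)$ equals the projector $(I + F_i)/2$ onto the symmetric subspace of that pair, the joint acceptance projector on $\ket{\psi}\otimes\ket{\psi}$ is $2^{-s}\sum_{S\subseteq[s]} F_S$, where $F_S = \prod_{i\in S} F_i$ swaps the subsystems indexed by $S$ between the two copies. A short Schmidt-decomposition argument across the cut $S : \bar S$ gives the identity $\bra{\psi}\bra{\psi} F_S \ket{\psi}\ket{\psi} = \Tr(\rho_S^2)$, so
\[
P_{\text{test}}(\ket{\psi}\bra{\psi}) \;=\; \E_{S\subseteq[s]}\bigl[\Tr(\rho_S^2)\bigr],
\]
where $\rho_S = \Tr_{\bar S}(\ket{\psi}\bra{\psi})$.

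With this formula in hand, the completeness direction $P_{\text{test}}\ge 1-O(\epsilon)$ follows from a short continuity argument. Let $\ket{\phi} = \ket{\phi_1}\otimes\cdots\otimes\ket{\phi_s}$ be a maximizer, so $\abs{\braket{\psi}{\phi}}^2 = 1-\epsilon$. Since $\phi_S$ is a pure state for every $S$, extending $\ket{\phi_{\bar S}}$ to a basis of $\bar S$ immediately yields $\bra{\phi_S}\rho_S\ket{\phi_S}\ge \abs{\braket{\psi}{\phi}}^2 = 1-\epsilon$. Writing $\rho_S = \phi_S + \Delta$, one computes $\Tr(\rho_S^2) = 1 + 2\Tr(\phi_S\Delta) + \Tr(\Delta^2) \ge 1 - 2\epsilon$ uniformly in $S$, and averaging over $S$ delivers $P_{\text{test}}\ge 1 - 2\epsilon$.

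The soundness direction $P_{\text{test}}\le 1 - \Omega(\epsilon)$ is where the nontrivial content lies, and for this I would invoke the Harrow--Montanaro / Soleimanifar--Wright lemma already stated in the excerpt (the unnumbered lemma preceding \cref{cor:sym-prod}): whenever $\E_{S\subseteq[s]}[\Tr(\rho_S^2)] \ge 1-\eta$, there exists a product state $\ket{\phi}$ with $\abs{\braket{\psi}{\phi}}^2 \ge 1-3\eta$. Setting $\eta := 1 - P_{\text{test}}$ and using the maximality of $1-\epsilon$ gives $1-\epsilon \ge 1-3(1-P_{\text{test}})$, i.e., $P_{\text{test}}\le 1-\epsilon/3$. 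Combined with the previous paragraph this gives $P_{\text{test}} = 1 - \Theta(\epsilon)$.

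The main obstacle is precisely this soundness step---extracting a global product witness from the averaged random-subset purity condition---but since the excerpt records this as a black-box lemma, the entire proof reduces to the closed-form identity of the first paragraph, followed by two essentially one-line deductions.
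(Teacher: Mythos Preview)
This theorem is not proved in the paper; it is quoted from Harrow--Montanaro~\cite{HM13} and used as a black box in~\cref{sec:prop_testing}. So there is no in-paper proof to compare against.

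That said, your proof is correct. The identity $P_{\text{test}} = \E_{S\subseteq[s]}[\Tr(\rho_S^2)]$ is the standard starting point; the completeness bound $P_{\text{test}}\ge 1-2\epsilon$ follows exactly as you describe (the inequality $\bra{\phi_S}\rho_S\ket{\phi_S}\ge 1-\epsilon$ and the decomposition $\rho_S = \phi_S + \Delta$ are both clean); and for soundness you correctly recognize that the unnumbered lemma preceding~\cref{cor:sym-prod}---which the paper itself attributes to~\cite{HM13} and Soleimanifar--Wright---does all the work, yielding $P_{\text{test}}\le 1-\epsilon/3$. In effect you have supplied a self-contained proof by stitching together pieces the paper already records, where the paper was content to cite the result.
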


Combining our disentangler from~\cref{thm:channel} and the product test from~\cref{theo:prod_test}, we obtain the
following corollary giving all the desired qualities alluded above in a more quantitative way.

\begin{corollary} \label{cor:separable-states}
  Let $\calH = \C^{d_1} \otimes \cdots \otimes \C^{d_s}$. For every $k,k',\ell \in \N$ such that $\ell\ge k+2k'$, there is a channel $\Gamma \colon \calD( \calH^{\otimes \ell} \otimes \calH^{\otimes \ell}) \to
  \calD(\calH^{\otimes k} \otimes \C^2)$ such that for every $\rho_1,\rho_2 \in \calD( \calH^{\otimes \ell})$, there exists $\sigma \in \calD(\calH^{\otimes k} \otimes \C^2)$
  defined as
  \begin{align*}
    \sigma = \int \ket{\psi}\bra{\psi}^{\otimes k} \otimes \left( P_{\text{test}}(\ket{\psi}\bra{\psi})^{k'}  \ket{1}\bra{1} +  (1-P_{\text{test}}(\ket{\psi}\bra{\psi})^{k'}) \ket{0}\bra{0} \right) d\mu\mcom
  \end{align*}
  such that
  \begin{align*}
    \norm{\Gamma(\rho_1 \otimes \rho_2) -  \sigma}_1 \le \tilde O\left(\left(\frac{(k+2k')^3}{\ell}\right)^{1/4}\right)\mper
  \end{align*}
  Furthermore, $\Gamma(\rho_1 \otimes \rho_2) = (\ket{\psi}\bra{\psi})^{\otimes k} \otimes \ket{1}\bra{1}$ provided
  $\rho_1 = \rho_2 = (\ket{\psi}\bra{\psi})^{\otimes \ell}$, where
  $\ket{\psi} = \ket{\phi_1}\otimes \ldots \otimes \ket{\phi_s}$ for some $\ket{\phi_i} \in \C^{d_i}$ for $1 \le i \le s$.
\end{corollary}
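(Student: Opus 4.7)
The plan is to compose the disentangler of~\cref{thm:channel} with $k'$ independent parallel runs of the product test from~\cref{theo:prod_test} on designated ``test slots'' produced by the disentangler, recording whether all $k'$ tests accept into a single classical qubit register. Concretely, I would first invoke~\cref{thm:channel} with its output parameter set to $k+2k'$ (which is legitimate since by hypothesis $\ell \ge k+2k'$), obtaining an efficient channel
\[
\Lambda \colon \calD(\calH^{\otimes \ell} \otimes \calH^{\otimes \ell}) \to \calD(\calH^{\otimes (k+2k')})
\]
for which there is a distribution $\mu$ on pure states $\ket{\psi} \in \calH$ satisfying
\[
\left\lVert \Lambda(\rho_1 \otimes \rho_2) - \int \ket{\psi}\bra{\psi}^{\otimes (k+2k')} d\mu \right\rVert_1 \le \tilde O\left(\left(\frac{(k+2k')^3}{\ell}\right)^{1/4}\right).
\]
I would then designate the first $k$ of these output slots as the certified output copies, partition the remaining $2k'$ slots into $k'$ disjoint pairs, run the product test on each pair, AND the $k'$ classical outcomes into a single bit $b$, and define $\Gamma(\rho_1 \otimes \rho_2)$ to be the tensor product of the $k$ untouched slots with $\ket{b}\bra{b}$.

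The analysis splits into two parts. For the ``furthermore'' clause, when $\rho_1 = \rho_2 = (\ket{\psi}\bra{\psi})^{\otimes \ell}$ with $\ket{\psi} = \ket{\phi_1}\otimes\cdots\otimes\ket{\phi_s}$ a genuine product state, the corresponding clause of~\cref{thm:channel} produces $(\ket{\psi}\bra{\psi})^{\otimes (k+2k')}$ exactly; genuine product states are accepted by the product test with certainty, so every one of the $k'$ independent tests accepts and $\Gamma$ outputs $(\ket{\psi}\bra{\psi})^{\otimes k} \otimes \ket{1}\bra{1}$. For the general bound, I would apply the same ``test and AND'' channel to the idealized state $\int \ket{\psi}\bra{\psi}^{\otimes (k+2k')} d\mu$ and verify by linearity that the result is precisely the claimed $\sigma$: on each pure slice $\ket{\psi}\bra{\psi}^{\otimes (k+2k')}$ the first $k$ factors are untouched, and because the $k'$ tests act on disjoint pairs of the remaining $2k'$ factors they are conditionally independent given $\psi$ and each accepts with probability $P_{\text{test}}(\ket{\psi}\bra{\psi})$, yielding the factor $P_{\text{test}}(\ket{\psi}\bra{\psi})^{k'}$. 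Monotonicity of the trace norm under CPTP maps then transports the disentangler's error bound to $\Gamma(\rho_1 \otimes \rho_2)$ verbatim.

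The construction is essentially routine once~\cref{thm:channel} is in hand; the only point that requires a small amount of care is arranging the $k'$ product tests on \emph{disjoint} pairs of slots so that their accept/reject outcomes are genuinely conditionally independent on each pure slice, which is exactly what produces the multiplicative $P_{\text{test}}(\ket{\psi}\bra{\psi})^{k'}$ factor in $\sigma$. The trace-distance bookkeeping itself is immediate since the product test and the AND gate are both CPTP operations acting only on the designated slots, so I expect no real obstacle beyond verifying these composition and bookkeeping details.
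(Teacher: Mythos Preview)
Your proposal is correct and matches the paper's proof essentially line for line: compose the disentangler $\Lambda$ (with output parameter $k+2k'$) with a channel $\Gamma'$ that leaves the first $k$ registers untouched, runs the product test on each of the $k'$ disjoint pairs of the remaining $2k'$ registers, and records the AND of the outcomes in a qubit; then verify the ``furthermore'' clause directly and obtain the trace-distance bound by applying $\Gamma'$ to the idealized mixture and invoking monotonicity of the trace norm under CPTP maps. The paper's argument is identical in structure and detail.
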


\begin{proof}
  Define another channel $\Gamma' \colon \calD(\calH^{\otimes( k+2k')}) \to \calD(\calH^{\otimes k} \otimes \C^2)$ that takes as input
  the output of the disentangler $\Lambda$ which is comprised of $k+2k'$ registers of the space $\calH$. 
  We define the channel $\Gamma'$ to act as identity on the first $k$ registers. On the last $2k'$ registers it performs the product test
  on each pair of registers, outputting a single qubit $\ket{1}\bra{1}$ if all tests pass, otherwise outputting $\ket{0}\bra{0}$.
  Next we show $\Gamma = \Gamma' \circ \Lambda$, the composed channel, satisfies the statement.

  Given general input $\rho_1 \otimes \rho_2$, by the guarantee of our disentangler, $\Lambda(\rho_1 \otimes \rho_2)$ satisfies
  \begin{align*}
    \norm{\Lambda(\rho_1 \otimes \rho_2) - \int \ket{\psi}\bra{\psi}^{\otimes k+2k'} d\mu}_1 \le \tilde O\left(\left(\frac{(k+2k')^3}{\ell}\right)^{1/4}\right)\mper
  \end{align*}
  Note that $\Gamma'$ applied to $\int \ket{\psi}\bra{\psi}^{\otimes k+2k'} d\mu$ results in
  \begin{equation}\label{eq:comp_output}
    \int \ket{\psi}\bra{\psi}^{\otimes k} \otimes \left( P_{\text{test}}(\ket{\psi}\bra{\psi})^{k'}  \ket{1}\bra{1} +  (1-P_{\text{test}}(\ket{\psi}\bra{\psi})^{k'}) \ket{0}\bra{0} \right) d\mu \mper
  \end{equation}
  Thus, the composed channel output $\Gamma(\rho_1 \otimes \rho_2))$ is $\tilde O(((k+2k')^3/\ell)^{1/4})$ close, in trace distance, to the state of~(\ref{eq:comp_output}).

  The furthermore part is straightforward. Suppose that $\ket{\psi} = \ket{\phi_1}\otimes \ldots \otimes \ket{\phi_s}$, where $\ket{\phi_i} \in \C^{d_i}$ for $1 \le i \le s$,
  and $\rho_1 = \rho_2 = (\ket{\psi}\bra{\psi})^{\otimes \ell}$. In this case, $\Lambda(\rho_1 \otimes \rho_2) = (\ket{\psi}\bra{\psi})^{\otimes k+2k'}$
  and $\Gamma'(\Lambda(\rho_1 \otimes \rho_2)) = (\ket{\psi}\bra{\psi})^{\otimes k} \otimes \ket{1}\bra{1}$ since $\ket{\psi}$ is a product state
  and product test accepts with probability $1$.
\end{proof}

\paragraph{$\QMA(2)$ Tester --- Gap Amplification for $\QMA(2)$}
The gap amplification of $\QMA(2)$ was first proved in the seminar work of Harrow and Montanaro~\cite{HM13}. Using our template, we provide a conceptually more straightforward proof: Take the old $\QMA(2)$ protocol as the property tester in \cref{fig:dis_and_prop_test}.

\begin{theorem}
Given a language $\calL=(\calL_{\mathrm{yes}}, \calL_{\mathrm{no}})$. Suppose that $\calL \in \QMA(2)$ with completeness $c$ and soundness $s$, where $c-s>1/\poly(n)$. Then, $\calL\in\QMA(2)$ with completeness $c'=1-\exp(-\poly(n))$ and soundness $s'=1/\poly(n)$.
\end{theorem}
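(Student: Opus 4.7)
The plan is to instantiate the template of \cref{fig:dis_and_prop_test} with the original $\QMA(2)$ verifier itself playing the role of the downstream consumer of the copies produced by the disentangler. Let $V$ be the assumed verifier for $\calL$ with completeness $c$ and soundness $s$ (where $c-s \ge 1/\poly(n)$), taking two unentangled proofs in $\C^{d_1}$ and $\C^{d_2}$, and set $d := d_1 d_2$. My new verifier $V'$ will ask for two unentangled proofs $\rho_1, \rho_2 \in (\C^d)^{\otimes \ell}$ with a suitable $\ell = \poly(n)$, apply the disentangler $\Lambda$ of \cref{thm:channel} to obtain $k+2k'$ registers in $\C^d \cong \C^{d_1} \otimes \C^{d_2}$, run $k'$ independent product tests (\cref{theo:prod_test}) on $k'$ pairs of these registers across the cut $\C^{d_1}:\C^{d_2}$ and reject if any fails, and finally run $V$ independently on each of the remaining $k$ registers, accepting iff the fraction of accepting copies exceeds the threshold $(c+s)/2$.

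For completeness, when $x\in\calL_{\text{yes}}$ the honest prover sends $\rho_1 = \rho_2 = (\ket{\phi_1,\phi_2}\bra{\phi_1,\phi_2})^{\otimes \ell}$ for any good $V$-proof $\ket{\phi_1,\phi_2}$. The ``furthermore'' clause of \cref{thm:channel} guarantees $\Lambda$ outputs $(\ket{\phi_1,\phi_2}\bra{\phi_1,\phi_2})^{\otimes(k+2k')}$ exactly; every product test then passes with certainty, and each of the $k$ independent $V$-runs accepts with probability $c$. A Chernoff bound yields acceptance probability at least $1-\exp(-\Omega(k(c-s)^2))$, which is $1-\exp(-\poly(n))$ once $k = \poly(n)/(c-s)^2$.

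For soundness, for arbitrary $\rho_1,\rho_2$, \cref{thm:channel} places the disentangler's output within trace distance $\tilde O(((k+2k')^3/\ell)^{1/4})$ of some $\int \ket\psi\bra\psi^{\otimes(k+2k')}\,d\mu$, and by \cref{fact:trace_norm_acc} it suffices to bound the acceptance probability on this mixture. I will split the $\mu$-integral at a threshold $\epsilon^* := \Theta((c-s)^2)$ on $\epsilon(\psi) := 1-\max_{\ket{\phi_1},\ket{\phi_2}}\abs{\braket{\psi}{\phi_1,\phi_2}}^2$. When $\epsilon(\psi) \ge \epsilon^*$, \cref{theo:prod_test} yields single product-test rejection probability $\Omega(\epsilon^*)$, so all $k'$ tests pass with probability $\exp(-\Omega(k'\epsilon^*))$. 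When $\epsilon(\psi) < \epsilon^*$, the state $\ket\psi$ is within trace distance $\sqrt{\epsilon^*}$ of its best product approximation $\ket{\phi_1,\phi_2}$ by \cref{fact:trace-innerproduct}, so by \cref{fact:trace_norm_acc} each $V$-run accepts $\ket\psi\bra\psi$ with probability at most $s+O(\sqrt{\epsilon^*}) \le s+(c-s)/4$, and a second Chernoff bound bounds the probability that the empirical accept fraction reaches $(c+s)/2$ by $\exp(-\Omega(k(c-s)^2))$.

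Choosing $k, k' = \poly(n)/(c-s)^2$ and $\ell = \poly(n)\cdot(k+k')^3$, all polynomial in $n$, drives each of the three error contributions (disentangler approximation, near-product filtering, Chernoff on $V$-runs) below any prescribed $1/\poly(n)$, yielding the claimed bounds. The main delicacy will be the second case of the soundness analysis: the original guarantee on $V$ is only for genuine product inputs, so I must invoke \cref{fact:trace_norm_acc} to control its behaviour on merely near-product states, which forces $\epsilon^*$ to be small against $c-s$ and in turn forces the first case to run many product tests. Once these three parameters are balanced, the rest is routine bookkeeping.
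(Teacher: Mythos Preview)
Your proposal is correct, but the paper takes a cleaner route that avoids the product test and the case split on $\epsilon(\psi)$ entirely. Instead of working in $\C^{d_1}\otimes\C^{d_2}$ and using product tests to certify near-productness, the paper appends a single flag qubit and works in $\calH' = \C^2 \otimes \calH$: the honest prover sends copies of $(\ket{0,\sigma}+\ket{1,\rho})/\sqrt 2$, and the new measurement is $M' = \ket{01}\bra{01}\otimes M$ applied to pairs of copies. The point is that \emph{any} $\ket\psi\in\calH'$ decomposes as $\alpha\ket{0,\sigma}+\beta\ket{1,\rho}$, so projecting the flags of $\ket\psi^{\otimes 2}$ onto $\ket{01}$ automatically hands $M$ a genuine product state $\sigma\otimes\rho$ with weight $|\alpha\beta|^2\le 1/4$; soundness is then simply $\Tr(M'\psi^{\otimes 2})\le s/4$ with no near-product error term to manage. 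Your approach, by contrast, is a direct instantiation of \cref{cor:separable-states} and is arguably more modular (it would work with any property tester in place of the product test), but it pays for that generality with the two-case analysis and the need to balance $\epsilon^*$ against both the product-test rejection rate and the $\sqrt{\epsilon^*}$ leakage into $V$'s soundness. Both yield the same parameters; the paper's flag-qubit trick just gets there with less bookkeeping.
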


\begin{proof}
Let $\calP$ be the protocol for $\calL$ with the promised completeness $c$
and soundness $s$. Therefore, for any fixed input $x$ there is a measurement $M$ acting on a space $\calH^{\otimes 2}$ where $\calH=\C^{d}$, such that,
\begin{align}
        &\exists \sigma\otimes \rho \in \calD(\calH^{\otimes 2}), \; 
            \trace(M (\sigma\otimes\rho) ) \ge c,
                &\text{if } x\in \calL_{\mathrm{yes}}
        \nonumber \\
        &\forall \sigma\otimes\rho \in \calD(\calH^{\otimes 2}), \; 
            \trace(M (\sigma\otimes\rho) ) \le s,
                &\text{if } x\in \calL_{\mathrm{no}.}
        \nonumber
\end{align}
In the new protocol, choose $k=\poly(n)/(c-s)^2$ and $\ell=\poly(k)$ for some large enough polynomial. We ask for two proofs $\ket{\rho_1}, \ket{\rho_2} \in \calD(\calH'^{\otimes\ell}),$ where $\calH'=\C^2 \otimes \calH$. In words, $\calH'$ is $\calH$ with one extra qubit. Apply the disentangler $\Lambda$ from~\cref{thm:channel} on $\rho_1\otimes\rho_2$, obtaining a separable state $\phi=\int d\mu \ket\psi\bra\psi ^{\otimes k}$, such that
\begin{equation}
    \left\|\Lambda(\rho_1\otimes\rho_2)-\int d\mu \ket\psi\bra\psi ^{\otimes k}\right\|_1 = \frac{1}{\poly(n)}.\label{eq:gap-amp-error}
\end{equation}
Consider the new measurement $M'=|01\rangle\langle01|\otimes M$. We apply $M'^{\otimes (k/2)}$ to $\Lambda(\rho_1\otimes\rho_2)$. Accept if more than $(c+s)/2$ fraction of the applications of $M'$ accepts; reject otherwise. Next, we calculate the completeness and soundness of the new protocol.

\emph{Completeness.} Suppose that $x\in\calL_{\mathrm{yes}}$, then the faithful prover will provide
\[
    \ket{\rho_1}=\ket{\rho_2}=\left(\frac{|0,\sigma\rangle+|1,\rho\rangle}{\sqrt{2}}\right)^{\otimes \ell}, \text{ and } \Lambda(\rho_1\otimes\rho_2)=\left(\frac{|0,\sigma\rangle+|1,\rho\rangle}{\sqrt{2}}\right)^{\otimes k}.
\]
Calculating the probability that $M'$ accepts $(|0,\sigma\rangle+|1,\rho\rangle)^{\otimes 2}/2$,
\[
    \trace\left( M' \left(\frac{|0,\sigma\rangle+|1,\rho\rangle}{\sqrt2}\right)^{\otimes 2}
    \right) = \frac{1}{4}\trace( M (\sigma\otimes \rho))\ge c/4.
\]
By Chernoff bound, with probability at least $1-\exp(-\Omega((c-s)^2 k))=1-\exp(-\poly(n))$, the new protocol accepts.

\emph{Soundness.} Suppose that $x\in\calL_{\mathrm{no}}$. 
Calculating the probability that $M'$ accepts $(\alpha|0,\sigma\rangle+\beta|1,\rho\rangle)^{\otimes 2}$ for arbitrary $\alpha,\beta \in \C$ and arbitrary $\sigma, \rho\in \calH$ such that $|\alpha|^2+|\beta|^2=1$,
\[
    \trace( M' (\alpha|0,\sigma\rangle+\beta|1,\rho\rangle)^{\otimes 2}
     =|\alpha\beta|^2\trace( M (\sigma\otimes \rho))\le s/4.
\]
Therefore the probability to accept $\phi$, an arbitrary convex combination of $\ket\psi^{\otimes k}$ is at most $\exp(-\Omega((c-s)^2 k))$ by Chernoff bound. Finally, by (\ref{eq:gap-amp-error}), the probability of accepting $\Lambda(\rho_1\otimes\rho_2)$ is at most $1/\poly(n).$
\end{proof}

\section{The Super Swap and Super Product Tests}\label{sec:super_swap}
In this section, we take another look at the product test as well as the swap test, considering one of the strongest possible generalization of the two.

We start with the more elementary swap test, which is a widely used to test if two quantum states, say $\ket{\psi}$ and $\ket{\phi}$, are equal.
One fundamental limitation of the swap test is that it always accepts with probability at least $1/2$ even if
the states are orthogonal. More precisely, its acceptance probability is $(1+\abs{\braket{\psi}{\phi}}^2)/2$.
Ideally, it would be much more useful to have a test with acceptance probability of $\abs{\braket{\psi}{\phi}}^2$, which is impossible with only one copy for each state.
In the presence of many unentangled copies of $\ket{\phi}$ but just a single copy of $\ket{\psi}$, we show that it
is possible to approach this goal with an arbitrarily small error overcoming the inherent limitation of the swap test. Therefore, we call this test the \emph{super swap} test
and we provide a description of it in~\cref{proc:super_swap}. In particular, this super swap test can be useful
when it is difficult to produce a state $\ket{\psi}$, but much easier to produce copies of $\ket{\phi}$
and we want the tester's acceptance probability to more accurately capture how close $\ket{\psi}$ is to $\ket{\phi}$.
In~\cref{sec:protocol}, the special state $\ket{\psi}$ will be a nonnegative amplitudes state which has a greater cost in the
context of complexity protocols there, whereas $\ket{\phi}$ will have general amplitudes being a cheaper resource in that context.  

\begin{algorithm}{SuperSwap}{($\ket{\psi},\ket{\phi}^{\otimes \ell}$)}
     \begin{enumerate}
           \item Project $\ket{\psi}\ket{\phi}^{\otimes \ell}$ onto the symmetric space $\vee^{\ell+1}(\C^{d})$.
           \item If the projection succeeds \emph{accept};        
            else \emph{reject}.
     \end{enumerate}
     \caption{Super Swap Test}\label{proc:super_swap}
\end{algorithm}

The acceptance probability of the super swap test is established next.

\begin{lemma}\label{lem:super_swap}
  The super swap test accepts with probability 
  \[\frac{\ell \cdot \abs{\braket{\psi}{\phi}}^2}{\ell+1} + \frac{1}{\ell+1}.\]
\end{lemma}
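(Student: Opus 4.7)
The plan is to compute the squared norm of the projection $\Pi_{\mathrm{sym}}\ket{\Psi}$ where $\ket{\Psi} = \ket{\psi}\otimes \ket{\phi}^{\otimes \ell}$ and $\Pi_{\mathrm{sym}}$ denotes the orthogonal projector onto $\vee^{\ell+1}(\C^d)$. Using the standard explicit formula
\[
\Pi_{\mathrm{sym}} = \frac{1}{(\ell+1)!} \sum_{\pi \in \Sym_{\ell+1}} P_\pi,
\]
where $P_\pi$ permutes the $\ell+1$ tensor factors according to $\pi$, the acceptance probability equals
\[
\bra{\Psi}\Pi_{\mathrm{sym}}\ket{\Psi} = \frac{1}{(\ell+1)!} \sum_{\pi \in \Sym_{\ell+1}} \bra{\Psi} P_\pi \ket{\Psi}.
\]

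Next, I would evaluate $\bra{\Psi} P_\pi \ket{\Psi}$ by splitting the sum according to whether $\pi$ fixes position $1$ or not. The key observation is that $\ket{\phi}^{\otimes \ell}$ is invariant under any permutation of the last $\ell$ registers, so the value of $\bra{\Psi} P_\pi \ket{\Psi}$ depends only on $\pi(1)$. If $\pi(1) = 1$, then every register sees $\braket{\phi}{\phi} = 1$ or $\braket{\psi}{\psi}=1$, giving $\bra{\Psi} P_\pi \ket{\Psi} = 1$. If $\pi(1) \ne 1$, then register $1$ contributes $\braket{\psi}{\phi}$, register $\pi(1)$ contributes $\braket{\phi}{\psi}$, and all remaining registers contribute $1$, yielding $\bra{\Psi} P_\pi \ket{\Psi} = |\braket{\psi}{\phi}|^2$.

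Counting, the number of $\pi \in \Sym_{\ell+1}$ with $\pi(1) = 1$ is $\ell!$, while the number with $\pi(1) \ne 1$ is $(\ell+1)! - \ell! = \ell \cdot \ell!$. Substituting,
\[
\bra{\Psi}\Pi_{\mathrm{sym}}\ket{\Psi} = \frac{\ell!\cdot 1 + \ell \cdot \ell! \cdot |\braket{\psi}{\phi}|^2}{(\ell+1)!} = \frac{1}{\ell+1} + \frac{\ell}{\ell+1}|\braket{\psi}{\phi}|^2,
\]
which matches the claim. There is no real obstacle here; the only care needed is the bookkeeping of how $\pi$ acts on the distinguished first register versus the symmetric block of $\ell$ copies of $\ket{\phi}$.
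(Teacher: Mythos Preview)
Your proposal is correct and follows essentially the same approach as the paper: write the symmetric projector as the average of permutation operators and split the sum according to whether the permutation fixes the first register. The paper's proof is just a terser version of exactly this computation.
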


\begin{proof}
  Let $\Pi = (1/(\ell+1)!) \sum_{\pi \in \Sym_{\ell+1}} \pi$ be the projector onto $\vee^{\ell+1}(\C^{d})$. Indeed, we have
  \begin{align*}
    \bra{\psi}\bra{\phi}^{\otimes \ell} \Pi \ket{\psi}\ket{\phi}^{\otimes \ell} = \frac{1}{\ell+1} \braket{\psi}{\psi} \braket{\phi}{\phi}^{\ell} + \frac{\ell}{\ell+1} \abs{\braket{\psi}{\phi}}^2 \braket{\phi}{\phi}^{\ell-2}\mcom
  \end{align*}
  concluding the proof.
\end{proof}

At first glance, it may seem inconvenient to assume multiple ($\ell$-many) unentangled copies of $\ket{\phi}$.
However, due to our disentangler channel, we can enforce a distribution over product states $\ket{\phi}^{\otimes \ell}$ by assuming
only bipartite unentanglement.

Next we turn to the product test which checks whether a state is close to a $k$-partite product state~\cite{HM13}.
It has a similar drawback to the usual swap test, namely, it always accepts with probability at least $1/2$
even if the state $\ket{\psi}$ is very far from product. As before, we will arbitrarily improve the soundness of the product test
by having multiple unentangled copies. We call this new test the \emph{super product} test and we describe it in~\cref{proc:super_prod}.

\begin{algorithm}{SuperProduct}{($\ket{\psi},(\ket{\phi_1}\ldots\ket{\phi_k})^{\otimes \ell}$)}
     \begin{enumerate}
           \item Project $\ket{\psi}(\ket{\phi_1}\ldots\ket{\phi_k})^{\otimes \ell}$ onto the symmetric space $\vee^{\ell+1}((\C^{d})^{\otimes k})$.
          \item If the projection succeeds \emph{accept};        
            else \emph{reject}.
     \end{enumerate}
     \caption{Super Product Test}\label{proc:super_prod}
\end{algorithm}

\begin{lemma}
  The super product test accepts with probability
    \[ 
  \frac{\ell}{(\ell+1)} \cdot \abs{\braket{\psi}{\phi_1}\ldots\ket{\phi_k}}^2 + \frac{1}{(\ell+1)}\mper
    \] 
\end{lemma}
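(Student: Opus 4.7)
The plan is to reduce the claim directly to the already-established super swap test lemma (\cref{lem:super_swap}) by a reinterpretation of what an ``individual particle'' is. In both tests the object being projected is an $(\ell+1)$-fold tensor where the first factor is a single copy of $\ket{\psi}$ and the remaining $\ell$ factors are identical copies of a fixed state. For the super swap test these factors live in $\C^d$; for the super product test they live in the larger Hilbert space $(\C^d)^{\otimes k} \cong \C^{d^k}$, with the ``repeated'' state being the product vector $\ket{\Phi} \defeq \ket{\phi_1}\otimes\cdots\otimes\ket{\phi_k}$ and with the target state being $\ket{\psi} \in (\C^d)^{\otimes k}$. Crucially, the projector onto $\vee^{\ell+1}((\C^d)^{\otimes k})$ is defined by averaging over the symmetric group $\Sym_{\ell+1}$ acting on the $\ell+1$ tensor blocks, treating each block as a single unit — exactly as in the super swap test but at dimension $d^k$ rather than $d$.

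Concretely, I would write $\Pi = \frac{1}{(\ell+1)!}\sum_{\pi \in \Sym_{\ell+1}} \pi$ for this symmetrizer and compute
\[
\bra{\psi}\bra{\Phi}^{\otimes \ell}\,\Pi\,\ket{\psi}\ket{\Phi}^{\otimes \ell}
\]
by splitting $\Sym_{\ell+1}$ into the $\ell!$ permutations that fix the first block and the $\ell \cdot \ell!$ permutations that do not. The fixing permutations contribute $\braket{\psi}{\psi}\braket{\Phi}{\Phi}^{\ell} = 1$ each, while any permutation that moves the first block exchanges it with exactly one of the $\ell$ ``$\Phi$-blocks'' and contributes $\abs{\braket{\psi}{\Phi}}^2 \braket{\Phi}{\Phi}^{\ell-1} = \abs{\braket{\psi}{\Phi}}^2$. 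Summing and normalizing by $(\ell+1)!$ yields
\[
\frac{1}{\ell+1} + \frac{\ell}{\ell+1}\,\abs{\braket{\psi}{\phi_1}\cdots\ket{\phi_k}}^2,
\]
which is exactly the claimed acceptance probability.

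No step here is an obstacle; the whole proof is really just the observation that symmetrization is agnostic to the internal structure of each block, so the super product test is formally the same as the super swap test on $\C^{d^k}$. In the write-up I would explicitly note this reduction so that the reader sees the proof of \cref{lem:super_swap} applies verbatim, and I would clean up the minor typo in the exponent ($\braket{\phi}{\phi}^{\ell-2}$ should be $\braket{\phi}{\phi}^{\ell-1}$) that already appears in the super swap proof, though it is inconsequential since the factors equal $1$.
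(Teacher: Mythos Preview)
Your proposal is correct and is exactly the approach the paper takes: it views each block $\ket{\phi_1}\cdots\ket{\phi_k}$ as a single state $\ket{\Phi}\in\C^{d^k}$ and invokes \cref{lem:super_swap} verbatim. Your observation about the $\ell-2$ versus $\ell-1$ exponent typo in the super swap proof is also correct (and, as you note, harmless).
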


\begin{proof}
  We view each copy of the state $\ket{\phi_1}\ldots\ket{\phi_k}$ as a single state $\ket{\phi}$
  and apply the super swap test to $\ket{\psi}$ and $\ket{\phi}^{\otimes \ell}$. The acceptance  
  probability of the super product test now follows from~\cref{lem:super_swap}.
\end{proof}

Analogously, it may seem inconvenient to assume multiple ($\ell$-many) unentangled copies of $\ket{\phi_1} \ldots \ket{\phi_k}$.
However, that is not an issue by \cref{cor:separable-states}: We can enforce a distribution over product states $(\ket{\phi_1} \ldots \ket{\phi_k})^{\otimes \ell}$ by assuming only $2$ unentangled states.

\section{Gap Amplification for $\QMA^+(k)$ up to Criticality and Almost-$\QMA(k)=\NEXP$}\label{sec:protocol}

In the previous section, we described a very strong version of swap test and product test, noting that our disentangler channel has a good synergy with the new tests to overcome the drawbacks in their original versions.
In this section, we put the tools in the context of quantum Merlin-Arthur games with unentangled provers,
establishing our main complexity results~\cref{thm:almost_gen_amp,thm:qma3}.

\subsection{Gap Amplification for $\QMA^+(k)$ up to Criticality}

The gap amplification for $\QMA^+(k)$ is much less straightforward than $\QMA(2)$.
Indeed, a full gap amplification would imply $\QMA(2) = \NEXP.$
To give our half gap amplification promised in~\cref{thm:qma3}, we start by showing how to
simulate a $\QMA^+(k)$ protocol $\calP$ given the following kinds of proofs:
\begin{enumerate}
    \item\label{enu:nonnegative-state} one nonnegative-amplitudes proof $\ket{\psi}$;
    \item\label{enu:many-general-states} abundant equal copies of an arbitrary proofs over reals $\ket{\phi}$.
\end{enumerate}   
Note we are relaxing $k$ nonnegative-amplitudes proofs in a $\QMA^+(k)$ protocol with only one nonnegative-amplitudes
proof and general-amplitudes states.
The motivation is, roughly, to remove as many nonnegative-amplitudes proofs in a $\QMA^+(k)$ protocol as possible, so we get
closer to a general $\QMA(k)$ protocol. 

We will check whether  $\ket{\phi}^{\otimes k}$
is close to $\ket{\psi}$. Either they are close and then we can use the many copies of $\ket{\phi}^{\otimes k}$
to simulate $\calP$, or else they are far apart and an application of the super product test can detect this
condition. A description of this simulation procedure is given in~\cref{proc:sym_protocol}, which we denote
as the symmetric simulator (since it assumes many equal copies of $\ket{\phi}$).

\begin{algorithm}[H]{SymSimulator}{($\calP$, $\ket{\psi} = \sum_i \beta_i \ket{i} \colon \beta_i \ge 0, \ket{\phi}^{\otimes 2 k \ell}$)}
     \begin{itemize}
       \item If $\textup{SuperProduct}(\ket{\psi},(\ket{\phi}^{\otimes k})^{\otimes \ell})$ fails, then \emph{reject}.
       \item For $i=1,\ldots,\ell$ \begin{enumerate}
          \item Run the $\QMA^+(k)$ protocol $\calP$ on a new copy of $\ket{\phi}^{\otimes k}$.
          \item If protocol rejects, then \emph{reject}.
       \end{enumerate}
       \item \emph{Accept}.
     \end{itemize}
     \caption{Symmetric Simulator}\label{proc:sym_protocol}
\end{algorithm}

We now analyze the completeness and soundness of this simulation.

\begin{lemma}\label{lem:sym_simulator}
  Suppose $\calP$ is a $\QMA^+(k)$ protocol with completeness $c$ and soundness $s$.
  Let $p(n)$ be a non-decreasing function such that $p(n) \ge C_0$ for a sufficiently large
  constant $C_0 > 0$. If $\ell \ge 8 p(n)^2 \ln(2)$ and $s \le 1/8p(n)^2$, then 
  SymSimulator has completeness $c^\ell$ and soundness at most $1/2 + 1/p(n)$.
\end{lemma}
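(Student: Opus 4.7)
The plan is to analyze completeness and soundness directly from the procedure, using the explicit acceptance formula for the super product test in~\cref{lem:super_product} together with the independence of the $\ell$ fresh copies of $\ket{\phi}^{\otimes k}$ consumed in the $\ell$ subsequent runs of $\calP$.

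For completeness, I would have the honest prover select $\ket{\phi}$ so that $\ket{\phi}^{\otimes k}$ is an accepting $\QMA^+(k)$ witness of $\calP$ (assuming $\calP$ is in a symmetric form, which we may arrange by a standard symmetrization over its $k$ registers), and submit $\ket{\psi}=\ket{\phi}^{\otimes k}$ together with $\ket{\phi}^{\otimes 2k\ell}$. Then $\ket{\psi}$ has nonnegative amplitudes and $\abs{\braket{\psi}{\phi^{\otimes k}}}^2=1$, so by~\cref{lem:super_product} the super product test passes with certainty. Each of the $\ell$ independent runs of $\calP$ on a fresh copy of $\ket{\phi}^{\otimes k}$ then accepts with probability at least $c$, giving overall acceptance at least $c^\ell$.

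For soundness, fix any adversarial $\ket{\psi}$ (with nonnegative amplitudes) and $\ket{\phi}$, and set $\alpha=\abs{\braket{\psi}{\phi^{\otimes k}}}^2$. By~\cref{lem:super_product}, the super product test accepts with probability $q_S=\frac{1+\ell\alpha}{\ell+1}$; writing $r$ for the probability that $\calP$ accepts $\ket{\phi}^{\otimes k}$, the overall acceptance of SymSimulator is $q_S\cdot r^\ell$ by independence of the remaining copies. By~\cref{fact:trace-innerproduct}, $\TD(\ket{\psi},\ket{\phi}^{\otimes k})=\sqrt{1-\alpha}$, and then~\cref{fact:trace_norm_acc} combined with the $\QMA^+(k)$-soundness of $\calP$ on the nonnegative-amplitude state $\ket{\psi}$ gives $r\le s+\sqrt{1-\alpha}$. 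I would then split into two cases. If $\alpha\le 1/2$, then $q_S\le \tfrac12+\tfrac{1}{\ell+1}\le \tfrac12+\tfrac{1}{p(n)}$ (using $\ell\ge p(n)$) and $r^\ell\le 1$, so $q_S\, r^\ell\le \tfrac12+\tfrac{1}{p(n)}$. If $\alpha>1/2$, then $r\le s+1/\sqrt{2}$ is bounded away from $1$ by a positive constant (say $r\le 3/4$ for $p(n)\ge C_0$), so $\ell\ge 8\,p(n)^2\ln 2$ forces $r^\ell\le 1/p(n)$; combined with $q_S\le 1$ this yields $q_S\, r^\ell\le 1/p(n)\le \tfrac12+\tfrac{1}{p(n)}$. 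Both cases give the claimed soundness.

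The main subtlety is ensuring the soundness bound $s$ of $\calP$ applies when the test state is $\ket{\psi}$ itself: although $\ket{\psi}$ has nonnegative amplitudes, it may be entangled across the $k$ registers, whereas $\QMA^+(k)$-soundness formally guarantees only rejection of unentangled product witnesses. This is handled by the fact that the $\calP$ protocols we intend to simulate (namely, those appearing in~\cref{thm:almost_gen_amp,thm:qma3}) use a separable acceptance measurement, so the bound $s$ extends from unentangled nonnegative-amplitude product witnesses to arbitrary (possibly entangled) nonnegative-amplitude states via a standard convex-decomposition argument.
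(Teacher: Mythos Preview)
Your soundness argument has a genuine gap at the step where you invoke the $\QMA^+(k)$ soundness of $\calP$ on the state $\ket{\psi}$. As you yourself note, $\ket{\psi}$ need not be product across the $k$ registers, and the proposed fix---that a separable acceptance measurement lets you extend the soundness bound from product witnesses to arbitrary entangled nonnegative states ``via a standard convex-decomposition argument''---is incorrect. Separability of the accept operator does \emph{not} imply that its value on entangled inputs is bounded by its supremum over product inputs; the swap test itself is a counterexample (it is separable in the sense of~\cref{fact:swap-test-sep}, yet accepts the maximally entangled state with probability $1$ while accepting orthogonal product pairs with probability $1/2$). A convex-decomposition argument would require $\psi$ to lie in $\SEP$, which is precisely what is not assumed. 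So the bound $r\le s+\sqrt{1-\alpha}$ is unjustified.

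The paper circumvents this entirely by never applying the soundness of $\calP$ to $\ket{\psi}$. Instead, writing $\ket{\phi}=\sum_i\alpha_i\ket{i}$ and $\ket{\phi_+}=\sum_i|\alpha_i|\,\ket{i}$, it observes that $\ket{\phi_+}^{\otimes k}$ is both nonnegative \emph{and} product, so the $\QMA^+(k)$ soundness applies to it directly: $\calP$ accepts $\ket{\phi_+}^{\otimes k}$ with probability at most $s$. The nonnegative state $\ket{\psi}$ is used only as an intermediary: since $\ket{\psi}$ has nonnegative amplitudes, $|\braket{\psi}{\phi_+^{\otimes k}}|^2\ge|\braket{\psi}{\phi^{\otimes k}}|^2=\epsilon$, and then~\cref{fact:overlap-triangle-ineq} yields $|\braket{\phi^{\otimes k}}{\phi_+^{\otimes k}}|^2\ge 2\epsilon-1$, hence $\TD(\phi^{\otimes k},\phi_+^{\otimes k})\le 2\sqrt{\epsilon(1-\epsilon)}$. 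This gives $r\le s+2\sqrt{\epsilon(1-\epsilon)}$, and the case split is at $\epsilon=1/2+1/2p(n)$ (not $1/2$), so that $2\sqrt{\epsilon(1-\epsilon)}\le 1-\Theta(1/p(n)^2)$; this is exactly why the hypotheses $\ell\ge 8p(n)^2\ln 2$ and $s\le 1/8p(n)^2$ appear. Your simpler case split at $\alpha=1/2$ with a constant bound $r\le 3/4$ would indeed make those hypotheses overkill---but only if the bound on $r$ were valid, which it is not.
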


\begin{proof}
  In the completeness case, we can assume that the proofs $\ket{\phi}$ have nonnegative amplitudes
  and $\ket{\psi} = \ket{\phi}^{\otimes k}$. Thus, SymSimulator accepts with probability at least $c^\ell$.

  Now, suppose that we are in the soundness case. Set $\epsilon =\abs{\braket{\psi}{\phi^{\otimes k}}}^2$.
  By~\cref{lem:super_product}, the super product test accepts with probability
    \[ 
  \left( \frac{ \epsilon\ell}{\ell+1} + \frac{1}{\ell+1} \right)\mper
    \] 
      Since $\ell \ge 2p(n)$, if $\epsilon < 1/2 + 1/2p(n)$, then the acceptance probability due to the super product test alone is at most $1/2+1/p(n)$
  and we are done. Therefore, from now on, we assume that $\epsilon \ge 1/2 + 1/2p(n)$. 
  
  Suppose $\ket{\phi} = \sum_i \alpha_i \ket{i}$, and let $\ket{\phi_+} = \sum_{i} \abs{\alpha_i} \ket{i}$. Thus, $\ket{\phi_+}$ is a valid nonnegative-amplitudes state.
  Since $\ket{\psi}$ has nonnegative amplitudes by assumption, we should have 
    \begin{equation}\label{eq:phi-plus-corr}
        |\langle \psi \mid \phi_+^{\otimes k} \rangle|^2 \ge |\langle \psi \mid \phi^{\otimes k} \rangle|^2 = \epsilon.
    \end{equation}
  This is because the latter inner product incurs some cancellations due to negative values, which are avoided in the former inner product. 
  (\ref{eq:phi-plus-corr}) together with \cref{fact:overlap-triangle-ineq} implies that  
    \begin{equation*}
        |\langle \phi^{\otimes k}, \phi_+^{\otimes k}\rangle|^2
            \ge 2\epsilon-1.
    \end{equation*}
  Since we are assuming $\epsilon > 1/2$, the trace distance between $\ket\phi ^{\otimes k}$ and $\ket{\phi_+}^{\otimes k}$ can be bounded as below
    \begin{equation}
        D(\phi^{\otimes k}, \phi_+^{\otimes k})\le 2\sqrt{\epsilon(1-\epsilon)}
    \end{equation}
  Note that $\calP$ accepts $|\phi_+^{\otimes k}\rangle$ with probability at most $s$ by the soundness of $\calP$.
  Therefore, each execution of the protocol $\calP$ on $\ket\phi ^{\otimes k}$ accepts with probability, by \cref{fact:trace_norm_acc}, at most 
  \[\min\set{1,2\sqrt{\epsilon(1-\epsilon)} + s}.\]
  The overall soundness of SymSimulator becomes
    \[ 
  \left(\epsilon \frac{\ell}{\ell+1} + \frac{1}{\ell+1} \right) \left( \min\set{1,2\sqrt{\epsilon(1-\epsilon)} + s} \right)^{\ell} .
    \] 
 Now take  $\epsilon \ge 1/2 + 1/2p(n)$, and compute, we have
    \[ 
  2\sqrt{\epsilon(1-\epsilon)} \le 2 \sqrt{\frac{1}{4} - \frac{1}{4 p(n)^2}} \le 1 - \frac{1}{2 p(n)^2} + O\left(\frac{1}{p(n)^4}\right) \le 1 - \frac{1}{4p(n)^2}\mcom
    \] 
  where the last inequality relies on $p(n) \ge C_0$ for a large enough constant $C_0 > 0$.
  Using that $s \le 1/8p(n)^2$ and $\ell  \ge 8 p(n)^2 \ln(2)$, the final acceptance probability is
    \[ 
  \left(2\sqrt{\epsilon(1-\epsilon)} + s \right)^{\ell} \le \left(1 - \frac{1}{8p(n)^2} \right)^\ell \le \frac{1}{2},
    \] 
  concluding the proof.
\end{proof}

To remove  \ref{enu:many-general-states} the symmetric assumption of having multiple identical copies of $\ket{\phi}$ in SymSimulator, we use the PAPO channel $\Lambda$ and the PAPO channel takes
just two unentangled proofs $\ket{\phi'}$ and $\ket{\phi''}$ (of arbitrary amplitudes) as its input. In other words, we now simulate a $\QMA^+(k)$ protocol $\calP$ with:
\begin{description}
    \item \; (i)\, one nonnegative-amplitudes proof $\ket{\psi}$;
    \item (ii')\,  two general states $\ket{\phi'}, \ket{\phi''}$.
\end{description} 
A formal description of the new simulation is given in~\cref{proc:protocol}.

\begin{algorithm}[H]{Simulator}{($\calP$, $\ket{\psi} = \sum_i \beta_i \ket{i} \colon \beta_i \ge 0, \ket{\phi'}, \ket{\phi''}$)}
     \begin{itemize}
       \item Let $\rho$ be the output of our disentangler $\Lambda(\phi'\otimes \phi'')$ (i.e. \cref{thm:channel}).
       \item If $\textup{SymSimulator}(\calP, \ket{\psi},\rho)$ accepts, then \emph{accept}; else \emph{reject}.
     \end{itemize}
     \caption{Protocol Simulator}\label{proc:protocol}
\end{algorithm}

The analysis of~\cref{proc:protocol} is similar to that of \cref{lem:sym_simulator}. Therefore, instead of presenting an analysis of~\cref{proc:protocol} in isolation, we now apply this simulation for a $\QMA^+(k)$ protocol $\calP$ that solves a $\NEXP$-complete problem.  
In particular, we will need the following characterization of $\QMA^+(2)$ from~\cite{JW23} as shown in the following theorem.

\begin{theorem}[\cite{JW23}]\label{thm:qma2_plus_nexp}
  $\QMA^+(2) = \NEXP$.
\end{theorem}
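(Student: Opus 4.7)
The plan is to establish the two inclusions separately. The easy direction $\QMA^+(2) \subseteq \NEXP$ follows from the chain $\QMA^+(2) \subseteq \QMA(2) \subseteq \NEXP$, where the containment $\QMA^+(2) \subseteq \QMA(2)$ is by definition (restricting the prover to nonnegative-amplitude states) and the final inclusion $\QMA(2) \subseteq \NEXP$ is the standard upper bound obtained by guessing a classical description of the (finite-precision) optimal strategy and verifying it in exponential time.

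For the nontrivial direction $\NEXP \subseteq \QMA^+(2)$, I would start from a succinct NEXP-complete constraint satisfaction problem --- for instance, a succinctly described 3-SAT instance on $N = 2^n$ variables encoded by a polynomial-size circuit, or equivalently a NEXP PCP with constant soundness. The idea is to ask the prover for two unentangled states, each being (close to) a ``proof state'' of the form
\[
\ket{\psi_a} \;=\; \frac{1}{\sqrt{|S(a)|}} \sum_{i \in S(a)} \ket{i} \otimes \ket{a_i},
\]
where $a \in \{0,1\}^N$ is a satisfying assignment and $S(a)$ indexes variables (or clauses) in a structured way. Arthur then runs a small battery of tests: (i) a \emph{consistency test} between the two proofs, e.g.\ a swap-type test, to ensure both encode the same assignment; (ii) a \emph{satisfiability test} in which Arthur measures one copy in the computational basis, obtains a random clause, and directly verifies it from the recorded bits; (iii) a \emph{uniformity / structure test} that prevents the prover from concentrating amplitude on an easy subset of indices. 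The crucial use of nonnegative amplitudes is that it lets Arthur legitimately treat $|\psi_a(i)|^2$ as a classical probability distribution over indices, sidestepping adversarial cancellation when performing local measurements; and the two unentangled copies are what allow the consistency check to be implemented by a simple local test rather than requiring exponential-time certification.

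The main obstacle is the soundness analysis. Arthur must rule out arbitrary unentangled nonnegative-amplitude cheating states $\ket{\phi_1}\otimes\ket{\phi_2}$ --- the prover can spread mass nonuniformly, place amplitude on invalid indices, or entangle cleverly within each register. The argument will proceed by (a) invoking a product-test style analysis on the unentangled pair to reduce to the single-prover case on an effectively symmetric state, (b) using the nonnegative-amplitude restriction to replace the prover's state by a classical distribution without decreasing acceptance (nonnegativity is essential here: general amplitudes could help dodge the satisfiability test through interference), and (c) combining the uniformity and satisfiability tests to argue that any distribution passing the tests with probability above the soundness threshold must place significant mass on satisfying assignments, contradicting unsatisfiability. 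Achieving a constant completeness-soundness gap (rather than a $1/\poly(n)$ gap) is the delicate step: it requires picking a NEXP-hard source problem with appropriate PCP-style robustness, and exploiting the product structure together with nonnegativity so that small cheating advantages do not compound across the $2^n$-dimensional search space.
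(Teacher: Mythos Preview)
The paper does not prove this theorem; it is quoted as a result of \cite{JW23} and used as a black box in the proof of \cref{thm:qma3}. There is no proof in the present paper to compare your proposal against.

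That said, your sketch contains a genuine error in the ``easy'' direction. You claim that $\QMA^+(2) \subseteq \QMA(2)$ holds ``by definition (restricting the prover to nonnegative-amplitude states).'' This is backwards: in $\QMA^+(2)$ the \emph{soundness} guarantee is only required to hold against nonnegative-amplitude cheating provers, so a $\QMA^+(2)$ protocol is \emph{not} automatically a sound $\QMA(2)$ protocol --- a general-amplitude prover might break it. Indeed, the introduction of the present paper explicitly observes that $\QMA^+(2) \subseteq \QMA(2)$ is only known when the completeness--soundness gap of the $\QMA^+(2)$ protocol is a sufficiently large constant, and whether it holds in general is precisely the open question motivating the paper. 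The containment $\QMA^+(2) \subseteq \NEXP$ is still true, but you should argue it directly: a $\NEXP$ machine can guess an exponential-size classical description of candidate nonnegative-amplitude proofs and simulate the verifier in exponential time.

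Your outline of the hard direction $\NEXP \subseteq \QMA^+(2)$ is in the right spirit --- succinct PCP source problem, proof states encoding an assignment, swap-type consistency checks between the two unentangled proofs, uniformity/structure tests, and the key use of nonnegativity to treat squared amplitudes as a bona fide distribution and rule out interference-based cheating. This is roughly the architecture of the protocol in \cite{JW23}. At this level of detail, however, it is a plan rather than a proof; the constant-gap soundness analysis against arbitrary nonnegative unentangled states is where essentially all the technical work lies, and your sketch does not yet supply any of it.
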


\cref{proc:protocol} gives rise to a protocol for NEXP that improves the above theorem in two aspects. First, the new protocol uses three unentangled proofs among which only one is required to have nonnegative amplitudes. Second, the completeness and soundness gap of this protocol is about $1/2$. This seemingly mediocre gap is in fact a critical point, which we discuss in the next section.

\TheoQMAThree*

\begin{proof}
  From~\cref{thm:qma2_plus_nexp}, we apply the standard gap amplification by asking for more unentangled proofs to obtain a $\QMA^+(k)$ protocol $\calP$ with completeness $c = 1-\exp(-\poly(n))$
  and soundness $s=\exp(-\poly(n))$, where $k=\poly(n)$. Simulate $\calP$ using~\cref{proc:protocol}.
  By \cref{thm:channel}, $\rho = \Lambda(\phi'\otimes \phi'')$ is $1/\poly(n)$-close to a convex combination of
  product states $\int \ket{\phi} \bra{\phi}^{\otimes 2k\ell} d\mu$ with $\ell = \poly(n)$. Invoking the symmetric simulator, by~\cref{lem:sym_simulator},
  the completeness becomes $c^\ell \ge 1-\exp(-\poly(n))$ and the soundness $1/2+1/\poly(n)$ for a suitable choice of polynomial $\ell=\poly(n)$.
\end{proof}

\subsection{Almost-$\QMA^\R(k) = \NEXP$}
Next, we show how to go from the nonnegative amplitudes assumptions to almost general amplitudes.
Recall that the $\ell_2$-sign bias of a state $\ket{\psi} = \sqrt{a} \ket{\psi_+} + \sqrt{1-a}\ket{\psi_{-}}$,
where $\ket{\psi_+}$ and $\ket{\psi_{-}}$ are the normalized nonnegative and negative amplitudes parts of $\ket{\psi}$,
is defined as $\abs{a-(1-a)}$ (see~\cref{def:ell2_sign_bias}). 

\TheoAlmostGenAmp*

\begin{proof}
  We start with the $\QMA^+(3)$ protocol from~\cref{thm:qma3} with two general proofs $\ket{\phi'},\ket{\phi''}$
  and only one nonnegative proof $\ket{\psi}$. Let $M$ be the verifier measurement.
  In the completeness case, we can assume that $\ket{\psi}$ has nonnegative amplitudes so we proceed
  to analyze the soundness case.

  In the \textup{almost}-$\QMA^\R(3)$ protocol, $\ket{\psi}$ will no-longer be assumed to have nonnegative
  amplitudes. Instead, we write  $\ket{\psi} = \sqrt{a} \ket{\psi_+} + \sqrt{1-a} \ket{\psi_{-}}$,
  where $\ket{\psi_+}$ and $\ket{\psi_{-}}$ are its nonnegative- and negative-amplitudes normalized states.
  Without loss of generality, suppose that $a \ge 1/2$.
  Furthermore, under the $\ell_2$-sign bias assumption, we may assume that 
  \begin{equation}\label{eq:a-bias}
      a \ge 1/2 + \sqrt{100/p(n)}.
  \end{equation}
  Let $\ket{\phi'}$ and $\ket{\phi''}$ be some quantum states (ignoring the $\ell_2$-bias requirement) as to be used in the simulation~\cref{proc:protocol}.
  The combined proofs of the \textup{almost}-$\QMA^\R(3)$ protocol can be expressed as $\ket{\xi} = \sqrt{a} \ket{\xi_0} + \sqrt{1-a} \ket{\xi_1}$, where
  $\ket{\xi_0} = \ket{\phi'} \otimes \ket{\phi''} \otimes \ket{\psi_+}$ and $\ket{\xi_1} = \ket{\phi'} \otimes \ket{\phi''} \otimes \ket{\psi_{-}}$. 
  Denote s the soundness of $\QMA^+(3)$ protocol from~\cref{thm:qma3}. Then we can assume 
  \begin{equation}\label{eq:soundness-qma3}
  s \le 1/2 + 6/p(n).
  \end{equation}
  Calculating the accepting probability of $M$ on $\xi$,
  \begin{align}
    \bra{\xi} M \ket{\xi} 
        &= a \bra{\xi_0} M \ket{\xi_0} + (1-a) \bra{\xi_1} M \ket{\xi_1} \nonumber \\
        &\qquad\qquad +\sqrt{a(1-a)} \bra{\xi_0} M \ket{\xi_1} + 
        \sqrt{a(1-a)} \bra{\xi_1} M \ket{\xi_0}
        \nonumber \\
        &\le s + \sqrt{a(1-a)} \left(\bra{\xi_0} M \ket{\xi_0} + \bra{\xi_1} M \ket{\xi_1}\right) 
        \nonumber\\
        &\le (1+ 2\sqrt{a(1-a)})s. \label{eq:soundness-almost-qma}
  \end{align}
  where the first inequality follows from $M$ being PSD, \ie since $(\bra{\xi_0}-\bra{\xi_1}) M (\ket{\xi_0}-\ket{\xi_1}) \ge 0$
  implies $\bra{\xi_0} M \ket{\xi_0} + \bra{\xi_1} M \ket{\xi_1} \ge \bra{\xi_0} M \ket{\xi_1} + \bra{\xi_1} M \ket{\xi_0}$.
  By~(\ref{eq:a-bias}) and~(\ref{eq:soundness-qma3}), we have
    \[ 
  (1+ 2\sqrt{a(1-a)})s \le \left(2- \frac{8}{p(n)}\right) s \le\left(2- \frac{8}{p(n)}\right) \left(\frac{1}{2} + \frac{1}{p(n)}\right) \le 1 - \frac{2}{p(n)}.
    \]

Note that by a suitable choice of polynomial $p(n)$ and  the initial completeness $c=1-\exp(-\poly(n))$ of the $\QMA^+(3)$ protocol of~\cref{thm:qma3},
we obtain a gap of $\Omega(1/p(n))$. To conclude the proof, we apply standard gap amplification using $k=\poly(p(n))$ proofs in $\textup{almost-QMA}^\R(k)$.
\end{proof}

We emphasize an important observation following from the above analysis: The ``half'' gap amplification in~\cref{thm:qma3} is almost optimal. A larger gap in~\cref{thm:qma3} by an additive term $1/\poly(n)$ (\eg if the soundness was at most $1/2-1/\poly(n)$) would allow us to completely discard the $\ell_2$-sign bias assumption in \cref{thm:almost_gen_amp}, showing $\NEXP=\QMA^{\mathbb{R}}(k)$. This can be easily seen in (\ref{eq:soundness-almost-qma}), when $s<1/2-1/\poly(n)$, the RHS will be at most $1-1/\poly(n).$ It means that $s=1/2\pm 1/\poly(n)$ in~\cref{thm:qma3} is a critical point. In the case that $\QMA(k)^\R\not=\NEXP$, there is a sharp phase transition.

\bibliographystyle{alpha}
\bibliography{macros,references}

\end{document}